\newcommand{\cif}{C^\infty}
\newcommand{\ee}[0]{\varepsilon}
\newcommand{\inv}[0]{{-1}}
\newcommand{\boldify}[1]{\ensuremath{\boldsymbol{#1}}}
\def\ba{\boldify{a}}
\def\be{\boldify{e}}
\def\bj{\boldify{j}}
\def\bm{\boldify{m}}
\def\bq{\boldify{q}}
\def\bt{\boldify{t}}
\def\bw{\boldify{w}}
\def\bx{\boldify{x}}
\def\by{\boldify{y}}
\newcommand{\CC}{\mathbb{C}}
\newcommand{\NN}{\mathbb{N}}
\newcommand{\RR}{\mathbb{R}}
\newcommand{\algebra}[1]{\mathfrak{#1}}
\newcommand{\ag}{\algebra{g}}
\newcommand{\ah}{\algebra{h}}
\newcommand{\csa}{\algebra{h}}
\newcommand{\iso}{\algebra{iso}}
\newcommand{\so}{\algebra{so}}
\newcommand{\idad}[1]{\bigl(\mathds{1}-\Ad(u_{#1})\bigr)}
\newcommand{\tidad}[2]{\tensor{\idad{#1}}{#2}}
\newcommand{\idadi}[1]{\bigl(\mathds{1}-\Ad(u_{#1}^{-1})\bigr)}
\newcommand{\tidadi}[2]{\tensor{\idadi{#1}}{#2}}
\newcommand{\defeq}{\coloneqq}
\newcommand{\isoeq}{\cong}
\newcommand{\weaklyequal}{\approx}
\newcommand{\diffd}{\mathrm{d}}
\newcommand{\tdiff}[2]{\frac{\diffd #1}{\diffd #2}}
\newcommand{\tdiffat}[3]{\left.\tdiff{#1}{#2}\right|_{#3}}
\newcommand{\ThetaFam}[3]{\Theta^{#2,#1}_#3}
\newcommand{\oo}{\otimes}
\DeclareMathOperator{\Ad}{Ad}
\DeclareMathOperator{\diag}{diag}
\DeclareMathOperator{\Hom}{Hom}
\DeclareMathOperator{\ISO}{ISO}
\DeclareMathOperator{\Mat}{Mat}
\DeclareMathOperator{\PSL}{PSL}
\DeclareMathOperator{\SL}{SL}
\DeclareMathOperator{\SO}{SO}
\DeclareMathOperator{\Span}{span}
\DeclareMathOperator{\Tr}{Tr}
\newcommand{\loal}{\ensuremath{\so(2,1)}}
\newcommand{\poal}{\ensuremath{\iso(2,1)}}
\newcommand{\logr}{\ensuremath{\SO_+(2,1)}}
\newcommand{\pogr}{\ensuremath{\ISO(2,1)}}
\newcommand{\pogrforsection}{ISO(2, 1)}
\newcommand{\allpogr}{\ensuremath{\pogr^{n+2g}}}
\newcommand{\alllogr}{\ensuremath{\logr^{n+2g}}}
\newcommand{\restpogr}{\ensuremath{\pogr^{n-2+2g}}}
\newcommand{\restlogr}{\ensuremath{\logr^{n-2+2g}}}
\newcommand{\frbivector}{\ensuremath{B_r^{n,g}}}
\newcommand{\frrestbivector}[1][]{\ensuremath{B_{\ifthenelse{\equal{#1}{}}{r}{#1}}^{n-2,g}}}
\newcommand{\equivalencequotientspace}[1]{#1/\!\!\sim}
\newtheorem{theorem}{Theorem}[section]
\newtheorem*{theorem*}{Theorem}
\newtheorem{lemma}[theorem]{Lemma}
\newtheorem{corollary}[theorem]{Corollary}
\newtheorem{definition}[theorem]{Definition}
\newcommand{\phasespace}{\ensuremath{\mathcal{P}}}
\newcommand{\extphasespace}{\ensuremath{\mathcal{P}_{\mathrm{ext}}}}
\newcommand{\symplextphasespace}{\ensuremath{\mathcal{Q}}}
\newcommand{\csurface}{\ensuremath{\Sigma}}
\newcommand{\oncsurface}{\vert_\csurface}
\newcommand{\definee}[1]{\textbf{#1}}
\newcommand{\allowlinebreakhere}{\linebreak[0]}
\newcommand{\email}[1]{\href{mailto:#1}{\nolinkurl{#1}}}
\newcommand{\ie}{i.e.\ }
\def\mytitle{Gauge fixing and classical dynamical r-matrices in \pogrforsection-Chern-Simons theory}
\def\myauthors{C.~Meusburger and T.~Sch\"onfeld}
\begin{document}

\begin{center}
  {\huge\mytitle}

  \vspace{2em}

  {\large
   C.~Meusburger\footnote{\email{catherine.meusburger@math.uni-erlangen.de}} \qquad\qquad
   T.~Sch\"onfeld\footnote{\email{torsten.schoenfeld@math.uni-erlangen.de}}}

  \vspace{1em}

  Department Mathematik, \\
  Friedrich-Alexander  Universit\"at Erlangen-N\"urnberg \\
  Cauerstra\ss e 11, 91058 Erlangen, Germany

  \vspace{1em}

  March 23, 2012

  \vspace{2em}

  \begin{abstract}
    We apply the Dirac gauge fixing procedure to Chern-Simons theory with gauge
    group $\pogr$ on manifolds $\mathbb R\times S$, where $S$ is a punctured
    oriented surface of general genus. For all gauge fixing conditions that
    satisfy certain structural requirements, this yields an explicit
    description of the Poisson structure on the moduli space of flat
    $\pogr$-connections on $S$
    in terms of classical dynamical $r$-matrices for $\poal$.  We show that the
    Poisson structures and classical dynamical $r$-matrices arising from
    different gauge fixing conditions are related by dynamical $\pogr$-valued
    transformations that generalise the usual gauge transformations of
    classical dynamical $r$-matrices.  By means of these transformations, it is
    possible to classify all Poisson structures and classical dynamical
    $r$-matrices obtained from such gauge fixings. Generically these Poisson
    structures combine classical dynamical $r$-matrices for non-conjugate
    Cartan subalgebras of $\poal$.
  \end{abstract}
\end{center}

\section{Introduction}

Moduli spaces of flat connections on punctured Riemann surfaces and their
quantisation are of interest to both mathematics and physics due to their rich
mathematical structure and their links with a variety of topics from geometry,
algebra and gauge theory.  From the physics perspective, a major motivation for
their study is their role in Chern-Simons theory. Moduli spaces of flat
connections can be viewed as the gauge-invariant or reduced phase spaces of
Chern-Simons theories. Their quantisation is thus related to structures arising
in the quantisation of Chern-Simons theory such as quantum groups, aspects of
knot theory \cite{Witten:1989aa} and topological quantum field
theories. Another important feature of the theory is its relation to
three-dimensional gravity \cite{Achucarro:1986aa, Witten:1988aa,Witten:1989sx}.

The quantisation of moduli spaces of flat $G$-connections and their relation to
quantum Chern-Simons theory are well understood for the case of compact,
semisimple Lie groups $G$.  In this setting, quantisation can be achieved via
many formalisms, and most of these formalisms involve the representation theory
of a quantum group, namely the $q$-deformed universal enveloping algebra
$U_q(\ag)$ at a root of unity.  In the case of non-compact non-semisimple Lie
groups, the quantisation proves more difficult.  Although there are partial
results on the quantisation of these cases via analytic continuation
\cite{Witten:2011aa} and results for specific Lie groups \cite{Witten:1989ip,
  Buffenoir:2002aa, Meusburger:2004aa, Meusburger:2010aa, Meusburger:2010bb},
there is currently no general framework to address this case.

From the viewpoint of Hamiltonian quantisation formalisms, these difficulties
are related to the fact that Chern-Simons theory and the associated moduli
spaces of flat connections can be viewed as constrained Hamiltonian systems. In
the non-compact setting, representation-theoretical complications lead to
difficulties in the implementation of the constraint operators in the quantum
theory.  It therefore seems advisable to also consider other approaches to the
quantisation of moduli spaces of flat connections with non-compact gauge
groups.  This includes in particular ``quantisation after constraint
implementation'' approaches, which proceed by first applying the Dirac gauge
fixing formalism to the classical theory and then quantising the resulting
Poisson structure.  However, besides partial results
for $\SL(2,\CC)$-Chern-Simons theory \cite{Buffenoir:2003aa, Buffenoir:2005aa,
  Noui:2005aa}, this avenue has not been pursued yet.

An independent mathematical motivation for investing gauge fixing procedures
related to moduli spaces of flat connections arises from Poisson geometry. Such
gauge fixing procedures can be interpreted as the Poisson counterpart of
symplectic reduction. Moduli spaces of flat connections played an important
role in many interesting developments in this subject such as Lie-group-valued
moment maps, see \cite{Alekseev:1998aa} and the references therein. Moreover,
the symplectic structure on the moduli space can be characterised in terms of
certain Poisson structures from the theory of Poisson-Lie groups. Gauge fixing
in this context has been shown to give rise to classical dynamical $r$-matrices
in some cases \cite{Feher:2001aa, Feher:2004aa}.

In this article, we undertake a systematic investigation of Dirac gauge fixing
for the moduli space of flat $\pogr$-connections on a Riemann surface $S_{g,n}$
of general genus $g$ and with $n\geq 2$ punctures.  Our choice of the group
$\pogr=\logr\ltimes\mathbb R^3$ is motivated by the fact that it is an example
of a non-compact non-semisimple Lie group and that Chern-Simons theory with
gauge group $\pogr$ is closely related to (2+1)-gravity
\cite{Achucarro:1986aa,Witten:1988aa}.

Our starting point is a description of the moduli space of flat
$\pogr$-connections on $S_{g,n}$ in terms of a Poisson structure on the direct
product $\extphasespace=\allpogr$ due to Alekseev and Malkin
\cite{Alekseev:1995ab} and Fock and Rosly \cite{Fock:1998aa}.  It is shown in
\cite{Alekseev:1995ab, Fock:1998aa} that this Poisson structure is given in
terms of certain Poisson structures related to Poisson-Lie groups and after
reduction induces the canonical Poisson structure on the moduli space.  In the
case of the group $\pogr$, this Poisson structure is given by the direct
product of $n$ copies of the dual Poisson-Lie structure on $\pogr$ and $2g$
copies of the cotangent bundle Poisson structure for $\logr$
\cite{Meusburger:2004aa}:
\begin{equation*}
  \extphasespace=\underbrace{\pogr^*\times\ldots\pogr^*}_{n\;\times}\times \underbrace{T^*(\logr)\times\ldots\times T^*(\logr)}_{2g\;\times}.
\end{equation*}
From the physics perspective, the Poisson manifold $(\extphasespace,
\{\;,\;\})$ can be viewed as a constrained system with a set of six first-class
constraints from which the moduli space and its symplectic structure are
obtained after constraint implementation. This allows one to choose appropriate
gauge fixing conditions and to apply the Dirac gauge fixing procedure
\cite{Dirac:1949aa,Dirac:1950aa} to this Poisson structure. We explicitly
compute the resulting Poisson bracket for a large class of gauge fixing
conditions and investigate the resulting Poisson structures. This yields our
first central result:

\begin{theorem*}
  For suitable gauge fixing conditions, the Dirac gauge fixing procedure
  applied to $(\extphasespace,\{\;,\;\})$ gives rise to a Poisson structure on
  $\RR^2\times \restpogr$ which is determined uniquely by a solution of the
  classical dynamical Yang-Baxter
  equation.
\end{theorem*}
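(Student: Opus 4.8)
The plan is to carry out the Dirac reduction explicitly and then to recognise the resulting bracket as the Poisson structure canonically attached to a classical dynamical $r$-matrix for $\poal$ relative to a Cartan subalgebra $\csa$. First I would fix the data of the reduction. The six first-class constraints are the components, in a basis of $\poal$, of the total moment map on $(\extphasespace,\{\;,\;\})$ whose vanishing expresses that the holonomies compose to the identity and which generates the global $\pogr$ gauge action by simultaneous conjugation. To these I would adjoin gauge fixing conditions of the structural form assumed in the hypothesis, chosen so that their combined effect is to fix two of the factors of $\extphasespace$: one factor is determined completely by the constraint, while the second is pinned to a standard Cartan form up to its $\csa$-valued part. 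The $n-2+2g$ remaining factors then survive as $\restpogr$, and the undetermined Cartan coordinates of the second special factor form a two-dimensional residual which I would identify with $\csa^*\cong\RR^2$. These are precisely the dynamical variables on which the $r$-matrix is to depend.

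The second step is admissibility. Because the $\phi_a$ are first class, $\{\phi_a,\phi_b\}\approx 0$, so non-degeneracy of the full constraint matrix $C=\bigl(\{\Xi_a,\Xi_b\}\bigr)$, with $\Xi=(\phi,\chi)$ collecting constraints and gauge conditions, reduces to non-degeneracy of the mixed block $\bigl(\{\phi_a,\chi_b\}\bigr)$ on the constraint surface. I would compute this block from the explicit dual Poisson--Lie and cotangent-bundle brackets of Alekseev--Malkin and Fock--Rosly and check that the structural requirements on the $\chi_b$ force it to be invertible; the residual Cartan directions must be handled with care here, since it is exactly their presence that leaves the two-dimensional base unfixed. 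Invertibility of $C$ makes the Dirac bracket $\{f,g\}_D=\{f,g\}-\{f,\Xi_a\}(C^{-1})^{ab}\{\Xi_b,g\}$ well defined, and, by the general theory of Dirac reduction, a genuine Poisson bracket on $\RR^2\times\restpogr$ (in particular it satisfies the Jacobi identity).

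The heart of the argument, and what I expect to be the main obstacle, is the computation and repackaging of $\{\;,\;\}_D$ on the coordinate functions of $\RR^2\times\restpogr$. I would work throughout in tensorial notation, writing all ambient brackets through the constant classical $r$-matrix of the Fock--Rosly structure and operators of the form $\idad{i}$ built from the adjoint action. After inverting $C$ and performing the contraction, the adjoint dependence carried by the two gauge-fixed factors must reorganise so that the surviving structure functions depend only on the $\csa^*=\RR^2$ variables; collecting them into a single map $r\colon\RR^2\to\poal\oo\poal$ and reading off its components is the concrete goal. Two features make this delicate: the inversion of $C$ is unwieldy, and $\poal$ is non-semisimple, so one must work consistently with its non-degenerate invariant pairing (rotations against translations) rather than a Killing form, while tracking which contributions are $\csa^*$-dependent and which are constant.

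Finally, to extract the classical dynamical Yang--Baxter equation I would feed the Jacobi identity for $\{\;,\;\}_D$, already guaranteed by the second step, into the bracket written in terms of $r$. The three cyclic triple brackets of coordinate functions produce exactly $[r_{12},r_{13}]+[r_{12},r_{23}]+[r_{13},r_{23}]$, while the derivatives of $r$ along the $\RR^2=\csa^*$ directions produce the dynamical term, so that Jacobi becomes precisely the CDYBE for $\poal$ relative to $\csa$. Uniqueness is then immediate: since every Dirac bracket of coordinate functions is expressed through this one $r$, the Poisson structure on $\RR^2\times\restpogr$ is completely and uniquely encoded by the single solution $r$ of the CDYBE.
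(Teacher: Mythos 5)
Your overall outline---compute the Dirac bracket explicitly, recognise it as a Fock--Rosly-type bracket in which the constant $r$-matrix is replaced by a map $r\colon\RR^2\to\poal\oo\poal$, and tie the Jacobi identity to a dynamical Yang--Baxter equation---is indeed the paper's strategy (Theorem \ref{thm:generic-dirac-bracket} plus Theorem \ref{thm:jacobi-for-extended-dirac}). However, two of your steps contain genuine gaps. The first is your reduction set-up: you arrange the gauge fixing so that the second special holonomy is ``pinned to a standard Cartan form'', identify the residual $\RR^2$ with $\csa^*$ for a \emph{fixed} Cartan subalgebra $\csa\subset\poal$, and expect a classical dynamical $r$-matrix for $(\poal,\csa)$ in the sense of Definition \ref{def:dyn-r-matrix}. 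This is precisely what admissible gauge fixings do \emph{not} produce. The residual variables are the conjugation-invariant functions $\psi,\alpha$ of the product $M_2\cdot M_1$ in \eqref{eq:psi-alpha-general}, and the abelian subalgebra attached to the solution, $\ah(\psi,\alpha)=\Span\{q_\psi^aP_a,\,q_\alpha^aJ_a+q_\theta^aP_a\}$, varies with $\psi$: requirement 1 of Section \ref{subsec:dirac-constraints} forces (for elliptic conjugacy classes) the product $u_{M_2}u_{M_1}$ to range over elliptic, parabolic and hyperbolic elements, so $\bq_\psi^2$ changes sign and $\ah(\psi,\alpha)$ switches between two non-conjugate Cartan subalgebras, passing through a lightlike subalgebra that is not Cartan at all. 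Consequently $r$ is not $\ah$-invariant and is not an Etingof--Varchenko dynamical $r$-matrix; the paper must generalise the CDYBE to variable coefficients $x_1=q_\psi^aP_a$, $x_2=q_\alpha^aJ_a+q_\theta^aP_a$, and only locally, where $\bq_\psi^2\neq0$, can one reach your picture via dynamical Poincar\'e transformations (Theorem \ref{thm:dcybe-standard-transformation}, which is the paper's \emph{second} theorem, not this one). The last lemma of Section \ref{sec:gaugefixing} also rules out rescuing your plan with a fixed lightlike subalgebra.

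The second gap is that your extraction of the CDYBE from the Jacobi identity is circular exactly where the work lies. Dirac's theorem guarantees the Jacobi identity only on the constraint surface $\csurface=C^{-1}(0)$, which is identified with a \emph{proper} subset of $\RR^2\times\restpogr$: on $\csurface$ the defining relation of $\pi_1(S_{g,n})$ determines $M_2\cdot M_1$, and hence $(\psi,\alpha)$, as a function of $(M_3,\dots,B_g)$. The theorem you are proving asserts a Poisson structure on all of $\RR^2\times\restpogr$; the bracket there is the extension of the gauge-fixed bracket by the formulas of Theorem \ref{thm:generic-dirac-bracket} (Definition \ref{def:extended-dirac-bracket}), and its Jacobi identity is \emph{not} automatic. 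Theorem \ref{thm:jacobi-for-extended-dirac} shows that Jacobi on the whole space is equivalent to the CDYBE \emph{together with} the additional conditions \eqref{eq:q-relations}; the latter come from the mixed brackets involving $\psi$, $\alpha$ and functions on $\restpogr$ and have no counterpart in your argument, while the CDYBE alone only accounts for triples of functions on $\restpogr$. So ``Jacobi is guaranteed, hence the CDYBE follows'' both assumes what must be proved (Jacobi off $\csurface$) and, even where Jacobi holds, yields strictly more than the CDYBE.
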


In particular, we find that this Poisson structure on $\RR^2\times\restpogr$ is
given by a formula directly analogous to the original Poisson structure on
$\extphasespace$. The only difference is that the classical $r$-matrix in the
original definition is replaced by a solution of the classical dynamical
Yang-Baxter equation for $\poal$ whose two dynamical variables parametrise
$\RR^2\subset \RR^2\times\restpogr$.

We then investigate the relation between the Poisson structures and solutions
of the classical dynamical Yang-Baxter equation that result from different
choices of gauge fixing conditions.  This leads to our second central result:
\begin{theorem*}
  All solutions of the classical dynamical Yang-Baxter equation obtained from
  gauge fixing are related by dynamical $\pogr$-valued transformations that
  generalise the gauge transformations of classical dynamical $r$-matrices from
  \cite{Etingof:1998aa}. All such solutions are locally equivalent to one of
  two standard solutions corresponding to Cartan subalgebras of $\poal$.
\end{theorem*}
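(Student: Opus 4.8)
The plan is to prove the two assertions in turn, exploiting the fact that every solution of the classical dynamical Yang--Baxter equation furnished by the first theorem above arises as the Dirac bracket of one and the same Poisson manifold $(\extphasespace,\{\;,\;\})$ for a particular choice of gauge slice. The first step is therefore to fix two admissible gauge fixing conditions and to compare the two constraint surfaces inside $\extphasespace$. Since both are transversal sections of the same orbits of the residual gauge group, there is a unique residual gauge transformation carrying one section to the other pointwise; I would construct it explicitly as a map $\gamma\colon\RR^2\to\pogr$ depending on the two dynamical variables that parametrise the $\RR^2$-factor, and verify by direct computation that conjugation by $\gamma$ intertwines the two Dirac brackets. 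Because, by the first theorem, each Dirac bracket is entirely encoded in the associated classical dynamical $r$-matrix, this intertwiner descends to an explicit transformation law relating the two $r$-matrices.

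The second step is to recast this transformation law as a dynamical $\pogr$-valued transformation and to compare it with the gauge transformations of \cite{Etingof:1998aa}. I would compute the transformed $r$-matrix in closed form, obtaining an expression of the schematic type $r'=(\Ad(\gamma)\oo\Ad(\gamma))\,r+(\text{derivative terms in }\gamma)$, in which the inhomogeneous contributions involve the derivatives of $\gamma$ along the dynamical directions. Restricting $\gamma$ to take values in the relevant Cartan subgroup and to depend only on the Cartan-valued dynamical variable should reproduce exactly the gauge transformations of classical dynamical $r$-matrices in \cite{Etingof:1998aa}; the general case is a genuine extension, since $\gamma$ is allowed to take values in all of $\pogr$.

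For the classification I would first identify the $\RR^2$ of dynamical variables with the dual $\csa^*$ of a two-dimensional Cartan subalgebra $\csa$ of $\poal$: an admissible gauge fixing amounts precisely to conjugating the holonomies of the two eliminated punctures into the corresponding Cartan subgroup, so that their surviving Cartan coordinates supply the dynamical variables. The residual freedom in this choice is a dynamical $\pogr$-valued transformation, whence the equivalence class of the resulting $r$-matrix depends only on the conjugacy class of $\csa$. I would then invoke the classification of Cartan subalgebras of $\poal=\loal\ltimes\RR^3$: since $\loal\cong\algebra{sl}(2,\RR)$ has rank one with exactly two conjugacy classes of Cartan subalgebras (elliptic and hyperbolic), there are exactly two relevant classes of two-dimensional Cartan subalgebras of $\poal$. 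Exhibiting a standard dynamical $r$-matrix for each and checking that each solves the classical dynamical Yang--Baxter equation yields the existence part, while the conjugacy class of $\csa$ is the invariant separating the two and showing them inequivalent.

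The main obstacle I expect is the classification step, and specifically the word \emph{locally}. The transformation $\gamma$ need not be globally defined on $\RR^2$, since conjugating a holonomy into a fixed Cartan subgroup can break down on the loci where its type degenerates (for instance where an elliptic element becomes parabolic). One therefore has to control these degeneration loci and argue that, away from them, the transformation bringing a solution to one of the two standard forms exists and is smooth. A secondary difficulty is verifying that conjugation by $\gamma$ is genuinely a Poisson map for the Dirac bracket, and not merely for the unreduced bracket on $\extphasespace$, which requires checking compatibility with the second-class constraints; this is exactly where the structural requirements on the admissible gauge fixing conditions enter decisively.
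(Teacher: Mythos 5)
Your first two steps track the paper closely: the dynamical transformations are introduced in Lemma \ref{lem:poinc-trafo} with exactly the transformation law you anticipate, $r^p=\bigl(\Ad(p)\oo\Ad(p)\bigr)\bigl[r+\bar\eta^p-\bar\eta^p_{21}\bigr]$, they are shown by direct computation on $\RR^2\times\restpogr$ (rather than upstairs in $\extphasespace$) to intertwine the brackets, and restricting to the Cartan centraliser recovers the gauge transformations of \cite{Etingof:1998aa}. The genuine gap is in your classification step. You propose to attach a \emph{fixed} Cartan subalgebra $\csa\subset\poal$ to each admissible gauge fixing (``conjugating the holonomies of the two eliminated punctures into the corresponding Cartan subgroup'') and to classify solutions by the conjugacy class of $\csa$. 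This is precisely what the paper shows cannot be done: for generic admissible gauge fixings --- including the explicit one from \cite{Meusburger:2011aa}, see Lemma \ref{lem:dcybe-for-special-r} and the discussion following it --- the subalgebra $\ah(\psi,\alpha)=\Span\{q_\psi^aP_a,\,q_\alpha^aJ_a+q_\theta^aP_a\}$ varies with $\psi$ and its conjugacy type \emph{changes}: it is conjugate to $\Span\{J_0,P_0\}$ where $\bq_\psi^2>0$, to $\Span\{J_1,P_1\}$ where $\bq_\psi^2<0$, and to a lightlike, non-Cartan subalgebra where $\bq_\psi^2=0$. These transitions cannot be avoided by a better gauge choice: requirement 1 of Section \ref{subsec:dirac-constraints} forces all three cases to occur when the gauge-fixed holonomies are elliptic, and the paper's final lemma of Section \ref{sec:gaugefixing} shows there are no solutions with constant lightlike $\bq_\psi,\bq_\alpha$. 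Hence ``the conjugacy class of $\csa$'' is not an invariant of a gauge fixing but only a locally defined, $\psi$-dependent label; your ``main obstacle'' paragraph touches this point but misreads it as a smoothness problem for $\gamma$, when in fact it invalidates the invariant-based global framing.

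Second, even locally on a region of fixed type, your argument stops short: conjugating $\ah(\psi,\alpha)$ pointwise into $\ah_a$ or $\ah_b$ and exhibiting one standard solution per class shows the two standard solutions are inequivalent, but it does \emph{not} show that an arbitrary gauge-fixed solution becomes the standard $r_a$ or $r_b$ of Lemma \ref{lem:dcybe-solutions} --- after the conjugation, the functional dependence of $r$ on $(\psi,\alpha)$ is still arbitrary. The paper closes this in two steps that have no counterpart in your proposal: Lemma \ref{lem:standard-form} uses only the supplementary conditions \eqref{eq:q-relations} to force all data into $\Span\{\be_j\}$; then in Theorem \ref{thm:dcybe-standard-transformation} the CDYBE, written in this normal form, collapses to a system of ODEs for scalar coefficient functions whose solutions are the $\tan$/$\tanh$ profiles, and --- crucially --- a diffeomorphism $(\psi,\alpha)\mapsto(y^1(\psi),y^2(\psi,\alpha))$ of the dynamical variables is required \emph{in addition to} the dynamical Poincar\'e transformation, to absorb the freedom in the definition \eqref{eq:psi-alpha-general} of $\psi$ and $\alpha$. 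Without the ODE step and this reparametrisation, your scheme yields at most ``every solution is locally adapted to one of two Cartan subalgebras,'' which is strictly weaker than local equivalence to one of exactly two standard solutions.
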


The second statement of the theorem refers to an interesting feature of our
solutions that does not appear to have been observed in the literature yet. The
solutions of the classical dynamical Yang-Baxter equation that arise from
generic gauge fixing conditions are not associated with a fixed Cartan
subalgebra of $\poal$ but combine classical dynamical $r$-matrices for two
non-conjugate Cartan subalgebras of
$\poal$.

Our paper is structured as follows. Section \ref{sec:notation} contains the
basic definitions, notation and conventions used in the remainder of the
paper. Section \ref{sec:physics} summarises the physics background and
motivation of this work.  It can be skipped without loss by the reader
unfamiliar with this or interested mainly in the mathematical results.  It
contains a brief discussion of Chern-Simons theory on manifolds of topology
$\RR \times S_{g,n}$ and of the moduli space of flat connections on $S_{g,n}$
as a reduced or gauge-invariant phase space of Chern-Simons theory. It then
explains the Dirac gauge fixing formalism and its relation to symplectic
reduction and discusses the constraints and gauge fixing conditions imposed to
obtain the moduli space.

Section \ref{sec:gaugefixing} contains the first central result of this
article, namely the explicit description of the Poisson structure resulting
from Dirac gauge fixing for a general set of gauge fixing conditions.  We show
that the resulting Poisson structures are associated with solutions of the
classical dynamical Yang-Baxter equation and discuss examples arising from
specific choices of gauge fixing conditions as well as two simple standard
solutions associated with Cartan subalgebras of $\poal$.

In Section \ref{sec:dynamic-poincare-trafos} we introduce dynamical
$\pogr$-transformations which can be viewed as transitions between different
gauge fixing conditions. We determine the associated transformations of the
Dirac bracket and show how they can be interpreted as transitions between
different solutions of the classical dynamical Yang-Baxter equation. In that
sense, the dynamical $\pogr$-transformations generalise the gauge
transformations of classical dynamical $r$-matrices in
\cite{Etingof:1998aa}. We then apply these dynamical transformations to give a
complete classification of the classical dynamical $r$-matrices and Poisson
structures obtained from gauge fixing.  Section \ref{sec:outlook} contains the
outlook and our conclusions.

\section{Notations and conventions}
\label{sec:notation}

We denote by $\be_0=(1,0,0)$, $\be_1=(0,1,0)$, $\be_2=(0,0,1)$ the standard
basis of $\RR^3$ and use Einstein's summation convention throughout this
paper. Unless stated otherwise, all indices run from 0 to 2 and are raised and
lowered with the three-dimensional Minkowski metric $\eta=\diag(1,-1,-1)$. By
$\ee_{abc}$ we denote the totally antisymmetric tensor in three dimensions with
the convention $\ee_{012}=1$.  For vectors $\bx,\by\in\RR^3$, we use the
notation $\eta(\bx,\by)=\bx\cdot\by=\eta_{ab}x^ay^b$ and $\bx^2=\bx\cdot\bx$,
and we write $\bx\wedge\by$ for the vector with components
$(\bx\wedge\by)^a=\ee^{abc}x_by_c$. Note that this is a Lorentzian version of
the wedge product which does not coincide with the standard one.

We denote by $\logr \isoeq \PSL(2,\RR)$ the proper orthochronous Lorentz group
in three dimensions and by $\loal \isoeq \algebra{sl}(2,\RR)$ its Lie algebra.
In the following, we fix a set of generators $\{J_a\}_{a = 0, 1, 2}$ of $\loal$
such that the Lie bracket takes the form $[J_a, J_b] = \tensor{\ee}{_a_b^c}
J_c$.  As the representation of $\PSL(2,\RR)$ by $\logr$ matrices coincides
with its adjoint representation, we denote both representations by $\Ad$ in the
following:
\begin{equation*}
  g\cdot J_a\cdot g^{-1}=\tensor{\Ad(g)}{^b_a} J_b \qquad \forall g\in \logr.
\end{equation*}
The Poincaré group in three dimensions is the semidirect product $\pogr \equiv
\logr \ltimes \RR^3$ of the proper orthochronous Lorentz group $\logr$ and the
translation group $\RR^3$.  We parametrise elements of $\pogr$ as
\begin{equation*}
  (u,\ba)=(u,0)\cdot (\mathds{1},-\bj)=(u, -\Ad(u)\bj)\quad \text{with}\; u\in \logr,
    \bj, \ba\in\RR^3.
\end{equation*}
The group multiplication law then takes the form
\begin{equation*}
  (u_1,\ba_1)\cdot(u_2,\ba_2)
    =\bigl(u_1u_2,\ba_1+\Ad(u_1)\ba_2\bigr).
\end{equation*}
A basis of the Lie algebra $\poal$ is given by the basis $\{J_a\}_{a=0,1,2}$ of
$\loal$ together with a basis $\{P_a\}_{a=0,1,2}$ of the abelian Lie algebra
$\RR^3$.  In this basis, the Lie bracket takes the form
\begin{equation}\label{eq:poincare-algebra-bracket}
  [J_a,J_b]=\tensor{\ee}{_a_b^c} J_c,\qquad
  [J_a,P_b]=\tensor{\ee}{_a_b^c} P_c,\qquad
  [P_a,P_b]=0,
\end{equation}
and a non-degenerate $\Ad$-invariant symmetric bilinear form on $\poal$ is
given by
\begin{equation}\label{eq:pairing}
  \langle J_a,J_b\rangle=\langle P_a,P_b\rangle=0, \qquad
  \langle J_a,P_b\rangle=\eta_{ab}.
\end{equation}
All Cartan subalgebras of $\poal$ are abelian and can be parametrised in terms
of two vectors $\bx,\by\in\RR^3$ with $\bx^2\in\{1,-1\}$ and $\bx\cdot \by=0$
as
\begin{equation}\label{eq:csa}
  \csa = \Span\{x^aP_a, \, x^aJ_a + y^aP_a\}
\end{equation}

If the vector $\bx$ is timelike ($\bx^2=1$), then the associated Cartan
subalgebra $\csa$ is conjugate under the adjoint action of $\pogr$ to
$\Span\{P_0, J_0\}$.  If $\bx$ is spacelike ($\bx^2=-1$), then $\csa$ is
conjugate to $\Span\{P_1, J_1\}$. Note that the set \eqref{eq:csa} with a
lightlike vector $\bx\in\RR^3$ ($\bx^2=0$) does not form a Cartan subalgebra of
$\poal$ because it is not self-normalising.

In the following, we will also need the right- and left-invariant vector fields
on $\pogr$ associated with a basis $\{T_a\}_{a=0,\dots,5}$ of $\poal$. They are
given by
\begin{equation}\label{eq:vector-fields}
  L_a f(h)=\tdiffat{}{t}{t=0} f(e^{-tT_a} \cdot h),\quad
  R_a f(h)=\tdiffat{}{t}{t=0} f(h\cdot e^{tT_a})\qquad
  \forall f\in\cif(\pogr),
\end{equation}
where $e: \poal\to \pogr$, $\bx\mapsto e^{\bx}$ is the exponential map for
$\pogr$.  For the basis $\{J_a,P_a\}_{a=0,1,2}$, we denote by $P_a^{L}, P_a^R$,
respectively, the right- and left-invariant vector fields associated with the
translations and by $J_a^L, J_a^R$ the ones associated with the Lorentz
transformations. The former act trivially on functions that depend only on the
Lorentzian component of $\pogr$. For the latter, the action on such functions
coincides with the action of the right- and left-invariant vector fields of the
Lorentz group. The action of these vector fields on the coordinate functions
$j^a: \pogr\to\RR$, $(u,-\Ad(u)\bq)\mapsto q^a$ is given by
\begin{equation}\label{eq:action-of-vector-fields}
  \left.
  \begin{aligned}
    &P^L_a j^b (u,-\Ad(u)\bj)=\tensor{\Ad(u)}{_a^b}, & \quad
    &P^R_a j^b(u,-\Ad(u)\bj)=-\tensor{\delta}{_a^b}, \\
    &J^L_a j^b(u,-\Ad(u)\bj)=0, &
    &J^R_aj^b(u,-\Ad(u)\bj)=-\tensor{\ee}{^a^b_c}\,j^c.
  \end{aligned}
  \qquad\right\}
\end{equation}

\section{Physics background and motivation}
\label{sec:physics}

\subsection{Chern-Simons theory with gauge group \pogrforsection{} and the moduli space of flat \pogrforsection-connections}

\newcommand{\surf}{\ensuremath{S_{g,n}}}
\newcommand{\algorbit}{\mathcal{D}_i}
\newcommand{\grporbit}[1][i]{\mathcal{C}_#1}

In the following, we consider Chern-Simons theory with gauge group $\pogr$ on
manifolds of topology $M \approx \RR \times \surf$, where $\surf$ is an
oriented surface of genus $g$ with $n$ punctures.  In the absence of punctures,
the solutions of the theory are flat connections $A$ on an $\pogr$-principal
bundle over $M$.  The punctures of $\surf$ are incorporated
\cite{deSousaGerbert:1990aa} into the theory by assigning the coadjoint orbit
of an element $\algorbit \in \poal$ to the $i$-th puncture and coupling it
minimally to the connection $A$.  Parametrising the coadjoint orbit of
$\algorbit$ in terms of group-valued functions $h_i: \RR \to \pogr$, one
obtains the following expression for the Chern-Simons action:
\begin{equation}\label{eq:cs-action}
  S(A)=\int_M \langle A \wedge \diffd A + \tfrac23 A \wedge A \wedge A \rangle
         - 2 \sum_{i=1}^n \int_\RR \langle \algorbit, h_i^\inv A\big|_{l_i} h_i + h_i^\inv \diffd h_i \rangle \diffd t,
\end{equation}
where $\langle \;\;\rangle$ denotes the non-degenerate $\Ad$-invariant
symmetric bilinear form \eqref{eq:pairing} on $\poal$ and $l_i: \RR \to M$,
$i=1,\dots,n$, are the curves defined by the punctures.  Up to a topological
term, the Chern-Simons action is invariant under gauge transformations $\gamma
\in \cif(M, \pogr)$ that are constant along $l_i$: $A \mapsto \gamma^\inv A
\gamma + \gamma^{-1} \diffd\gamma$, $h_i \mapsto \gamma(l_i)^\inv h_i$.

The connections that extremise the action \eqref{eq:cs-action} are those that
are flat everywhere on $M$ except at $l_i$ ($i=1,\dots,n$), where their
curvature $F \equiv \diffd A + A \wedge A$ develops $\delta$-singularities.
From the Hamiltonian formulation of the theory one then obtains that the
gauge-invariant phase space of the theory is the moduli space $\phasespace$ of
flat $\pogr$-connections on $\surf$ modulo gauge transformations
\cite{Witten:1988aa,deSousaGerbert:1990aa}.

A convenient parametrisation of the moduli space is given by group
homomorphisms $h:\pi_1(\surf) \to \pogr$ that map the homotopy equivalence
class $m_i$ of a loop around the $i$-th puncture to the associated conjugacy
class
\begin{equation}\label{eq:particle-conjugacy-classes}
  \grporbit=\{h\cdot\exp(\algorbit)\cdot h^\inv \mid h\in \pogr\}.
\end{equation}
Two such group homomorphisms describe gauge-equivalent connections if and only
if they are related by conjugation with an element of $\pogr$. This implies
that the moduli space of flat $\pogr$-connections on $\surf$ is given by
\begin{align}\label{eq:phase-space}
  \phasespace
    &=\Hom_{\grporbit[1], \dots, \grporbit[n]}\big(\pi_1(\surf), \pogr\big)/\pogr\\
    &=\{h \in \Hom\bigl(\pi_1(\surf), \pogr\bigr) \mid h(m_i)\in\grporbit[i]\}/\pogr.\nonumber
\end{align}

\begin{figure}
  \centering
  \includegraphics{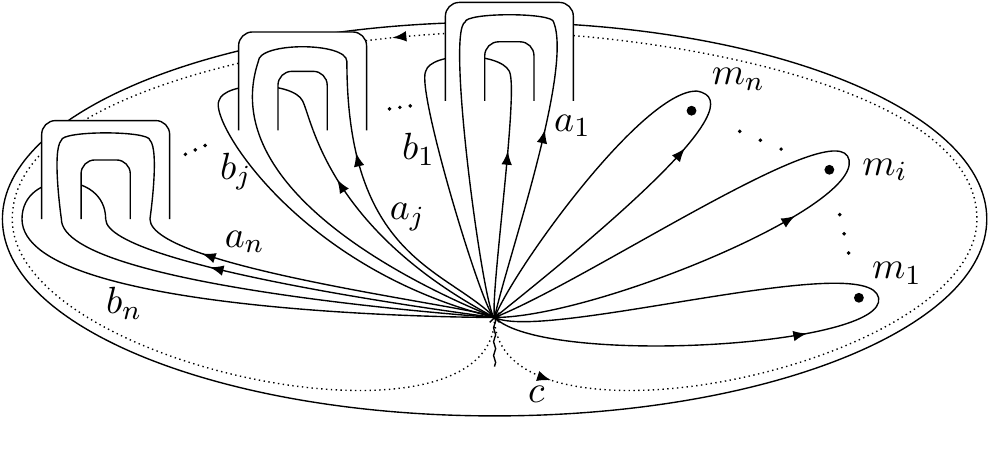}
  \caption{Generators of the fundamental group for an $n$-punctured genus $g$
    surface $\surf$.  The chosen generators of the fundamental group $\pi_1(\surf)$
    are the homotopy equivalence classes of the curves $m_1, \dots, m_n, a_1,
    b_1, \dots, a_n, b_n$.
    The short wavy line indicates the cilium that defines a linear ordering of
    the incident edges at the basepoint.}
  \label{fig:fundamental-group}
\end{figure}

The fundamental group $\pi_1(\surf)$ of an oriented genus $g$ surface $\surf$
with $n$ punctures is generated by the homotopy equivalence classes of a loop
$m_i$ ($i=1,\dots,n$) around each puncture and the $a$- and $b$-cycles
$a_j,b_j$ ($j=1,\dots,g$) of each handle as shown in Figure
\ref{fig:fundamental-group}. It has a single defining relation, which states
that the curve $c$ in Figure \ref{fig:fundamental-group} is contractible:
\begin{equation*}
  \pi_1(\surf)=\langle m_1,\dots,m_n, a_1,b_1,\dots,a_g,b_g \mid b_g a_g^\inv b_g^\inv a_g\cdots b_1 a_1^\inv b_1^\inv a_1 m_n\cdots m_1=1 \rangle.
\end{equation*}
By characterising the group homomorphisms in \eqref{eq:phase-space} in terms of
the images of the generators of $\pi_1(\surf)$, we can thus identify the moduli
space of flat connections with the set
\begin{multline}\label{eq:phase-space-with-holonomies}
  \phasespace = \{(M_1,\dots,M_n,A_1,B_1,\dots,A_g,B_g)\in\allpogr \mid \\
                  M_i\in\grporbit, \  [B_g,A_g^\inv] \cdot [B_1,A_1^\inv] \cdot M_n\cdots M_1=1\}/\pogr,
\end{multline}
where $[B_g, A_g^\inv]=B_g\cdot A_g^\inv\cdot B_g^\inv\cdot A_g$ denotes the
group commutator and the quotient is taken with respect to the diagonal action
of $\pogr$ on $\allpogr$.  In the gauge-theoretical description, the group
elements $M_1, \dots, B_g\in \pogr$ correspond to the path-ordered exponentials
of the gauge field $A$ along the closed curves $m_1, \dots, b_g$ displayed in
Figure \ref{fig:fundamental-group}.  In the following, we will sometimes refer
to these group elements as holonomies.

\subsection{Symplectic structure of the moduli space of flat connections}

The moduli space of flat $\pogr$-connections carries a canonical symplectic
structure \cite{Atiyah:1983aa} that is obtained via symplectic reduction from
the canonical symplectic structure on the space of connections on $\surf$.  A
convenient and explicit description of this symplectic structure is given in
the works of Alekseev and Malkin and Fock and Rosly \cite{Fock:1998aa,
  Alekseev:1995ab}. They describe the canonical symplectic structure on the
moduli space $\phasespace$ in terms of a (non-canonical) Poisson structure on
an enlarged ambient space $\extphasespace$. Via symplectic reduction, this
Poisson structure then induces the canonical symplectic structure on
$\phasespace$.

In the following, we will work with a specific form of the Poisson structure in
\cite{Fock:1998aa} which is associated with a choice of an ordered set of
generators of the fundamental group $\pi_1(\surf)$. It plays an important role
as a starting point for the combinatorial quantisation of the theory
\cite{Alekseev:1995aa,Alekseev:1996aa,Alekseev:1996ab,Buffenoir:1995aa}.  In
this description, the ambient space $\extphasespace$ is given by $n+2g$ copies
of the Poincar\'e group, $\extphasespace=\allpogr$, each corresponding to the
holonomy along a generator of the fundamental group $\pi_1(\surf)$.

Explicit expressions for this Poisson structure are given in Definition
\ref{def:fr} in the next section. Here, we only discuss its most important
structural features.  Firstly, the definition of the bracket requires a
classical $r$-matrix for the Lie algebra $\poal$, \ie an element
$r\in\poal\oo\poal$ that is a solution of the classical Yang-Baxter equation
\begin{equation*}
  [[r,r]] \defeq [r_{12}, r_{13}] + [r_{12}, r_{23}] + [r_{13}, r_{23}] = 0,
\end{equation*}
where $r_{12}=r^{\alpha\beta} T_\alpha\otimes T_\beta\otimes 1$,
$r_{13}=r^{\alpha\beta} T_\alpha\otimes 1\otimes T_\beta$,
$r_{23}=r^{\alpha\beta} 1\otimes T_\alpha\otimes T_\beta$.  This property is
needed to ensure that the bracket satisfies the Jacobi identity.  Moreover, it
is shown in \cite{Fock:1998aa} that this Poisson structure induces a symplectic
structure on the moduli space $\phasespace$, which agrees with its canonical
symplectic structure if and only if the symmetric part of $r$ is dual to the
pairing \eqref{eq:pairing} in the Chern-Simons action:
\begin{equation}\label{eq:r-symm}
  r_S \equiv r_{(s)}^{\alpha\beta} T_\alpha\otimes T_\beta
          = \tfrac 1 2 (P_a\otimes J^a+J^a\otimes P_a).
\end{equation}
Note that Fock and Rosly's Poisson structure on the ambient space
$\extphasespace$ is therefore non-canonical in two ways. Firstly, it depends on
the choice of a set of generators of the fundamental group $\pi_1(\surf)$ and
of a linear ordering of the incident edges at the basepoint. This ordering of
the edges is indicated in Figure \ref{fig:fundamental-group} by the short wavy
line (cilium) and gives rise to a partial ordering of the generators of
$\pi_1(\surf)$: $m_1 < \dots < m_n < a_1, b_1 < \dots < a_g, b_g$.

Secondly, the definition of the Poisson structure on the ambient space
$\extphasespace$ requires the choice of a classical $r$-matrix, which is
generally not unique; for a classification of the classical $r$-matrices for
$\poal$ see \cite{Stachura:1998aa}. However, it is shown in \cite{Fock:1998aa}
that the Poisson structures on $\allpogr$ associated with different choices of
generators and orderings of $\pi_1(\surf)$ and different choices of classical
$r$-matrices that satisfy \eqref{eq:r-symm} induce the same symplectic
structure on the moduli space $\phasespace$. This is apparent in formula
\eqref{eq:fr-bivector}, which shows that the Poisson bracket of two functions
on $\allpogr$ depends only on the symmetric component of $r$ if one of the two
functions is invariant under the diagonal action of $\pogr$ on $\allpogr$.

In the following, we work with the classical $r$-matrix that corresponds to the
structure of $\poal$ as a classical double of $\loal$. In terms of the basis
\eqref{eq:poincare-algebra-bracket} it is given by $ r = P_a \otimes J^a$.  It
is shown in \cite{Meusburger:2004aa}, see also the discussion in Section
\ref{subsec:dirac-intro}, that Fock and Rosly's Poisson structure for this case
can be formulated in terms of functions and certain vector fields on $\alllogr$
such that the Poisson bracket takes the form
\begin{equation*}
  \{f,g\}=0,\quad
  \{X, f\}=\mathcal L_Xf,\quad
  \{X,Y\}=[X,Y],
\end{equation*}
where $f,g\in\cif(\alllogr)$, $X,Y\in\text{Vec}(\alllogr)$, $\mathcal L_Xf$
denotes the Lie derivative and $[X,Y]$ the Lie bracket of vector fields on
$\alllogr$.

This implies that the associated Poisson algebra has a canonical $\NN$-grading,
in which the subspaces of homogeneous degree are given by homogeneous
polynomials in the vector fields $X$ with $\cif(\alllogr)$-valued
coefficients. This $\NN$-grading corresponds naturally to a physical dimension
of $\hbar$ and plays an important role in the quantisation of the theory.


\subsection{Quantisation}

The description of the moduli space of flat connections outlined in the
previous section serves as the starting point of the combinatorial quantisation
formalism \cite{Alekseev:1995aa, Alekseev:1996aa, Alekseev:1996ab,
  Buffenoir:1995aa} for Chern-Simons theories with compact, semisimple gauge
groups.  This formalism proceeds by quantising the auxiliary Poisson structure
on the ambient space $\extphasespace$ and then imposing the constraints in the
quantum theory. As this auxiliary Poisson structure is closely related to
certain Poisson structures from the theory of Poisson-Lie groups, the
corresponding quantum algebra is given in terms of quantum groups. The
implementation of the constraints in the quantum theory then reduces to a
problem from the representation theory of the associated quantum group, namely
to determining the invariant subspace in the tensor product of certain
representations of this quantum group.

While this formalism is well-established for Chern-Simons theories with
compact, semisimple gauge groups, for which the corresponding quantum groups
are universal enveloping algebras at a root of unity, it cannot be extended
straightforwardly to Chern-Simons theories with non-compact gauge groups.  This
is due to the fact that the representations of the corresponding quantum groups
no longer form a semisimple ribbon category and their characters become
distributions.

The generalisation of the combinatorial quantisation formalism to Chern-Simons
theory with gauge group $\SL(2,\CC)$ has been achieved in
\cite{Buffenoir:2002aa}, for partial results on semidirect product gauge groups
see \cite{Meusburger:2004aa, Meusburger:2010aa, Meusburger:2010bb}.  However,
there is currently no general quantisation formalism for moduli spaces of flat
connections associated with non-compact or non-semisimple groups.  Other
quantisation approaches such as Reshetikhin-Turaev invariants
\cite{Reshetikhin:1991aa} face similar problems.  Although there is work that
investigates their extension by analytic continuation \cite{Bar-Natan:1991aa,
  Witten:2011aa}, there is no general method that allows one to extend these
models to general non-compact or non-semisimple gauge groups.

For this reason, we pursue a different strategy, namely we implement the
constraints directly in the classical theory by means of the Dirac gauge fixing
procedure.  This potentially avoids the issues with the implementation of the
constraints in the quantisation of the theory and leads to an explicit
description of the canonical Poisson structure on the moduli space of flat
connections in terms of holonomies.  In the following, we apply this approach
to the description of the moduli space of flat $\pogr$-connections in terms of
the auxiliary Poisson structure on the ambient space $\extphasespace$.

A specific example of such a gauge fixing procedure is investigated in
\cite{Meusburger:2011aa}. It is shown there that in the application to
(2+1)-gravity, this gauge fixing procedure has a direct physical
interpretation. The gauge fixing conditions can be viewed as a prescription
that specifies an observer with respect to the geometry of the spacetime. The
resulting gauge-fixed Poisson structure then depends on two variables that
correspond to the total mass and internal angular momentum of the spacetime as
measured by this observer.

\subsection{The Dirac gauge fixing procedure}
\label{subsec:dirac-constraints}

In general, the Dirac gauge fixing procedure applies to constrained dynamical
systems, \ie to Poisson manifolds $(\extphasespace, \{,\})$ with a set of
constraint functions $\{\phi_i\}_{i=1,\dots,k}\subset\cif(\extphasespace)$.  In
the following, we restrict attention to the case where the constraints are
first-class and such that $0$ is a regular value of $\Phi=(\phi_1, \dots
\phi_k): \extphasespace \to \RR^k$.  Gauge fixing then amounts to imposing an
additional set of constraints
$\{\chi_j\}_{j=1,\dots,k}\subset\cif(\extphasespace)$, the gauge fixing
conditions, which must satisfy the following requirements
\cite{Henneaux:1994aa}:
\begin{enumerate}
\item It is possible to map any point $q \in \{p\in \extphasespace \mid
  \phi_i(p) = 0 \; \forall i=1,\dots,k\}$ to a point on the constraint surface
  $\csurface \defeq \{p\in \extphasespace \mid \phi_i(p) = 0 \text{ and }
  \chi_i(p)=0 \; \forall i=1,\dots,k\}$ with the flows that the first-class
  constraints $\phi_i$ generate via the Poisson bracket.
\item The matrix $C=(\{\chi_j, \phi_i\})_{i,j=1,\dots,k}$ is invertible
  everywhere on the constraint surface $\csurface$.
\end{enumerate}

The second condition implies that the gauge fixing conditions
$\{\chi_j\}_{j=1,\dots,k}$ together with the original constraints
$\{\phi_i\}_{i=1,\dots,k}$ can be collected in a single set
$\{C_i\}_{i=1,\dots,2k}$ of constraints such that $0$ is a regular value of the
function $C=(C_1,\dots,C_{2k}): M\to\RR^{2k}$ and for which the Dirac matrix
$D=(D_{ij})_{i,j=1,\dots,2k}$, $D_{ij} \defeq \{C_i, C_j\}$, is invertible
anywhere on the constraint surface $\csurface$.  The Dirac bracket of two
functions $f,g\in C^\infty(\csurface)$ is then defined by
\begin{equation*}
  \{f, g\}_D \defeq \{\tilde f, \tilde g\} - \smashoperator{\sum_{i,j=1}^{2k}} \{\tilde f, C_i\} (D^\inv)_{ij} \{C_j, \tilde g\},
\end{equation*}
where $\tilde f,\tilde g\in C^\infty(\extphasespace)$ are arbitrary extensions
of $f,g\in C^\infty(\csurface)$. The Dirac bracket does not depend on the
choice of the extensions $\tilde f, \tilde g$ and defines a Poisson structure
on $\csurface$ \cite{Dirac:1949aa,Dirac:1950aa,Henneaux:1994aa}.

The physical interpretation of this bracket can be summarised as follows: The
Poisson manifold $(\extphasespace,\{\,,\,\})$ plays the role of an extended,
non-gauge-invariant phase space which contains redundant degrees of freedom
corresponding to different descriptions of a single physical state. The
gauge-invariant or physical phase space is given as the quotient
$\phasespace=\equivalencequotientspace{\symplextphasespace}$, where
$\symplextphasespace\defeq\{p\in\extphasespace\,\vert\, \phi_i(p)=0\;\forall
i=1,\dots,k\}$ and two points on $\symplextphasespace$ are identified if they
are mapped into each other by the flows the first-class constraints $\phi_i$
generate via the the Poisson bracket. The associated equivalence classes are
called gauge orbits.

Imposing gauge fixing conditions amounts to selecting a representative in each
gauge orbit. The first requirement on the gauge fixing conditions ensures that
the gauge fixing conditions select at least one representative in every gauge
orbit. The second requirement ensures that they select at most one
representative in each gauge orbit.  Imposing gauge fixing conditions thus
amounts to constructing a diffeomorphism that identifies the quotient
$\phasespace=\equivalencequotientspace{\symplextphasespace}$ with the
submanifold $\csurface \subset \symplextphasespace \subset \extphasespace$.

From the viewpoint of symplectic reduction, this procedure can be interpreted
in the following way. If $(\extphasespace,\{\,\,\})$ is symplectic, then the
submanifold $\symplextphasespace\subset\extphasespace$ is coisotropic, and the
flows generated by the first-class constraints $\phi_i$ define a foliation of
$\symplextphasespace$.
If we think of $\symplextphasespace$ as a bundle over $\phasespace$, then
choosing a representative for each equivalence class in $\phasespace$ amounts
to specifying a global section on $\symplextphasespace$.  This is achieved via
the gauge fixing conditions, which define a diffeomorphism $\xi:
\csurface\subset\extphasespace\to\phasespace$, $p\mapsto [p]$.  The manifold
$\phasespace$ carries a canonical symplectic structure obtained via symplectic
reduction from the symplectic structure on $\extphasespace$.  The pull-back of
this symplectic structure to the submanifold $\csurface\subset\extphasespace$
with $\xi$ is the Dirac bracket on $\csurface$.

The situation is similar in the case where $(\extphasespace,\{\,\,\})$ is not
symplectic but there is a function $\Psi=(\psi_1,\dots,\psi_l)\in
C^\infty(\extphasespace,\mathcal \RR^l)$ such that $0$ is a regular value of
$\Psi$, $\Psi^\inv(0)$ a symplectic submanifold of $\extphasespace$ and
$\psi_1,\dots,\psi_l$ Poisson-commute with all functions on
$\extphasespace$. In this case, the reasoning above can be applied to the
submanifold $\Psi^\inv(0)$.

\subsection{Constraints and gauge fixing conditions for the moduli space}
\label{subsec:constraints-and-gauge-fixing}

The moduli space of flat $\pogr$-connections can be viewed as a constrained
system in the sense of Dirac.  From this viewpoint, the Poisson manifold
$(\extphasespace, \{\,,\})$ is identified with the ambient space
$\extphasespace=\allpogr$ equipped with Fock and Rosly's Poisson structure
\cite{Fock:1998aa}. From expression \eqref{eq:phase-space-with-holonomies} for
the moduli space of flat $\pogr$-connections, it is then apparent that the
moduli space is obtained from $\extphasespace$ by imposing a group-valued
constraint that arises from the defining relation of the fundamental group
$\pi_1(\surf)$, together with a set of constraints that restrict the holonomies
$M_1,\dots,M_n$ to the conjugacy classes \eqref{eq:particle-conjugacy-classes}.
The latter can be formulated as pairs of constraints of the form
\begin{equation*}
  \Tr(u_{M_i})-c_i \weaklyequal 0, \qquad
  \Tr(j_{M_i}^aJ_a\cdot u_{M_i})-d_i\weaklyequal 0,
\end{equation*}
with real parameters $c_i,d_i$ for each puncture. It turns out that these
constraints are Casimir functions of the Poisson bracket, \ie Poisson-commute
with all functions on $\extphasespace$. For this reason, reducing the Poisson
structure to the relevant conjugacy classes presents no difficulties and does
not require any gauge fixing.

The group-valued constraint from the defining relation of the fundamental group
$\pi_1(\surf)$ can be viewed as a set of six first-class constraints for the
Fock-Rosly bracket \cite{Meusburger:2011aa}.  Parametrising the holonomy of the
curve $c$ in Figure \ref{fig:fundamental-group} as
\begin{equation*}
  (u_C^\inv, \bj_C)
    \defeq M_1^\inv\cdots M_n^\inv [A_1^\inv, B_1]\cdots[A_g^\inv, B_g],
\end{equation*}
one can express this group-valued constraint in the form of the six constraints
\begin{equation}\label{eq:constraints-in-jp}
  \Tr(J_a\cdot u_C) \weaklyequal 0,\qquad  j_C^a\weaklyequal 0\qquad\forall a\in\{0,1,2\}.
\end{equation}
The Poisson brackets of these six constraint functions are closely related to
the Lie bracket of $\poal$. The associated gauge transformations which they
generate via the Poisson bracket are given by the diagonal action of $\pogr$ on
$\allpogr.$

A specific choice of gauge fixing conditions for the constraints
\eqref{eq:constraints-in-jp} is investigated in \cite{Meusburger:2011aa}. It
imposes gauge fixing conditions on the holonomies associated with two punctures
on the surface $\surf$ and derives the associated Dirac bracket. In this paper
we consider general gauge fixing conditions, subject to certain structural
requirements, and investigate the resulting Dirac brackets. We require that the
gauge fixing conditions satisfy the conditions 1 and 2 in Section
\ref{subsec:dirac-constraints} and are subject to the following two additional
restrictions:
\begin{enumerate}[\quad a.]
\item The gauge fixing conditions depend only on the holonomies $M_i$, $M_j$
  associated with two punctures on the surface $\surf$.
\item The gauge fixing conditions depend at most linearly on the variables
  $j_{M_i}$ and $j_{M_j}$ associated with these holonomies.
\end{enumerate}

The first condition is motivated by convenience and by physics
considerations. Although it is feasible in principle to impose gauge fixing
conditions that involve the holonomies of more than two punctures, this would
complicate many details of the description without adding much on the
conceptual level. Moreover, in the application to (2+1)-gravity, gauge fixing
conditions based on the holonomies of two punctures have a direct physical
interpretation, while the interpretation of a complicated gauge fixing
condition involving more than two punctures is less obvious.

Note also that the first condition allows us to restrict attention to gauge
fixing conditions that depend only on the holonomies $M_1$, $M_2$ of the first
two punctures on $\surf$.  This is due to the fact that different orderings of
the punctures are related by the action of the braid group on $\surf$ and the
braid group on the associated surface $\surf\setminus D$ with a disc removed
\cite{Birman:1974aa}. It is shown in \cite{Meusburger:2005aa} that the braid
group of the surface $\surf\setminus D$ acts by Poisson isomorphisms on the
Poisson manifold $(\extphasespace=\allpogr,\{\,,\,\})$. The action of the braid
group thus allows one to permute the punctures and to suppose that the gauge
fixing conditions depend only on the holonomies of the first two punctures.

The second condition is motivated by structural considerations, namely the wish
to preserve the natural $\NN$-grading of the Poisson structure. As we will see
in the following, gauge fixing conditions that are non-linear in the variables
$j_{M_1}$ or $j_{M_2}$ and, consequently, non-linear in the vector fields on
$\alllogr$, would compromise this grading.  However, the grading is an
important structural feature of the theory and plays a central role in its
quantisation \cite{Meusburger:2004aa,Meusburger:2010aa,Meusburger:2010bb}. For
this reason, it seems natural to impose that it is preserved by the gauge
fixing procedure.

Conditions 1 and 2 from Section \ref{subsec:dirac-constraints} together with
the additional assumptions a and b above imply that the gauge fixing conditions
can be brought into the form
\begin{equation*}
  \sum_{i=1}^2 \ThetaFam{1}{M_i}{a}\,j^a_{M_i} \weaklyequal 0, \quad
  \sum_{i=1}^2 \ThetaFam{2}{M_i}{a}\,j^a_{M_i} \weaklyequal 0, \quad
  \sum_{i=1}^2 \ThetaFam{3}{M_i}{a}\,j^a_{M_i} \weaklyequal 0, \quad
  \Delta_1 \weaklyequal 0, \quad
  \Delta_2 \weaklyequal 0, \quad
  \Delta_3 \weaklyequal 0,
\end{equation*}
where $\ThetaFam{j}{M_i}{a}, \Delta_j \in \cif(\logr\times \logr)$ and the two
copies of the Lorentz group $\logr$ are identified with the Lorentzian
components of the holonomies $M_1$ and $M_2$.  These gauge fixing conditions
allow one to express the two constrained holonomies $M_1$ and $M_2$ as
functions of the four fixed parameters that characterise the conjugacy classes
$\grporbit[1],\grporbit[2]$ and of two conjugation-invariant dynamical
variables $\psi,\alpha$, which depend only on the product $M_2\cdot M_1$.  As
there are many possible definitions of these variables, we will not adhere to
one of them, but impose that they are given in terms of the Lorentzian and
translational components of the product $M_2\cdot M_1=(u_{12},-\Ad(u_{12})
\bj_{12})$ as
\begin{equation*}
  \psi=f(\Tr(u_{12})), \quad
  \alpha=g(\Tr(u_{12})) \Tr(j_{12}^aJ_a\cdot u_{12})+h(\Tr(u_{12})),
\end{equation*}
with diffeomorphisms $f,g\in\cif(\RR)$ and a smooth function $h\in\cif(\RR)$.

As for the gauge fixing conditions investigated in \cite{Meusburger:2011aa},
the Dirac gauge fixing procedure with these gauge fixing conditions gives rise
to a Poisson structure $\{\,,\}_D$ on the constraint surface $\csurface \subset
\RR^2\times\restpogr$, where $\RR^2$ is parametrised by the variables
$\psi,\alpha$ and $\restpogr$ by the non-gauge-fixed holonomies
$M_3,\dots,B_g$.  This Poisson structure is derived in the next section.

\section{Gauge fixing and solutions of the classical dynamical Yang-Baxter equation}
\label{sec:gaugefixing}

\subsection{General form of the Dirac bracket}
\label{subsec:dirac-intro}

In this section, we derive explicitly the Dirac bracket obtained by gauge
fixing the auxiliary Poisson structure from \cite{Fock:1998aa} on the ambient
space $\allpogr$. While the Poisson structure in \cite{Fock:1998aa} is
associated with general ciliated fat graphs on a genus $g$-surface $S_{g,n}$
with $n$ punctures, we restrict attention to the case where the graph is a set
of generators of the fundamental group $\pi_1(S_{g,n})$ as depicted in Figure
\ref{fig:fundamental-group}. In that case, the Poisson structure from
\cite{Fock:1998aa} takes the following form.

\begin{definition}[\cite{Fock:1998aa}]\label{def:fr}
Let $G$ be a Lie group with Lie algebra $\ag$,
$\{T_\alpha\}_{\alpha=1,\dots,\text{dim}(G)}$ a basis of $\ag$ and
$r=r^{\alpha\beta}T_\alpha\oo T_\beta\in\ag\otimes\ag$.  Then Fock and Rosly's
bivector $\frbivector$ is the antisymmetric section of the bundle
$TG^{n+2g}\otimes TG^{n+2g}$ defined by
\begin{align}\label{eq:fr-bivector}
    &\frbivector
    = \tfrac 1 2 r_{(a)}^{\alpha\beta} \Bigl(\smashoperator{\sum_{i=1}^{n+2g}} L^{i}_\alpha+R^{i}_\alpha\Bigr)\oo\Bigl(\smashoperator{\sum_{j=1}^{n+2g}} L^{j}_\beta+R^{j}_\beta\Bigr)
    + \tfrac 1 2 r_{(s)}^{\alpha\beta} \smashoperator{\sum_{1\leq i<j\leq n+2g}} \left(L^{i}_\alpha+R^{i}_\alpha\right)\wedge\left(L^{j}_\beta+R^{j}_\beta\right)  \\[-.5em]
    &\!+\! \tfrac 1 2 r_{(s)}^{\alpha\beta}\Big(
      \sum_{i=1}^n R^{i}_\alpha \!\wedge\! R^{i}_\beta \!+\!
      \sum_{j=1}^{g} \Big[
        L^{n+2j}_\alpha \!\wedge\! L^{n+2j}_\beta \!-\!
        (R^{n+2j-1}_\alpha \!+\! L^{n+2j-1}_\alpha) \!\wedge\! R^{n+2j}_\beta \!-\!
        L^{n+2j-1}_\alpha \!\wedge\! L^{n+2j}_\beta
      \Big]
    \Big),\nonumber
\end{align}
where $L_\alpha^i$ and $R_\alpha^i$ denote the right- and left-invariant vector
fields associated with the different components of $G^{n+2g}$ and the basis
elements $T_\alpha$:
\begin{align*}
  L^i_\alpha f(g_1,\dots,g_{n+2g})&=\tdiffat{}{t}{t=0} f(u_1,\dots,u_{i-1}, e^{-t T_\alpha}\cdot u_i, u_{i+1},\dots,u_{n+2g}),\\
  R^i_\alpha f(g_1,\dots,g_{n+2g})&=\tdiffat{}{t}{t=0} f(u_1,\dots,u_{i-1}, u_i\cdot e^{t T_\alpha}, u_{i+1},\dots,u_{n+2g}),
\end{align*}
and $r^{\alpha\beta}_{(a)}$, $r^{\alpha\beta}_{(s)}$ the coefficients of the
antisymmetric and symmetric component of $r$:
\begin{equation*}
  r^{\alpha\beta}_{(a)}=\tfrac 1 2 (r^{\alpha\beta}-r^{\beta\alpha})\qquad r^{\alpha\beta}_{(s)}=\tfrac 1 2 (r^{\alpha\beta}+r^{\beta\alpha}).
\end{equation*}
The associated bracket on $G^{n+2g}$ is the antisymmetric bilinear map $\{\;\}:
C^\infty(G^{n+2g})\times C^\infty(G^{n+2g})\to C^\infty(G^{n+2g})$ given by
$$\{f,g\}=\frbivector(\diffd f \otimes \diffd g)\qquad\forall f,g\in C^\infty(G^{n+2g}).$$
\end{definition}

The main advantage of this description is that it defines an auxiliary Poisson
structure on $G^{n+2g}$ that is given in terms of a classical $r$-matrix for
$\ag$ and induces the canonical symplectic structure on the moduli space of
flat $G$-connections on $S_{g,n}$.

\begin{theorem}[\cite{Fock:1998aa}]
If $r$ is a solution of the classical Yang-Baxter equation
\begin{equation*}
  [[r,r]]=[r_{12}, r_{13}]+[r_{12}, r_{23}]+[r_{13}, r_{23}]=0,
\end{equation*}
then $\frbivector$ defines a Poisson structure on $G^{n+2g}$. If additionally
$\langle\;,\;\rangle$ is a non-degenerate $\Ad$-invariant symmetric bilinear
form on $\ag$ and
\begin{equation*}
  r^{\alpha\beta}_{(s)}=\tfrac12\kappa^{\alpha\beta}
  \quad\text{ with }
  \kappa^{\alpha\beta}\kappa_{\beta\gamma}=\tensor{\delta}{^\alpha_\gamma},\;
  \kappa_{\alpha\beta}=\langle T_\alpha, T_\beta\rangle,
\end{equation*}
then the Poisson structure defined by $\frbivector$ induces the canonical
symplectic structure on the moduli space of flat $G$-connections on $S_{g,n}$.
\end{theorem}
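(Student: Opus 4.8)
The plan is to prove the two assertions separately: the Poisson (Jacobi) property by a direct Schouten-bracket computation that reduces to the classical Yang--Baxter equation, and the identification with the canonical symplectic structure by Poisson reduction and comparison with the Atiyah--Bott form \cite{Atiyah:1983aa}.

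For the first statement I would express the Jacobi identity as the vanishing of the Schouten--Nijenhuis self-bracket $[\frbivector, \frbivector]$ of the bivector \eqref{eq:fr-bivector}. The computation rests entirely on the commutation relations of the invariant vector fields: on a single copy the $R^i_\alpha$ close into $\ag$ with the structure constants and the $L^i_\alpha$ close with the opposite sign, while $[L^i_\alpha, R^i_\beta]=0$ and all fields attached to distinct copies commute. Inserting these into $[\frbivector,\frbivector]$, every cubic term in the invariant vector fields acquires a coefficient quadratic in $r$. I would then reorganise these coefficients and show that they assemble into the components of the Yang--Baxter tensor $[[r,r]]=[r_{12},r_{13}]+[r_{12},r_{23}]+[r_{13},r_{23}]$, together with terms proportional to $[X\oo\mathds{1}+\mathds{1}\oo X,\,r_{(s)}]$, which vanish because the symmetric part of any solution of the CYBE is $\Ad$-invariant (and in the case of interest $r_{(s)}=\tfrac12\kappa$ by \eqref{eq:r-symm}). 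The CYBE hypothesis then kills the remaining contributions.

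The delicate point is the bookkeeping of the cross-copy terms and, in particular, the genus corrections in the second line of \eqref{eq:fr-bivector}. The asymmetric handle terms $-(R^{n+2j-1}_\alpha+L^{n+2j-1}_\alpha)\wedge R^{n+2j}_\beta - L^{n+2j-1}_\alpha\wedge L^{n+2j}_\beta$ and the puncture terms $R^i_\alpha\wedge R^i_\beta$ are engineered precisely so that the contributions of the ordering term $\sum_{i<j}$ telescope against them; verifying this cancellation pair-by-pair along the linear order $m_1<\dots<b_g$ is where the ciliated-graph structure enters, and is the main obstacle. A cleaner conceptual route that I would pursue in parallel is to recognise $\frbivector$ as the fusion of the standard Heisenberg-double / Poisson--Lie structures determined by $r$ on the individual copies of $G$; in the quasi-Hamiltonian framework \cite{Alekseev:1998aa} the Poisson property is then inherited from the compatibility of the fusion product with $r$, and the role of CYBE becomes the statement that the fusion remains strictly, rather than merely quasi-, Poisson.

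For the second statement I would realise the moduli space \eqref{eq:phase-space-with-holonomies} by Poisson reduction of $(G^{n+2g}, \frbivector)$: the product of holonomies $[B_g,A_g^{-1}]\cdots M_1$ is a $G$-valued moment map for the diagonal conjugation action, its fibre over $\mathds{1}$ is the constraint surface, and the quotient by the residual $G$-action is $\phasespace$. The key simplification is the remark recorded below \eqref{eq:fr-bivector}: once one argument is invariant under diagonal conjugation, the bracket depends only on $r_{(s)}$. Hence the reduced structure is insensitive to the antisymmetric part of $r$ and to the cilium and ordering, and is governed solely by $r_{(s)}=\tfrac12\kappa$. I would then evaluate the reduced bivector on gauge-invariant tangent vectors, represented through the holonomy functions and the invariant vector fields \eqref{eq:action-of-vector-fields}, and show that it inverts to the cup-product pairing $\omega(a,b)=\langle a\cup b,[S_{g,n}]\rangle$ defining the Atiyah--Bott form, the pairing $\langle\,,\,\rangle$ entering precisely because $r_{(s)}=\tfrac12\kappa$ is dual to \eqref{eq:pairing}. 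Nondegeneracy of the reduced form then follows from its agreement with the Atiyah--Bott form; here the main obstacle is the explicit matching with Goldman's cup-product formula, which requires describing the tangent space $H^1$ of the representation variety in terms of the invariant vector fields and is the technical heart of the identification.
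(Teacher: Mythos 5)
First, a point about the comparison itself: the paper does not prove this theorem. It is quoted, with attribution, from \cite{Fock:1998aa}, and the surrounding text only records consequences (e.g.\ the identification in \cite{Alekseev:1995ab} with dual Poisson--Lie and Heisenberg-double structures). So your proposal can only be measured against the arguments of the cited literature, and at the structural level your two-step plan --- a Schouten self-bracket computation reducing the Jacobi identity for \eqref{eq:fr-bivector} to the CYBE, then reduction to the moduli space using the observation (recorded in the paper below \eqref{eq:r-symm}) that brackets in which one entry is conjugation-invariant see only $r_{(s)}$, followed by comparison with the Atiyah--Bott/Goldman pairing --- is indeed the standard route.

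However, one step of your Jacobi argument is wrong as stated, and it is load-bearing. You discard the terms proportional to $[X\oo\mathds{1}+\mathds{1}\oo X,\,r_{(s)}]$ on the grounds that ``the symmetric part of any solution of the CYBE is $\Ad$-invariant''. This is false: $r=x\oo x$ solves the CYBE for every $x\in\ag$ (each term of $[[r,r]]$ carries a factor $[x,x]=0$), while $x\oo x$ is $\Ad$-invariant only when $x$ is central --- and $\loal$ and $\poal$ are centerless, so this gives counterexamples in the very Lie algebras at hand. Invariance of $r_{(s)}$ is not a consequence of the CYBE; in the Fock--Rosly framework it enters as part of the definition of the admissible class of $r$-matrices, and in the second half of the theorem it follows from the hypothesis $r_{(s)}=\tfrac12\kappa$ with $\kappa$ dual to an $\Ad$-invariant form --- but the first half, as stated, hands you no such hypothesis. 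So your computation must either show that these invariance-violating terms cancel among themselves through the specific combination of same-copy, cross-copy and handle terms in \eqref{eq:fr-bivector}, or the invariance of $r_{(s)}$ has to be imposed as an assumption; as written, the step is unjustified. Beyond this, note that the two computations you yourself defer as ``the main obstacle'' and ``the technical heart'' --- the pair-by-pair cancellation along the linear order, and the matching of the reduced bivector with Goldman's cup-product form --- are precisely the substance of the theorem, so what you have is a largely correct strategy rather than a proof. (A minor convention slip: with the paper's definitions \eqref{eq:vector-fields}, where $L_\alpha$ is generated by $e^{-tT_\alpha}$ acting on the left, the fields $L^i_\alpha$ and $R^i_\alpha$ close with the \emph{same} structure constants, not with opposite signs.)
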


It is shown in \cite{Alekseev:1995ab} that this Poisson structure can be
identified with the direct product of $n$ copies of the dual Poisson-Lie
structure on $G$ and $g$ copies of the Heisenberg double Poisson structure
associated with $G$ and $r$.

For the case at hand, where the Lie group is the Poincar\'e group in three
dimensions, $G=\pogr$, a classical $r$-matrix is given by $r=P_a\oo J^a$. The
corresponding Poisson structure on $\allpogr$ is computed explicitly in
\cite{Meusburger:2003aa}, and in \cite{Meusburger:2004aa} it is shown that this
Poisson structure can be identified with the direct product of $n$ copies of
the dual Poisson-Lie structure on $\pogr$ and $2g$ copies of the cotangent
bundle Poisson structure:
\begin{equation*}
  (\allpogr,\{\})=\underbrace{\pogr^*\times\ldots\times \pogr^*}_{n\,\times}\times \underbrace{T^*(\logr)\times\ldots\times T^*(\logr)}_{2g\,\times}.
\end{equation*}
From this description, it is directly apparent that the symplectic leaves of
this Poisson structure are of the form
$\grporbit[1]\times\dots\times\grporbit[n]\times\pogr^{2g}$, where
$\grporbit[i]\subset \pogr$ are fixed conjugacy classes.

In the following, we will not use this identification, but we will work with a
description of the Poisson structure that is closer to the formula in
Definition \ref{def:fr}. This formulation has the advantage that it is more
adapted to physics applications, especially in the Chern-Simons formulation of
$(2+1)$-gravity, and that its geometrical interpretation is more apparent.  To
emphasise the geometrical interpretation of the variables and their relation
with the set of generators of the fundamental group $\pi_1(S_{g,n})$ in Figure
\ref{fig:fundamental-group} we denote elements of $\allpogr$ and $\alllogr$ as,
respectively,
\begin{equation*}
  (M_1,\dots,M_n,A_1,B_1,\dots,A_g,B_g)\in \allpogr,\qquad
  (u_{M_1},\dots,u_{B_g})\in\alllogr,
\end{equation*}
and write $J_{X}^{L,a}$, $J_Y^{R,a}$ for the associated right- and
left-invariant vector fields on $\alllogr$:
\begin{align*}
  J_{X}^{L,a}f(u_{M_1},\dots,u_{B_g})&=\tdiffat{}{t}{t=0} f(u_{M_1},\dots, e^{-tJ_a}\cdot u_X,\dots,u_{B_g}),\\
  J_{X}^{R,a}f(u_{M_1},\dots,u_{B_g})&=\tdiffat{}{t}{t=0} f(u_{M_1},\dots, u_X\cdot e^{tJ_a},\dots,u_{B_g}).
\end{align*}

It is shown in \cite{Meusburger:2004aa} that by using the identification
$\pogr=T\logr$ and the classical $r$-matrix $r=P_a\oo J^a$, the Poisson
structure given by \eqref{eq:fr-bivector} can be expressed in terms of
functions $f\in C^\infty(\alllogr)$ and certain vector fields on
$\alllogr$. For this, one identifies the coordinate functions $j^a_X:
(M_1,\dots,B_g)\mapsto q^a_X$, where we use the parametrisation
$X=(u_X,-\Ad(u_X)\bq)$ for $X\in\{M_1,\dots,B_g\}$, with certain vector fields
on $\alllogr$.  The Poisson bracket of two variables $j^a_X,j^b_Y$ then
coincides with the Lie bracket of the associated vector fields, the Poisson
bracket of a variable $j^a_X$ with a function on $\alllogr$ coincides with its
Lie derivative and any two functions on $\alllogr$ Poisson-commute.

\begin{theorem}[\cite{Meusburger:2003aa, Meusburger:2004aa}]\label{thm:fr-simple}
For $G=\pogr$ and $r=P_a\oo J^a$, the Poisson structure \eqref{eq:fr-bivector}
on $\allpogr=T\alllogr$ is characterised uniquely in terms of vector fields
$j^a_X$, $a\in\{0,1,2\}$, $X\in\{M_1,\dots,B_g\}$ on $\alllogr$ and functions
on $\alllogr$:
\begin{equation*}
  \{f,g\}=0,\qquad
  \{j^a_X, f\}=\mathcal L_{j^a_X} f,\qquad
  \{j^a_X,j^b_Y\}=[j^a_{X}, j^b_{Y}]\qquad
  \forall f\in C^\infty(\alllogr),
\end{equation*}
where $\mathcal L$ denotes the Lie derivative and $[\;,\;]$ the Lie bracket on
$\alllogr$.  The vector fields $j^a_X$ on $\alllogr$ are given by:
\begin{align*}
  &j^a_{M_i}=-\left(J_{M_i}^{R,a}+J_{M_i}^{L,a}\right)-\tensor{(\mathds{1}\!-\!\Ad(u_{M_i}))}{^a_b} \smashoperator{\sum_{Y>M_i}} \left(J^{R,b}_Y\!+\!J^{L,b}_Y\right),\\
  &j^a_{A_j}=-\left(J_{A_j}^{R,a}+J_{A_j}^{L,a}+J_{B_j}^{L,a}+\tensor{(\mathds{1}\!-\!\Ad(u_{A_j}^\inv u_{B_j}))}{^a_b} J_{B_j}^{R,b}\right)-\tensor{(\mathds{1}\!-\!\Ad(u_{A_j}))}{^a_b} \smashoperator{\sum_{Y>A_j}} \left(J^{R,b}_Y\!+\!J^{L,b}_Y\right),\\
  &j^a_{B_j}=-\left(J_{B_j}^{R,a}+J_{B_j}^{L,a}+J_{A_j}^{L,b}\right)-\tensor{(\mathds{1}\!-\!\Ad(u_{B_j}))}{^a_b} \smashoperator{\sum_{Y>A_j}} \left(J^{R,b}_Y\!+\!J^{L,b}_Y\right),
\end{align*}
where $Y>X$ refers to the partial ordering of the generators of $\pi_1(\surf)$:
$Y>X$ if $X=M_i$ and $Y=M_j$ with $i<j$ or if $X\in\{A_i,B_i\}$,
$Y\in\{A_j,B_j\}$ with $i<j$ or if $X\in\{M_1,\dots,M_n\}$,
$Y\in\{A_1,B_1,\dots,A_g,B_g\}$.
\end{theorem}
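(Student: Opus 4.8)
The plan is to evaluate the bivector \eqref{eq:fr-bivector} directly on a convenient set of generators of $\cif(\allpogr)$ and to read off the three brackets. Under the identification $\allpogr=T\alllogr$ I would use as generators the pullbacks $\pi^*\tilde f$ of functions $\tilde f\in\cif(\alllogr)$ along the projection $\pi\colon\allpogr\to\alllogr$, together with the fiberwise-linear coordinate functions $j^a_X$. Their differentials span $T^*\allpogr$ at every point, so since the Fock-Rosly bracket is a biderivation it is determined by its values on these generators. Two structural facts drive the computation. First, both the symmetric part $r_{(s)}$ and the antisymmetric part $r_{(a)}$ of $r=P_a\oo J^a$ are supported on pairs in which one tensor slot carries a translation $P_a$ and the other a Lorentz generator $J^a$. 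Second, as noted after \eqref{eq:vector-fields}, the translational invariant vector fields $P_a^L,P_a^R$ annihilate pullbacks $\pi^*\tilde f$, whereas the Lorentzian ones $J_a^L,J_a^R$ act on them as the invariant vector fields of $\logr$ and on the coordinate functions $j^a_X$ according to \eqref{eq:action-of-vector-fields}.

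The vanishing $\{f,g\}=0$ for $f,g\in\pi^*\cif(\alllogr)$ is then immediate: by the first structural fact, every monomial of $\frbivector$ has a translational field $P_a^L$ or $P_a^R$ in one of its two slots, and by the second fact this field annihilates the pullback sitting in that slot. Hence $\frbivector(\diffd f\oo\diffd g)$ vanishes term by term.

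For the mixed bracket I would compute $\frbivector(\diffd j^a_X\oo\diffd f)$ with $f=\pi^*\tilde f$. Only those monomials survive in which the translational slot acts on the coordinate function $j^a_X$, producing the factors $\tensor{\Ad(u_X)}{_a^b}$ and $-\tensor{\delta}{_a^b}$ of \eqref{eq:action-of-vector-fields}, while the Lorentzian slot acts on $f$ through some $J_Y^L$ or $J_Y^R$. Collecting these contributions exhibits the bracket as a first-order operator $\mathcal L_V f$ for a vector field $V$ on $\alllogr$, and reading off $V$ yields precisely the asserted formulas for $j^a_X$. The partial ordering $Y>X$ and the prefactors $\bigl(\mathds 1-\Ad(u_X)\bigr)$ emerge here from the restricted sums $\sum_{i<j}$ in the second line of \eqref{eq:fr-bivector}, together with the diagonal terms $R^i_\alpha\wedge R^i_\beta$ and the handle-specific terms of the third line, which dictate exactly which $J_Y^L$ and $J_Y^R$ may be paired with the translation acting on $j^a_X$.

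Finally, for $\{j^a_X,j^b_Y\}$ I would avoid a second brute-force evaluation and instead invoke the Jacobi identity. Since $r=P_a\oo J^a$ is the canonical $r$-matrix of the classical double of $\loal$ and hence solves the classical Yang-Baxter equation, the theorem of Fock and Rosly recalled above guarantees that $\frbivector$ is Poisson. The Jacobi identity then gives, for every $\tilde f\in\cif(\alllogr)$, that $\{\{j^a_X,j^b_Y\},\pi^*\tilde f\}=\pi^*\bigl(\mathcal L_{[j^a_X,j^b_Y]}\tilde f\bigr)$, where $[j^a_X,j^b_Y]$ is the commutator of the vector fields found in the previous step. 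Moreover $\{j^a_X,j^b_Y\}$ is again fiberwise linear, because in each surviving monomial the translational slot contributes a $u$-dependent coefficient and the Lorentzian slot acts on a coordinate function linearly via \eqref{eq:action-of-vector-fields}; it therefore corresponds under $\allpogr=T\alllogr$ to a unique vector field, which must be $[j^a_X,j^b_Y]$. The main obstacle is not conceptual but lies in the bookkeeping of the mixed bracket for the handle variables $A_j$ and $B_j$: there the group-commutator structure $[B_j,A_j^\inv]$ and the handle terms of \eqref{eq:fr-bivector} produce the coupled prefactors $\tensor{\bigl(\mathds 1-\Ad(u_{A_j}^\inv u_{B_j})\bigr)}{^a_b}$ and entangle the $A_j$- and $B_j$-contributions, so that correctly organising all index contractions and ordering cases is what makes the calculation lengthy.
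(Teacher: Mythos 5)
First, a structural remark: the paper contains no proof of Theorem \ref{thm:fr-simple} at all — it is imported from \cite{Meusburger:2003aa, Meusburger:2004aa} — so your proposal can only be judged on its own merits, i.e.\ against the direct computation carried out in those references. Your first two steps are structurally sound and match that computation: both $r_{(s)}$ and $r_{(a)}$ of $r=P_a\oo J^a$ occupy mixed $P$--$J$ slots, the translational invariant fields annihilate pullbacks from $\alllogr$, and \eqref{eq:action-of-vector-fields} then makes $\{f,g\}=0$ immediate and exhibits $\{j^a_X,f\}$ as a first-order operator in $f$. Be aware, though, that ``reading off $V$'' is not a detail: verifying that the surviving terms of \eqref{eq:fr-bivector} assemble into exactly the stated fields (the ordering $Y>X$, the prefactors $\idad{X}$, the coupled handle terms) is essentially the entire content of the cited result, and your sketch defers all of it.

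The genuine gap is in your last step. You argue that $\{j^a_X,j^b_Y\}$ is fiberwise linear and, by the Jacobi identity, acts on $\cif(\alllogr)$ as $\mathcal L_{[j^a_X,j^b_Y]}$, and you conclude that it ``therefore corresponds to a unique vector field''. This presupposes that the assignment underlying the theorem, $j^a_X\mapsto$ (the stated vector field), extends to an \emph{injective} $\cif(\alllogr)$-linear map from fiberwise-linear functions to vector fields — and it does not. Concretely, fix a puncture index $i$ and let $\kappa$ be an $\Ad(u_{M_i})$-invariant vector, $\Ad(u_{M_i})\kappa=\kappa$ (locally smooth and nonvanishing, e.g.\ the rotation axis on the set of elliptic $u_{M_i}$). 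Then $\kappa_a\,(J^{R,a}_{M_i}+J^{L,a}_{M_i})=0$, because the conjugation flow generated by $\kappa^aJ_a$ fixes $u_{M_i}$, and also $\kappa_a\tensor{(\mathds{1}-\Ad(u_{M_i}))}{^a_b}=0$, since $\Ad(u_{M_i})$ preserves $\eta$ and fixes $\kappa$. Hence the nonzero fiberwise-linear function $\kappa_a(u_{M_i})\,j^a_{M_i}$ is sent to the \emph{zero} vector field; equivalently, the three fields $j^0_{M_i},j^1_{M_i},j^2_{M_i}$ span at most a two-dimensional subspace at every point of $\alllogr$, and they vanish identically on the slice $u_{M_i}=\mathds{1}$. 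Consequently, knowing how $\{j^a_X,j^b_Y\}$ acts on $\cif(\alllogr)$ determines it only up to fiberwise-linear functions that act trivially, and the claimed identity of \emph{functions} does not follow from the Jacobi identity. To close the argument you must compute $\{j^a_X,j^b_Y\}$ directly from the bivector \eqref{eq:fr-bivector}, exactly as you did for the mixed bracket — which is what the references do.
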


We will now determine the Dirac bracket associated with the Poisson structure
on $\allpogr$ and certain smooth constraint functions on $\allpogr$.  The Dirac
bracket is a well-established formalism from the theory of constrained
Hamiltonian systems and, in a certain sense, can be viewed as the Poisson
counterpart or the physicist's version of symplectic reduction. A more detailed
discussion of this is given in Section \ref{subsec:dirac-constraints}, for the
general theory we refer the reader to
\cite{Dirac:1949aa,Dirac:1950aa,Henneaux:1994aa}.

\begin{definition}[\cite{Dirac:1949aa,Dirac:1950aa}]\label{def:dirac-bracket}
Let $(M,\{\,,\,\})$ be an $n$-dimensional Poisson manifold, $k<n$, and
$C=(C_1,\dots,C_k): M\to\RR^k$ a smooth function such that $0$ is a regular
value of $C$ and the matrix $D(p)=(\{C_i,C_j\}(p))_{i,j=1,\dots,k}$ is
invertible for all $p\in \csurface=C^\inv(0)$. The \definee{Dirac bracket} for
$C$ is the antisymmetric bilinear map $\{\,,\,\}_D: C^\infty(\csurface)\times
C^\infty(\csurface)\to C^\infty(\csurface)$ defined by
\begin{equation*}
  \{f,g\}_D=\{\tilde f,\tilde g\}\oncsurface - \smashoperator{\sum_{i,j=1}^k} \{\tilde f, C_i\} \cdot (D\oncsurface)^\inv_{ij} \cdot \{C_j,\tilde g\}\oncsurface,
\end{equation*}
where $\tilde f,\tilde g\in C^\infty(M)$ are arbitrary extensions of $f, g\in
C^\infty(\csurface)$: $\tilde f\oncsurface=f$, $\tilde g\oncsurface=g$. The
Dirac bracket does not depend on the choice of these extensions and defines a
Poisson structure on $\csurface$.  The submanifold $\csurface=C^\inv(0)\subset
M$ is called \definee{constraint surface}, the functions $C_i: M\to\RR$ are
called \definee{constraint functions}.
\end{definition}

The aim is now to determine the Dirac bracket for the Poisson structure from
Definition~\ref{def:fr} for constraint functions that relate this Poisson
structure to the moduli space of flat $\pogr$-connections:
\begin{align}\label{eq:math-phase-space}
  \phasespace
  &=\{h\in \Hom\bigl(\pi_1(S_{g,n}), \pogr\bigr) \mid h(m_i)\in\grporbit[i]\}/\pogr\\
  &\isoeq\{(M_1, \dots, B_g)\in\grporbit[1]\times\dots\times\grporbit[n]\times\pogr^{2g} \mid \nonumber\\
  &\hspace{9em} [B_g,A_g^\inv]\cdots[B_1,A_1^\inv]\cdot M_n\cdots M_1=1\}/\pogr.\nonumber
\end{align}
As discussed in Section \ref{subsec:constraints-and-gauge-fixing}, this leads
to constraint functions of the form
\begin{equation}\label{eq:generic-constraints}
  \left.
  \begin{aligned}
    &C_1= j_C^0, &\quad &C_2= j_C^1, &\quad &C_3= j_C^2, \\
    &C_4=\textstyle{\sum_{i=1}^2} \ThetaFam{1}{M_i}{a}\,j^a_{M_i}, &\quad
    &C_5=\textstyle{\sum_{i=1}^2} \ThetaFam{2}{M_i}{a}\,j^a_{M_i}, &\quad
    &C_6=\textstyle{\sum_{i=1}^2} \ThetaFam{3}{M_i}{a}\,j^a_{M_i}, \\
    &C_7=\Tr(J_0\cdot u_C), &\quad &C_8=\Tr(J_1\cdot u_C), &\quad &C_9=\Tr(J_2\cdot u_C), \\
    &C_{10}=\Delta_1, &\quad
    &C_{11}=\Delta_2, &\quad
    &C_{12}=\Delta_3,
  \end{aligned}
  \quad\right\}
\end{equation}
where $j^a_{M_i}$ is defined as in Theorem \ref{thm:fr-simple} and
$\ThetaFam{j}{M_i}{a}, \Delta_j \in \cif(\logr\times \logr)$ are functions that
depend only on the $\logr$-part of the first two copies of $\pogr$ and
\begin{equation*}
  (u_C^\inv, \bj_C)
    \defeq M_1^\inv\cdots M_n^\inv \cdot [A_1^\inv, B_1]\cdots[A_g^\inv, B_g].
\end{equation*}
The functions $(C_i)_{i=1,2,3,7,8,9}$ have an interpretation as first-class
constraints in the Dirac gauge fixing formalism and the functions
$(C_i)_{i=4,5,6,10,11,12}$ play the role of gauge fixing conditions. While the
former are fixed and implement the condition
$[B_g,A_g^\inv]\cdots[B_1,A_1^\inv]\cdot M_n\cdots M_1=1$, the latter involve
functions $\ThetaFam{j}{M_i}{a}, \Delta_j $ which can be chosen arbitrarily as
long as the requirements from Definition \ref{def:dirac-bracket} and the
conditions a and b from Section~\ref{subsec:constraints-and-gauge-fixing} are
met. The latter ensure that the constraint functions are adapted to the tangent
bundle structure of $\pogr=T\logr$. Different choices of these functions
correspond to different gauge choices. They implement the quotient by $\pogr$
in \eqref{eq:math-phase-space} and restrict the variables $M_1,M_2$ in such a
way that for all points $(M_1,\dots,B_g)\in \csurface=C^\inv(0)$, the
components $M_1,M_2\in \pogr$ are determined uniquely by two real parameters
\begin{align}\label{eq:psi-alpha-general}
  \psi=f(\Tr(u_{12})), \quad
  \alpha=g(\Tr(u_{12})) \Tr(j_{12}^aJ_a\cdot u_{12})+h(\Tr(u_{12})),
\end{align}
where $f,g\in\cif(\RR)$ are arbitrary diffeomorphisms and $h\in\cif(\RR)$.
This allows us to identify the constraint surface $\csurface=C^\inv(0)$ with a
subset of $\RR^2\times\restpogr$, where the $\RR^2$ is parametrised by
$(\psi,\alpha)$ and $\restpogr$ by $(M_3,\dots,B_g)$.

Given the expressions for the Poisson structure on $\allpogr$ and the
constraint functions \eqref{eq:generic-constraints}, we can explicitly compute
the associated Dirac bracket and obtain a Poisson structure on
$\csurface\subset\RR^2\times\restpogr$.

\begin{theorem}\label{thm:generic-dirac-bracket}
  For all constraint functions of the form \eqref{eq:generic-constraints} that
  satisfy the requirements in Definition \ref{def:dirac-bracket} and conditions
  a and b from Section \ref{subsec:constraints-and-gauge-fixing}, the
  associated Dirac bracket defines a Poisson structure $\{\,,\,\}_D$ on
  $\csurface \subset \RR^2\times\restpogr$, which takes the following form:
  \begin{enumerate}
  \item The Dirac bracket of $\psi$ and $\alpha$ vanishes: $\{\psi, \alpha\}_D
    = 0$.
  \item For all $X \in \{M_3, \dots, B_g\}$ and $f\in\cif(\restlogr)$:
    \begin{equation*}
      \begin{aligned}
        \{\psi, f \}_D &= 0, &
        \{\psi, \bj_X \}_D &=
          -\idadi{X} \, \bq_\psi, \\
        \{\alpha, f\}_D &=
          \smashoperator{\sum_{Y\in\{M_3,\dots,B_g\}}} q_\alpha^a(J_a^{R,Y}+J_a^{L,Y})f, &
        \{\alpha, \bj_X\}_D &=
            -\idadi{X} \bq_\theta - \bq_\alpha \wedge \bj_X,
      \end{aligned}
    \end{equation*}
    with $\bq_\psi, \bq_\alpha,\bq_\theta: \RR^2 \to \RR^3$ satisfying
    $\bq_\psi \wedge \bq_\alpha = 0$ and
    $\partial_\alpha\bq_\psi=\partial_\alpha\bq_\alpha=\partial_\alpha^2
    \bq_\theta= 0$.
  \item For $F, G \in \cif(\restpogr)$: $$\{F, G\}_D = \frrestbivector(\diffd F
    \otimes \diffd G),$$ where $\frrestbivector$ is the Poisson bivector
    \eqref{eq:fr-bivector} and $r: \RR^2 \to \poal \oo \poal$ is given by
    \begin{equation*}
      r (\psi,\alpha)= P_a \otimes J^a - V^{bc}(\psi)(P_b \otimes J_c - J_c \otimes P_b) + \ee^{bcd}m_d(\psi,\alpha) P_b \otimes P_c,
    \end{equation*}
    where $V:\RR \to \Mat(3, \RR)$ and $\bm:\RR^2 \to \RR^3$ satisfies
    $\partial_\alpha^2 \bm=0$.
  \end{enumerate}
\end{theorem}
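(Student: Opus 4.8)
The plan is to compute the Dirac bracket directly from Definition~\ref{def:dirac-bracket}, exploiting the explicit description of the Poisson structure in Theorem~\ref{thm:fr-simple}. The twelve constraints in \eqref{eq:generic-constraints} split naturally into the six first-class constraints $(C_i)_{i=1,2,3,7,8,9}$ coming from the defining relation of $\pi_1(\surf)$ and the six gauge-fixing conditions $(C_i)_{i=4,5,6,10,11,12}$. My first step is to understand the block structure of the $12\times 12$ Dirac matrix $D_{ij}=\{C_i,C_j\}$ on the constraint surface. Since the functions $C_7,\dots,C_{12}$ depend only on the $\logr$-components of the holonomies and the Poisson bracket of two such functions vanishes (Theorem~\ref{thm:fr-simple} gives $\{f,g\}=0$ for $f,g\in\cif(\alllogr)$), several blocks of $D$ vanish identically. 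The brackets $\{C_i,C_j\}$ with one index in $\{1,2,3\}$ (the $j_C^a$) and one index among the $\logr$-functions reduce, via $\{j_X^a,f\}=\mathcal L_{j_X^a}f$, to Lie derivatives that can be read off from \eqref{eq:action-of-vector-fields}. This should reveal $D$ as (up to reordering) block-anti-triangular, so that $D^{-1}$ can be assembled from the inverses of smaller blocks rather than by inverting the full matrix.

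\smallskip
Next I would isolate the three qualitatively different types of output bracket. For part~(1), $\{\psi,\alpha\}_D$: since both $\psi$ and $\alpha$ are, by \eqref{eq:psi-alpha-general}, conjugation-invariant functions of the product $M_2\cdot M_1$ alone, their naive bracket $\{\tilde\psi,\tilde\alpha\}$ and all the correction terms should be controllable using the $\Ad$-invariance of the pairing \eqref{eq:pairing} and the fact that $\psi$ depends only on $\Tr(u_{12})$. I expect $\{\psi,\alpha\}_D=0$ to follow because $\psi$ is a Casimir-like trace function whose bracket with anything built from $M_1,M_2$ collapses. For part~(3), the bracket of two functions $F,G\in\cif(\restpogr)$ on the unconstrained holonomies, the strategy is to show that the correction term in the Dirac bracket modifies the original Fock--Rosly bivector \eqref{eq:fr-bivector} only through its dependence on the gauge-fixed variables $M_1,M_2$, which on $\csurface$ are determined by $(\psi,\alpha)$. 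Substituting the solved expressions for $u_{M_1},u_{M_2},\bj_{M_1},\bj_{M_2}$ into the bivector $B_r^{n,g}$ should collapse the first two factors and produce exactly $\frrestbivector$ with the stated deformed, $(\psi,\alpha)$-dependent $r$-matrix; the precise form of $r(\psi,\alpha)$ with its $V^{bc}$ and $m_d$ pieces will emerge from tracking how the translational and Lorentzian components of $M_1,M_2$ enter.

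\smallskip
Part~(2) is the genuinely mixed case, $\{\psi,\;\cdot\;\}_D$ and $\{\alpha,\;\cdot\;\}_D$ paired against functions and $\bj_X$ for $X\in\{M_3,\dots,B_g\}$, and I expect the structural identities $\bq_\psi\wedge\bq_\alpha=0$ and the $\alpha$-independence conditions $\partial_\alpha\bq_\psi=\partial_\alpha\bq_\alpha=\partial_\alpha^2\bq_\theta=0$ to be the technically delicate output. The vector fields $j^a_{M_i}$ from Theorem~\ref{thm:fr-simple} contain the factors $\idad{M_i}$ acting on sums $\sum_{Y>M_i}(J_Y^{R,b}+J_Y^{L,b})$ over later generators, so the bracket of $\psi$ or $\alpha$ with a variable on a later holonomy $X$ naturally produces terms of the form $\idadi{X}\,\bq$; I would extract $\bq_\psi,\bq_\alpha,\bq_\theta$ as the coefficient vectors arising after applying $D^{-1}$. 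The wedge and derivative constraints on these vectors should be forced by conditions~a and~b from Section~\ref{subsec:constraints-and-gauge-fixing}: linearity in $j_{M_1},j_{M_2}$ (condition~b) is what guarantees $\partial_\alpha^2\bq_\theta=0$ and $\partial_\alpha\bq_\psi=\partial_\alpha\bq_\alpha=0$, since $\alpha$ enters linearly through the $j_{12}$-dependence while $\psi$ does not enter at all.

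\smallskip
\textbf{The hardest step} will be inverting (or effectively inverting) the Dirac matrix $D$ and keeping the algebra organised enough that the corrected brackets reassemble into the clean, $r$-matrix-governed form claimed in part~(3), rather than an unstructured mess. Concretely, the obstacle is showing that the gauge-fixing correction terms, which a priori could couple the later holonomies $M_3,\dots,B_g$ in arbitrary ways, in fact only deform the $r$-matrix in the first two tensor slots and leave the overall Fock--Rosly structure \eqref{eq:fr-bivector} intact. I expect this to hinge on the block-anti-triangular structure noted above together with the $\Ad$-invariance \eqref{eq:pairing} and repeated use of the $\idad{\cdot}$ identities; verifying that the resulting $r(\psi,\alpha)$ actually solves the classical dynamical Yang-Baxter equation (needed for the Jacobi identity, and asserted implicitly by calling the output a Poisson structure) is then either a direct consequence of the Dirac bracket being Poisson or requires a separate check that I would defer to the subsequent section.
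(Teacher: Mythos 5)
Your proposal follows the same overall route as the paper's proof: the vanishing block $\{C_i,C_j\}=0$ for $i,j\in\{7,\dots,12\}$ (any two functions on $\alllogr$ Poisson-commute), blockwise inversion of the $12\times 12$ Dirac matrix, the $\idadi{X}$ prefactors inherited from the vector fields of Theorem~\ref{thm:fr-simple}, the role of condition~b in keeping the $\alpha$-dependence at most linear, and the deferral of the CDYBE to the next section (the Jacobi identity on $\csurface$ is indeed automatic for a Dirac bracket). However, two of the theorem's assertions rest on mechanisms that fail or are absent. For part~(1), your claim that $\{\psi,\alpha\}_D=0$ because ``$\psi$ is a Casimir-like trace function whose bracket with anything built from $M_1,M_2$ collapses'' is incorrect: only the trace functions of a \emph{single} holonomy $M_i$ are Casimirs of the Fock--Rosly bracket, whereas $\Tr(u_{12})$ with $u_{12}=u_{M_2}u_{M_1}$ has non-vanishing brackets with $j^a_{M_1}$ and $j^a_{M_2}$, and the correction terms $\{\psi,C_i\}(D^{-1})_{ij}\{C_j,\alpha\}$ do not cancel term by term, since $(D^{-1})$ has non-zero off-diagonal blocks. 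The paper proves part~(1) in the opposite order: it first derives $\{\alpha,f\}_D=\sum_{Y} q_\alpha^a(J^{R,Y}_a+J^{L,Y}_a)f$ for $f\in\cif(\restlogr)$, then uses the first-class constraints $C_7,C_8,C_9$ (which force $u_C=\mathds{1}$ and hence $\Tr(u_{12})=\Tr(u_R)$ on $\csurface$) to represent $\psi$ by an extension lying in $\cif(\restlogr)$; since $\Tr(u_R)$ is invariant under the diagonal conjugation generated by $\sum_Y(J^{R,Y}_a+J^{L,Y}_a)$, the bracket vanishes. Your ordering (prove (1) first, by collapse) cannot be carried out without this input from part~(2).

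The second gap concerns $\bq_\psi\wedge\bq_\alpha=0$, which you assert ``should be forced by conditions a and b'' without giving a mechanism; the attribution is also wrong, since this identity comes from the \emph{first-class} constraints rather than from the shape of the gauge fixing conditions. The paper's argument is a stabiliser argument: the constraints $C_1,C_2,C_3$ give $\bj_{12}=-\Ad(u_{12}^{-1})\bj_R$ on $\csurface$; because $\bj_{12}\oncsurface$ is a function of $(\psi,\alpha)$ alone and $\{\psi,\alpha\}_D=0$, one obtains $0=\{\psi,\bj_{12}\}_D=-\idadi{12}\,\bq_\psi$, so $\exp(q_\psi^aJ_a)$ stabilises $u_{12}$; analogously, $\{\alpha,\bar g\}_D=0$ for all functions $\bar g$ of $u_R$ forces $\exp(q_\alpha^aJ_a)$ to stabilise $u_{12}$ as well. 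Both vectors therefore lie in the Lie algebra of the stabiliser of $u_{12}$ in $\logr$, which is one-dimensional for $u_{12}\neq\mathds{1}$, whence $\bq_\psi\wedge\bq_\alpha=0$. This identity is not decorative: it is exactly what makes $\Span\{q^a_\psi P_a,\,q^a_\alpha J_a+q^a_\theta P_a\}$ abelian, which the subsequent interpretation in terms of the CDYBE and the classification in Section~\ref{sec:dynamic-poincare-trafos} rely on, so a proof that omits it is incomplete.
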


\begin{proof}
  The proof is a direct generalisation of the proof of Theorem 5.1 in
  \cite{Meusburger:2011aa}.
  \begin{enumerate}
  \item The Dirac matrix associated to the constraints
    \eqref{eq:generic-constraints} takes the form
    \begin{equation*}
      D = \begin{pmatrix}J & P \\ -P^T & 0\end{pmatrix} \quad\text{with}\quad
      J \defeq (\{C_i, C_j\})_{i,j=1,\ldots,6}, \;
      P \defeq (\{C_i,C_{j+6}\})_{i,j=1,\ldots,6}.
    \end{equation*}
    On the constraint surface, the $(6 \times 6)$-matrices $J$ and $P$ can be
    expressed as
    \begin{equation*}
      J\oncsurface=\begin{pmatrix}0 & H \\ -H^T & G\end{pmatrix}, \qquad
      P\oncsurface=\begin{pmatrix}0 & A \\ B & C\end{pmatrix},
    \end{equation*}
    with $(3\times 3)$-matrices $A, B, C, G, H$ given by
    \begin{equation*}
      \left.
      \begin{gathered}
        A_{ij} \defeq \{C_i, C_{j+9}\}\oncsurface, \quad
        B_{ij} \defeq \{C_{i+3}, C_{j+6}\}\oncsurface, \quad
        C_{ij} \defeq \{C_{i+3}, C_{j+9}\}\oncsurface, \\
        G_{ij} \defeq \{C_{i+3}, C_{j+3}\}\oncsurface, \quad
        H_{ij} \defeq \{C_i, C_{j+3}\}\oncsurface,
      \end{gathered}
      \;\right\}\; i,j=1,2,3.
    \end{equation*}
    This implies that the inverse of the Dirac matrix $D$ on the constraint
    surface is given by
    \begin{align*}
      &(D\oncsurface)^\inv = \begin{pmatrix}0 & -(P^\inv)^T \\ (P\oncsurface)^\inv & (P\oncsurface)^\inv (J\oncsurface) (P\oncsurface^\inv)^T\end{pmatrix}
      \;\;\text{with}\;\;
      (P\oncsurface)^\inv = \begin{pmatrix}-B^\inv C A^\inv & B^\inv \\ A^\inv & 0\end{pmatrix}, \\
      &(P\oncsurface)^\inv (J\oncsurface) (P\oncsurface^\inv)^T =
          \begin{pmatrix}
            B^\inv\bigl[G-CA^\inv H+(CA^\inv H)^T\bigr](B^\inv)^T & -B^\inv H^T(A^\inv)^T \\
            A^\inv H (B^\inv)^T & 0
          \end{pmatrix}.
    \end{align*}
    Inserting these expression into the general formula in Definition
    \ref{def:dirac-bracket}, one finds that for all $X, Y \in \{M_1, \dots,
    B_g\}$ and $f,g\in\cif(\alllogr)$, the Dirac bracket takes the form
    \begin{subequations}\label{eq:generic-dirac-general}
    \begin{align}
      \{f, g\}_D &= 0, \\
      \begin{split}
        \{j_X^a, f\}_D &= \{j_X^a, f\}\oncsurface + \smashoperator{\sum_{i,j=1}^3} \Bigl[
          \{j_X^a, C_{i+6}\} (B^\inv)_{ij} \{f, C_{j+3}\}
          +\{j_X^a, C_{i+9}\} (A^\inv)_{ij} \{f, C_j\} \\[-0.6em] &\hspace{8em}
          -\{j_X^a, C_{i+6}\} (B^\inv C A^\inv)_{ij} \{f, C_j\}
        \Bigr]\oncsurface,
      \end{split} \raisetag{3.8em}\\
      \begin{split}
        \{j^a_X, j^b_Y\}_D &= \{j^a_X,j^b_Y\}\oncsurface
                        + \sum_{i=1}^6\sum_{j=7}^{12} \{j^a_X, C_i\}\oncsurface (D\oncsurface^\inv)_{ij} \{j^b_Y, C_j\}\oncsurface \\[-0.8em]
          + &\sum_{i=7}^{12}\sum_{j=1}^6 \{j^a_X, C_i\}\oncsurface (D\oncsurface^\inv)_{ij} \{j^b_Y, C_j\}\oncsurface
          + \sum_{i=7}^{12}\sum_{j=7}^{12} \{j^a_X, C_i\}\oncsurface (D\oncsurface^\inv)_{ij} \{j^b_Y, C_j\}\oncsurface.
      \end{split}\raisetag{5.0em}
    \end{align}
    \end{subequations}

  \item To prove the relations for the brackets involving $\psi$ and $\alpha$,
    we use \eqref{eq:generic-dirac-general} to compute the Dirac brackets of
    $j_{M_1}$, $j_{M_2}$ and functions $g\in\cif(\logr\times \logr)$ of the
    variables $u_{M_1}$, $u_{M_2}$ with functions $f\in\cif(\restlogr)$ of the
    variables $M_3,\dots,B_g$.  It follows directly from the block form of
    $(D\oncsurface)^\inv$ that $\{g, f\}_D = 0$ and hence $\{\psi, f\}_D = 0$
    by \eqref{eq:psi-alpha-general}.  For $X\in\{M_3,\dots,B_g\}$, we have
    $\{j^a_X, C_{i+9}\} = 0$ for all $i\in\{1,2,3\}$ and thus
    \begin{equation*}
      \{j_{X}^a, g\}_D = -\tidadi{X}{^a_c} \smashoperator{\sum_{j,k=1}^3} \tensor{W}{^c_{k-1}} \bigl[
          (B^{-1})_{kj} \{g, C_{j+3}\} - (B^{-1} C A^{-1})_{kj} \{g, C_j\}
        \bigr]\oncsurface,
    \end{equation*}
    where $W: \RR \to \Mat(3, \RR)$ is a function of the variable $\psi$ from
    \eqref{eq:psi-alpha-general} defined by the condition $\{j_X^a, C_{i+6}\} =
    -\tidadi{X}{^a_b} \tensor{W}{^b_{i-1}}$ for all $i = 1, 2, 3$,
    $X\in\{M_3,\dots,B_g\}$.  From equations \eqref{eq:generic-constraints} for
    the constraints and the definition of the matrices $A,B,C$ it follows that
    the right-hand side of this equation can be expressed as a function of
    $\psi$ and the fixed parameters that characterise the conjugacy classes
    $\grporbit[1], \grporbit[2]$. This implies that there is a map $\bq_\psi:
    \RR^2 \to \RR^3$ with $\partial_\alpha\bq_\psi=0$ such that
    \begin{equation}\label{eq:psijX}
      \{\psi, \bj_X\}_D = -\idadi{X} \bq_\psi.
    \end{equation}
    Similarly, we obtain for $i\in\{1,2\}$ and functions
    $f\in\cif(\restlogr)$:
    \begin{align*}
      &\{j_{M_i}^a, f\}_D= \smashoperator{\sum_{Y\in\{M_3,\dots,B_g\}}} (J^{R,c}_Yf+J^{L,c}_Yf) \Bigl\{
        -\tidadi{M_i}{^a_c}
        +\smashoperator{\sum_{k,j=1}^3} \Bigl[\{j_{M_i}^a, C_{k+9}\} (A^\inv)_{kj} \delta^{j-1}_c \\[-1em]
       &+\{j_{M_i}^a, C_{k+6}\} (B^\inv)_{kj} \sum_{l=1}^2 \ThetaFam{j}{M_l}{d} \tidadi{M_l}{^d_c}
          - \{j_{M_i}^a, C_{k+6}\} (B^\inv C A^\inv)_{kj} \delta^{j-1}_c
        \Bigr]
      \Bigr\}\oncsurface.
    \end{align*}
    The term inside the curly brackets on the right-hand side again depends on
    $\psi$ only, which shows that there is a map $\bq_\alpha: \RR^2 \to \RR^3$
    with $\partial_\alpha\bq_\alpha=0$ such that
    \begin{equation}\label{eq:alphapX}
      \{\alpha, f\}_D = \smashoperator{\sum_{Y\in\{M_3,\dots,B_g\}}} q_\alpha^a(J^Y_{R,a}+J^Y_{L,a}) f\oncsurface
    \end{equation}
    The remaining brackets which involve $\psi,\alpha$ and the variables
    $j^a_X$, $X\in\{M_3,\dots,B_g\}$, are obtained from the Dirac brackets of
    $j^a_{M_i}$, $i=1,2$, with $j^a_X$:
    \begin{align*}
      & \{j^a_{M_i}, j^b_X\}_D =
        - \tensor{\ee}{^b^c_d} j_X^d \Bigl[
                          -\tidadi{M_i}{^a_c}
                          + \sum_{k=7}^{12}\sum_{j=1}^3 \{j^a_{M_i}, C_k\} (D^\inv)_{kj} \delta^{j-1}_c \\[-1em]
           &\hspace{10em} + \sum_{k=7}^{12}\sum_{j=4}^6 \{j^a_{M_i}, C_k\} (D^\inv)_{kj} \sum_{l=1}^2 \ThetaFam{j-4}{M_l}{e} \tidadi{M_l}{^e_c}
        \Bigr]\oncsurface\\[-.5em]
      & - \tidadi{X}{^b_c} \Bigl[
            \sum_{k=1}^6\sum_{j=7}^{9} \{j^a_{M_i}, C_k\} (D^\inv)_{kj} \tensor{W}{^c_{j-7}}
             - \sum_{k=7}^{12}\sum_{j=7}^{9} \{j^a_{M_i}, C_k\} (D^\inv)_{kj} \tensor{W}{^c_{j-7}}
          \Bigr]\oncsurface.
    \end{align*}
    The term in the second set of square brackets depends on $\psi$ and
    $\alpha$ while the term in the first set of square brackets coincides with
    the term in the curly brackets in the expression for $\{j^a_{M_i},f\}_D$.
    This implies that there is a map $\bq_\theta: \RR^2 \to \RR^3$,
    $\partial_\alpha^2\bq_\theta=0$ such that for all $X\in\{M_3,\dots,B_g\}$:
    \begin{equation*}
      \{\alpha, \bj_X\}_D = -\idadi{X}\bq_\theta - \bq_\alpha \wedge \bj_X.
    \end{equation*}

    It remains to show that $\{\alpha, \psi\}_D = 0$ and that $\bq_\psi\wedge
    \bq_\alpha = 0$.  With the definitions
    \begin{equation*}
      \begin{aligned}
        (u_{12}^\inv, \bj_{12}) &\defeq M_1^\inv \cdot M_2^\inv, \\
        (u_R^\inv, \bj_R) &\defeq M_3^\inv\cdots M_n^\inv[A_1^\inv,B_1]\cdots [A_g^\inv,B_g],
      \end{aligned}
    \end{equation*}
    the constraints $C_7,C_8,C_9$ imply $\Tr(u_{12})=\Tr(u_R)$.  From the Dirac
    brackets \eqref{eq:alphapX} of $\alpha$ with functions
    $f\in\cif(\restlogr)$ of the holonomies $M_3,\dots,B_g$, it follows that
    the Dirac bracket of $\alpha$ and $\psi$ vanishes.  Moreover, the
    constraint functions $C_1,C_2,C_3$ imply $\bj_{12} = -\Ad(u_{12}^\inv)
    \bj_R$ on $\csurface$ and it follows from \eqref{eq:psijX} that
    \begin{equation}\label{eq:psi-j12}
      0 = \{\psi, \bj_{12}\}_D = -\Ad(u_{12}^\inv) \{\psi, \bj_R\}_D = -\idadi{12} \bq_\psi.
    \end{equation}
    For each function $g\in\cif(\logr)$, we have two associated functions
    $g_{\RR^2}\in \cif(\RR^2)$, $g_{\RR^2}(\psi,\alpha)\defeq g(u_{12}^\inv)$
    and $\bar g\in\cif(\restlogr)$, $\bar g(u_{M_3},\dots,u_{B_g})\defeq
    g(u_R)$. With the identity $\{\psi,\alpha\}_D= 0$, we obtain
    \begin{equation*}
      0 = \{\alpha,  g_{\RR^2}\}_D= \{\alpha, \bar g\}_D= \smashoperator{\sum_{Y\in\{M_3,\dots,B_g\}}} q^a_\alpha (J_{R,a}^Y+J_{L,a}^Y)\bar g.
    \end{equation*}
    Together with \eqref{eq:psi-j12}, this implies that both,
    $\exp(q_\alpha^aJ_a)$ and $\exp(q_\psi^a J_a)$, stabilise $u_{12}$ and
    hence $\bq_\psi \wedge \bq_\alpha = 0$.

  \item To prove the second part of the theorem, we explicitly compute the
    Dirac brackets of the variables $\bj_X$ for $X \in \{M_3, \dots, B_g\}$ and
    functions $f\in\cif(\restlogr)$ from expressions
    \eqref{eq:generic-dirac-general}.  To determine the brackets $\{\bj_X,
    f\}_D$, we note that $\{j^a_X, C_{i+9}\} = 0$ for all $i\in\{1,2,3\}$,
    which implies
    \begin{align}\label{eq:j-f-bracket}
      &\{j_X^a, f\}_D = \{j_X^a, f\}\oncsurface - \tidadi{X}{^a^e} \, V_{ed} \smashoperator{\sum_{Y\in\{M_3,\dots,B_g\}}} (J^{R,d}_Y+J^{L,d}_Y) f\oncsurface
        \quad\text{with} \\[-.3em]
      &V_{ed} \defeq
            \tensor{W}{_e^f}(B^\inv C A^\inv)_{f+1,d+1}\oncsurface
            - \tensor{W}{_e^f} \sum_{j=1}^3 (B^\inv)_{f+1,j} \smashoperator{\sum_{i =1}^2}
                \ThetaFam{j}{M_i}{a} \tidad{M_i}{_d^a}\oncsurface.\nonumber
    \end{align}
    As none of the terms in the expression for $V$ depend on $\alpha$, it gives
    rise to a map $V:\RR^2\to \Mat(3, \RR)$ that satisfies $\partial_\alpha
    V=0$.  Similarly, we obtain
    \begin{align}\label{eq:j-j-bracket}
      \{j^a_X, j^b_Y\}_D &= \{j^a_X,j^b_Y\}\oncsurface
         + \idadi{X}^{ad} \, V_{dg} \, \tensor{\ee}{^g^b_f} j_Y^f \oncsurface\nonumber
         - \idadi{Y}^{bd} \, V_{dg} \, \tensor{\ee}{^g^a_f} j_X^f\oncsurface \\ &\hspace{5em}
         + \idadi{X}^{ac} \idadi{Y}^{bd} U_{cd}\oncsurface
    \end{align}
    with $U_{cd} \defeq \tensor{W}{_c^e} \tensor{W}{_d^f} (D^\inv)_{e+7,f+7}$
    for all $c,d\in\{0,1,2\}$. The matrix $U$ depends only on the parameters
    $\psi$ and $\alpha$, and its dependence on $\alpha$ is at most
    linear. Moreover, it follows directly from the definition of the matrix $D$
    that $U$ is antisymmetric. This allows us to expand $U$ in a basis:
    $U^{ab}=\ee^{abc}m_c$ with $\bm:\RR^2\to \RR^3$, $\partial_\alpha^2\bm=0$.

  \item By inserting the expressions \eqref{eq:vector-fields},
    \eqref{eq:action-of-vector-fields} for the left- and right-invariant vector
    fields on $\pogr$ into the Poisson bivector \eqref{eq:fr-bivector} together
    with the expression for $r(\psi,\alpha)$, one obtains after some
    computations expressions \eqref{eq:j-f-bracket},
    \eqref{eq:j-j-bracket}. This proves the claim.
  \end{enumerate}
\end{proof}

Theorem \ref{thm:generic-dirac-bracket} gives explicit expressions for the
Dirac bracket for a rather general set of gauge fixing conditions. This
generalises the results from \cite{Meusburger:2011aa}, which investigates
specific gauge fixing conditions of this type.  Given the fact that the Dirac
bracket is obtained from six first-class constraints with six associated gauge
fixing conditions and hence involves inverting a $(12\times 12)$-Dirac matrix,
its structure is surprisingly simple. This is partly due to the restriction
that the gauge fixing conditions are adapted to the tangent bundle structure of
$\pogr=T\logr$.

\subsection{The Dirac bracket and the classical dynamical Yang-Baxter equation}

The Dirac bracket in Theorem \ref{thm:generic-dirac-bracket} defines a Poisson
structure on the constraint surface $\csurface= C^\inv(0)$ which can be
identified with a subset of $\RR^2\times\restpogr$. However, this
identification is implicit, and it is cumbersome to give an explicit
parametrisation of this subset for general gauge fixing conditions. For this
reason, we consider in the following the bracket on $\RR^2\times\restpogr$
defined by the expressions in Theorem \ref{thm:generic-dirac-bracket}.

\begin{definition}\label{def:extended-dirac-bracket}
We denote by $\{\,,\,\}_D$ the antisymmetric bilinear function $\{\,,\,\}_D:
C^\infty(\RR^2\times \restpogr)\times C^\infty(\RR^2\times \restpogr)\to
C^\infty (\RR^2\times \restpogr)$ that takes the form described in Theorem
\ref{thm:generic-dirac-bracket}. With $\RR^2$ parametrised by $\psi,\alpha$ and
the different copies of $\pogr$ labelled by
$\{M_3,\dots,M_n,A_1,B_1,\dots,A_g,B_g\}$, this bracket is given by:
\begin{enumerate}
\item $\{\psi,\alpha\}_D=0$, and for all $f\in\cif(\restlogr)$,
  $X,Y\in\{M_3,\dots,B_g\}$:
  \begin{equation}\label{eq:extended-dirac-bracket-for-gauge-fixed}
    \begin{aligned}
      \{\psi, f \}_D &= 0, &
      \{\psi, \bj_X \}_D &= -\idadi{X} \, \bq_\psi, \\
      \{\alpha, f\}_D &= \smashoperator{\sum_{Y\in\{M_3,\dots,B_g\}}} q_\alpha^a(J_{R,a}^Y+J_{L,a}^Y)f, &
      \{\alpha, \bj_X\}_D &= -\idadi{X} \bq_\theta - \bq_\alpha \wedge \bj_X,
    \end{aligned}
  \end{equation}
  with $\bq_\psi, \bq_\alpha,\bq_\theta: \RR^2 \to \RR^3$ satisfying $\bq_\psi
  \wedge \bq_\alpha = 0$ and
  $\partial_\alpha\bq_\psi=\partial_\alpha\bq_\alpha=\partial_\alpha^2
  \bq_\theta= 0$.

\item For all functions $F, G \in \cif(\restpogr)$: $\{F, G\}_D =
  \frrestbivector(\diffd F \otimes \diffd G)$, where $\frrestbivector$ is the
  Poisson bivector \eqref{eq:fr-bivector} and $r: \RR^2 \to \poal \oo \poal$ is
  given by
  \begin{equation}\label{eq:extended-dirac-r}
    r (\psi,\alpha)= P_a \otimes J^a - V^{bc}(\psi)(P_b \otimes J_c - J_c \otimes P_b) + \ee^{bcd}m_d(\psi,\alpha) P_b \otimes P_c,
  \end{equation}
  with a map $V:\RR \to \Mat(3, \RR)$ that does not depend on $\alpha$ and a
  vector-valued function $\bm:\RR^2\to \RR^3$ satisfying
  $\partial_\alpha^2\bm=0$.
\end{enumerate}
\end{definition}

This bracket has a particularly simple structure. The two variables $\psi$ and
$\alpha$ Poisson-commute, and their Dirac brackets with functions on
$\restpogr$ are given by three functions
$\bq_\psi,\bq_\alpha,\bq_\theta:\RR^2\to\RR^3$. The Dirac bracket of two
functions on $\restpogr$ is again given by the Poisson bivector
\eqref{eq:fr-bivector}. The only difference is that the classical $r$-matrix
$r=P_a\oo J^a$ in the Poisson bivector $\frrestbivector$ is now replaced by the
map $r:\RR^2 \to \poal \oo \poal$ that depends on the variables $\psi$ and
$\alpha$.

Note that it is a priori not guaranteed that the bracket $\{\,,\,\}_D$ on
$\RR^2\times\restpogr$ satisfies the Jacobi identity. The Dirac gauge fixing
formalism only guarantees that this is the case on the constraint surface
$\csurface=C^\inv(0)\subset \RR^2\times\restpogr$. Moreover, it is natural to
ask how the Jacobi identity is encoded in the structures that characterise the
bracket in Definition \ref{def:extended-dirac-bracket}: the map $r:\RR^2\to
\poal \oo \poal$ and the vector-valued functions $\bq_\psi,
\bq_\alpha,\bq_\theta: \RR^2 \to \RR^3$.  As the classical Yang-Baxter equation
for the $r$-matrix in the Poisson bivector \eqref{eq:fr-bivector} ensures that
the associated bracket satisfies the Jacobi identity, it is natural to expect
that the Jacobi identity for the Dirac bracket follows from an analogous
property of the map $r$. This suggests that $r$ should be related to solutions
of the classical {\em dynamical} Yang-Baxter equation and hence to classical
dynamical $r$-matrices for the Lie algebra $\poal$.

This intuition is also supported by the fact that Fock and Rosly's Poisson
structure is related to certain Poisson structures from the theory of
Poisson-Lie groups \cite{Fock:1998aa, Alekseev:1995ab}.  It is shown in
\cite{Feher:2001aa, Feher:2004aa} that Dirac gauge fixing in the context of
Poisson-Lie groupoids is linked to classical dynamical $r$-matrices.  Note,
however, that our case is more involved. While the references
\cite{Feher:2001aa, Feher:2004aa} consider a gauge fixing procedure for a
generalisation of the Sklyanin bracket in the context of Poisson-Lie groupoids,
our Poisson structure involves several copies of the dual Poisson-Lie structure
and the Heisenberg double Poisson structure which interact in a non-trivial
way. Moreover, the gauge fixing conditions we consider are associated with two
punctures and hence with two non-Poisson-commuting dual Poisson-Lie structures
whose Poisson brackets with the remaining punctures and handles do not
vanish. Nevertheless, it is natural to expect that our gauge fixing procedure
should be related to solutions of the classical dynamical Yang-Baxter equation.

The concept of a classical dynamical $r$-matrix generalises the notion of
classical $r$-matrices $r\in\ag\oo\ag$ for a Lie algebra $\ag$ to maps $r: U\to
\ag\oo\ag$ that depend non-trivially on variables in $U$. The domain $U$ is an
open subset of the dual $\ah^*$ of an abelian Lie subalgebra $\ah\subset\ag$,
and the map $r$ is required to be invariant under the action of $\ah$.  Instead
of the classical Yang-Baxter equation (CYBE), the classical dynamical
$r$-matrix is required to satisfy the classical dynamical Yang-Baxter equation
(CDYBE).  The latter is obtained by replacing the right-hand side of the CYBE
by a term that contains the derivatives of $r$ with respect to the coordinates
on $U$.

\begin{definition}[\cite{Etingof:1998aa}]\label{def:dyn-r-matrix}
  Let $\ag$ be a finite-dimensional Lie algebra, $\ah \subset \ag$ an abelian
  Lie subalgebra, and $U \subset \ah^*$ an open subset.  A \definee{classical
    dynamical $r$-matrix} for $(\ag, \ah, U)$ is an $\ah$-invariant,
  meromorphic function $r: U \to \ag \otimes \ag$ that satisfies the
  \definee{classical dynamical Yang-Baxter equation} (CDYBE):
  \begin{equation}\label{eq:dcybe}
    [[r,r]] \defeq [r_{12}, r_{13}] + [r_{12}, r_{23}] + [r_{13}, r_{23}] =
      \sum_{i=1}^{\text{dim}\,\ah} \left(
        x_i^{(1)}\partial_{x^i}\,r_{23} - x_i^{(2)}\partial_{x^i}\,r_{13} + x_i^{(3)}\partial_{x^i}\,r_{12}
      \right),
  \end{equation}
  where $\{x_i\}_{i=1,\dots,\text{dim}\, \ah}$ is a basis of $\ah$ and
  $\{x^i\}_{i=1,\dots,\text{dim}\, \ah}$ the associated dual basis of $\ah^*$.
\end{definition}

In the following, we only require the case where $\ag = \poal$ and $\ah$ is a
two-dimensional abelian Lie subalgebra of $\ag$.  We thus identify $\ah^*$ with
$\RR^2$ and parametrise it by two variables $x^1=\psi$, $x^2=\alpha$.
Moreover, we temporarily drop the requirements that the elements $x_1,x_2$ in
the CDYBE form a fixed basis of $\ah\subset\poal$ and that $r$ is invariant
under the action of $\ah$.

Instead, we investigate solutions of the CDYBE \eqref{eq:dcybe} associated with
maps $x_1,x_2:\RR^2\to\poal$ of the form $x_1=q_\psi^aP_a$ and
$x_2=q_\alpha^aJ_a+q_\theta^aP_a$ with $\bq_\psi,\bq_\alpha:\RR\to\RR^3$,
$\bq_\theta:\RR^2\to\RR^3$ satisfying $\bq_\psi\wedge\bq_\alpha=0$. Note that
this implies that $\ah(\psi,\alpha)=\Span\{x_1(\psi,\alpha),
x_2(\psi,\alpha)\}$ is a two-dimensional abelian Lie subalgebra of $\poal$ for
all values of $\psi$ and $\alpha$.  It is a Cartan subalgebra if and only if
$\bq_\psi, \bq_\alpha$ satisfy the additional requirement $\bq_\alpha^2,
\bq_\psi^2\neq 0$.  We do not assume that $r(\psi,\alpha)$ is invariant under
the subalgebra $\ah(\psi,\alpha)$.

Although such solutions of the CDYBE \eqref{eq:dcybe} do not correspond to
classical dynamical $r$-matrices in the sense of Definition
\ref{def:dyn-r-matrix}, admitting such generalised solutions allows us to apply
the CDYBE to the maps $r$ and $\bq_\psi,\bq_\alpha,\bq_\theta$ in Definition
\ref{def:extended-dirac-bracket} and to determine under which conditions they
give rise to a solution of the CDYBE.  By comparing these conditions to the
requirement that the bracket $\{\,,\}_D$ in Definition
\ref{def:extended-dirac-bracket} satisfies the Jacobi identity, we obtain the
following theorem.

\begin{theorem}\label{thm:jacobi-for-extended-dirac}
  The bracket $\{\,,\,\}_D$ in Definition \ref{def:extended-dirac-bracket}
  satisfies the Jacobi identity and hence defines a Poisson structure on
  $\restpogr$ if and only if:
  \begin{compactenum}
  \item The map $r:\RR^2\to \poal\oo\poal$ in \eqref{eq:extended-dirac-r}
    satisfies the CDYBE with $x^1=\psi$, $x^2=\alpha$ and $x_1=q_\psi^aP_a$,
    $x_2=q_\alpha^aJ_a+q_\theta^a P_a$.
  \item The following additional conditions hold:
    \begin{equation}\label{eq:q-relations}
      \left.
      \begin{aligned}
        0 &= q_\psi^a + \tensor{\ee}{^a_b_c} q_\psi^b \partial_\psi q_\psi^c + q_\psi^b \tensor{V}{_b^a} - q_\psi^a \tensor{V}{^b_b}, \\
        0 &= \tensor{\ee}{^a_d_h} q_\alpha^d V^{bh} + \tensor{\ee}{^b_d_h} q_\alpha^d V^{ah} + \tensor{\ee}{_c_d_e} q_\alpha^c V^{de} \eta^{ab} - \tensor{\ee}{^b_d_e} q_\alpha^a V^{de} + q_\alpha^a \partial_\alpha q_\theta^b - q_\psi^b \partial_\psi q_\alpha^a, \\
        0 &= q_\theta^a + \tensor{\ee}{^a_b_c} q_\theta^b \partial_\alpha q_\theta^c + \tensor{\ee}{^a_b_c} q_\psi^b \partial_\psi q_\theta^c - \tensor{\ee}{^a_b_c} m^b q_\alpha^c + q_\theta^d \tensor{V}{_d^a} - q_\theta^a \tensor{V}{_d^d}.
      \end{aligned}
      \quad\right\}
    \end{equation}
  \end{compactenum}
\end{theorem}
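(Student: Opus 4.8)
The plan is to verify the Jacobi identity on a generating set of functions and to sort the resulting conditions into those equivalent to the CDYBE and those equivalent to \eqref{eq:q-relations}. Since $\{\,,\,\}_D$ is a biderivation, its Jacobiator is again a derivation in each of its three arguments; it therefore suffices to check that it vanishes on all triples drawn from the generating set consisting of $\psi$, $\alpha$, functions $f\in\cif(\restlogr)$, and the coordinate functions $\bj_X$ for $X\in\{M_3,\dots,B_g\}$. I would organise the computation according to how many of the three arguments lie in $\{\psi,\alpha\}$.

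First I would treat the Jacobiators of three $\restpogr$-generators, that is, triples built from the $f\in\cif(\restlogr)$ and the $\bj_X$. On $\restpogr$ the bracket is the Fock--Rosly bivector $\frrestbivector$ with the $(\psi,\alpha)$-dependent $r$-matrix \eqref{eq:extended-dirac-r}, so the Jacobiator splits into an algebraic part and a dynamical part. The algebraic part is the usual Fock--Rosly Jacobiator which, exactly as in \cite{Fock:1998aa}, collapses onto the Schouten bracket $[[r,r]]$ evaluated on the three tensor legs. The dynamical part arises solely because $r$ now depends on $\psi,\alpha$: differentiating through the brackets $\{\psi,\bj_X\}_D=-\idadi{X}\bq_\psi$ and $\{\alpha,\bj_X\}_D=-\idadi{X}\bq_\theta-\bq_\alpha\wedge\bj_X$ (and $\{\alpha,f\}_D=\sum_Y q_\alpha^a(J_{R,a}^Y+J_{L,a}^Y)f$) produces terms proportional to $\partial_\psi r$ and $\partial_\alpha r$ acting on a single tensor leg. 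The decisive step is to show that, with $x_1=q_\psi^aP_a$ and $x_2=q_\alpha^aJ_a+q_\theta^aP_a$, these single-leg derivative terms assemble precisely into the right-hand side $\sum_i\bigl(x_i^{(1)}\partial_{x^i}r_{23}-x_i^{(2)}\partial_{x^i}r_{13}+x_i^{(3)}\partial_{x^i}r_{12}\bigr)$ of the CDYBE \eqref{eq:dcybe}. Here the splitting of $x_2$ into a Lorentzian part $q_\alpha^aJ_a$ and a translational part $q_\theta^aP_a$ matches neatly the two sources of $\alpha$-brackets, namely the action on $\cif(\restlogr)$ and on the $\bj_X$, which is a useful internal consistency check. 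Once the legs and signs are matched, the vanishing of all such Jacobiators is equivalent to statement~1.

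Next I would compute the mixed Jacobiators containing exactly one of $\psi,\alpha$ together with two $\restpogr$-generators, for instance those of $(\psi,\bj_X,\bj_Y)$, $(\alpha,\bj_X,\bj_Y)$ and $(\alpha,f,\bj_X)$. Inserting the explicit brackets of Definition~\ref{def:extended-dirac-bracket}, the structure constants \eqref{eq:poincare-algebra-bracket} and the action \eqref{eq:action-of-vector-fields} of the invariant vector fields, each such Jacobiator reduces to an $\RR^3$-valued identity in $\bq_\psi,\bq_\alpha,\bq_\theta,V,\bm$ and their $\psi,\alpha$-derivatives. I expect these to reproduce the three relations \eqref{eq:q-relations} one by one. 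Finally, the Jacobiators that contain two of $\{\psi,\alpha\}$, namely those of $(\psi,\alpha,f)$ and $(\psi,\alpha,\bj_X)$, must be shown to vanish identically; here I would use only the structural constraints already built into Definition~\ref{def:extended-dirac-bracket}, i.e.\ $\{\psi,\alpha\}_D=0$, $\bq_\psi\wedge\bq_\alpha=0$ and $\partial_\alpha\bq_\psi=\partial_\alpha\bq_\alpha=\partial_\alpha^2\bq_\theta=\partial_\alpha^2\bm=0$, so that they yield no further conditions.

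The main obstacle I anticipate lies in the second paragraph: correctly identifying which derivative term lands on which tensor leg of $r$ and recovering the exact coefficients of the CDYBE, given that the dynamical variables $x_1,x_2$ are themselves $(\psi,\alpha)$-dependent maps into $\poal$ rather than a fixed basis of a Cartan subalgebra. A secondary difficulty is guaranteeing that the classification is exhaustive and that the mixed and doubly-$\{\psi,\alpha\}$ computations together account for exactly \eqref{eq:q-relations} with no leftover constraints; I would settle this by verifying that the count of independent component equations produced by the Jacobiators matches the number of equations in statement~1 and statement~2, thereby establishing both directions of the ``if and only if'' simultaneously.
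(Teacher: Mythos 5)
Your proposal follows essentially the same route as the paper's proof: reduce the Jacobi identity to triples of generators ($\psi$, $\alpha$, $f\in\cif(\restlogr)$, $\bj_X$), sort the cases by how many arguments lie in $\{\psi,\alpha\}$, and identify the purely $\restpogr$-valued Jacobiators with the CDYBE (the paper does this via the intermediate component form \eqref{eq:dcybe-with-UV}, where the Schouten-bracket part and the single-leg derivative terms you describe appear as $\Upsilon$ and $\Omega$) and the mixed Jacobiators with the conditions \eqref{eq:q-relations}. Your observation that the doubly-mixed triples $(\psi,\alpha,f)$ and $(\psi,\alpha,\bj_X)$ vanish identically by the structural constraints (in particular $\bq_\psi\wedge\bq_\alpha=0$) is consistent with the paper, which simply folds the bracket $\{\psi,\{\alpha,j^a_X\}_D\}_D$ into its case of mixed brackets without extracting a further condition from it.
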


\begin{proof} $\quad$
\begin{enumerate}
\item As a first step, we show that a map $r: \RR^2 \to \poal \otimes \poal$ of
  the form \eqref{eq:extended-dirac-r} is a solution of the CDYBE with
  $x^1=\psi$, $x^2=\alpha$, $x_1=q_\psi^a P_a:\RR\to\RR^3$, $x_2=q_\alpha^a
  J_a+q_\theta^a P_a:\RR^2\to\RR^3$ if and only if it satisfies the equations
  \begin{equation}\label{eq:dcybe-with-UV}
    \left.
    \begin{aligned}
      0 &= \Upsilon^{abc} \defeq q_\alpha^a \ee ^{bcd} \partial_\alpha m_d - q_\psi^b \partial_\psi V^{ca} + q_\psi^c \partial_\psi V^{ba} \\
        &\qquad\qquad\quad - V^{bd}V^{cg}\tensor{\ee}{_d_g^a} - V^{da}V^{cg}\tensor{\ee}{_d_g^b} + V^{da}V^{bg}\tensor{\ee}{_d_g^c} - V^{da} \tensor{\ee}{^b^c_d}, \\
     0 &= \Omega\defeq
       \bq_\psi \cdot \partial_\psi \bm + \bq_\theta \cdot \partial_\alpha \bm +\bw \cdot \bm
       \qquad\text{with}\quad \ee^{abc}w_c=V^{ab}-V^{ba}.
    \end{aligned}
    \qquad\right\}
  \end{equation}
  Inserting expression \eqref{eq:extended-dirac-r} for $r$ into the left-hand
  side of the CDYBE \eqref{eq:dcybe} and using expressions
  \eqref{eq:poincare-algebra-bracket} for the Lie bracket of $\poal$, we obtain
  \begin{equation*}
    \begin{split}
      &[[r, r]] = -\bw \cdot \bm \, \ee^{abc} P_a \otimes P_b \otimes P_c + {} \\
        &\bigl[V^{bd}V^{cg}\tensor{\ee}{_d_g^a} \!+\! V^{da}V^{cg}\tensor{\ee}{_d_g^b}\! -\! V^{da}V^{bg}\tensor{\ee}{_d_g^c} \!+ \!V^{da} \tensor{\ee}{^b^c_d}\bigr]
        (J_a\!\!\otimes\!\!P_b\!\!\otimes\!\!P_c \!-\! P_b\!\!\otimes\!\!J_a\!\!\otimes\!\!P_c \!+\! P_b\!\!\otimes\!\!P_c\!\!\otimes\!\!J_a).
    \end{split}
  \end{equation*}
  Setting $x^1 = \psi$, $x^2 = \alpha$, $x_1 = q_\psi^a P_a$, $x_2 = q_\alpha^a
  J_a + q_\theta^a P_a$ and using $\partial_\alpha V = 0$, we find that the
  right-hand side of the CDYBE is given by:
 \begin{equation*}
    \begin{split}
      &\sum_{i=1}^2 x_i^{(1)} \partial_{x_i} r_{23} -
                    x_i^{(2)} \partial_{x_i} r_{13} +
                    x_i^{(3)} \partial_{x_i} r_{12} =
        (\bq_\psi\cdot\partial_\psi\bm + \bq_\theta\cdot\partial_\alpha\bm)\ee^{abc} P_a \!\otimes\! P_b \!\otimes\! P_c + {} \\
      & \left(q_\alpha^a \ee^{bcd}\partial_\alpha m_d-
              q_\psi^b \partial_\psi V^{ca} +
              q_\psi^c \partial_\psi V^{ba}\right)
        (J_a \otimes P_b \otimes P_c - P_b \otimes J_a \otimes P_c + P_b \otimes P_c \otimes J_a).
    \end{split}
  \end{equation*}
  A comparison of the coefficients in these two expressions then yields
  equations \eqref{eq:dcybe-with-UV}.

\item To determine under which conditions the bracket in Definition
  \ref{def:extended-dirac-bracket} satisfies the Jacobi identity, we consider
  the variables $\psi, \alpha$ , functions $h\in\cif(\restlogr)$ and the
  variables $ j^a_X$ for $X \in \{M_3, \dots, B_g\}$.  The structure of the
  Poisson algebra in Theorem \ref{thm:generic-dirac-bracket} allows us to
  reduce the proof to six cases which are distinguished by the number of
  variables $j_X^a$, $\psi$, $\alpha$ in the brackets.
  \begin{enumerate}
  \item For cyclic sums over brackets of the form $\{h, \{j^b_Y,j^c_Z\}_D\}_D$
    with $Y,Z\in\{M_3,\dots,B_g\}$, we obtain
    \begin{multline*}
      \{h, \{j_Y^b, j_Z^c\}_D\}_D + \{j_Y^b, \{j_Z^c, h\}_D\}_D + \{j_Z^c, \{h, j_Y^b\}_D\}_D \\
      = \tidadi{Y}{^b_d} \tidadi{Z}{^c_e}  \Upsilon^{deg} \smashoperator{\sum_{Y\in\{M_3,\dots,B_g\}}} (R^g_Y+L^g_Y)h,
    \end{multline*}
    where $\Upsilon^{deg}$ is the term in the first equation of
    \eqref{eq:dcybe-with-UV}.  Consequently, it vanishes if $r$ satisfies the
    CDYBE.

  \item For cyclic sums over brackets of the form $\{j_X^a,
    \{j^b_Y,j^c_Z\}_D\}_D$ with $X,Y,Z\in\{M_3,\dots,\allowlinebreakhere
    B_g\}$, we have
    \begin{equation*}
      \begin{split}
        &\{j_X^a, \{j_Y^b, j_Z^c\}_D\}_D + \{j_Y^b, \{j_Z^c, j_X^a\}_D\}_D + \{j_Z^c, \{j_X^a, j_Y^b\}_D\}_D \\
        &\qquad = \tidadi{Y}{^b_f} \tidadi{Z}{^c_e} \, \tensor{\ee}{^a_d_g} \, j_X^d \Upsilon^{efg} \\
        &\qquad + \tidadi{X}{^a_e} \tidadi{Z}{^c_f} \, \tensor{\ee}{^b_d_g} \, j_Y^d \Upsilon^{efg} \\
        &\qquad + \tidadi{X}{^a_f} \tidadi{Y}{^b_e} \, \tensor{\ee}{^c_d_g} \, j_Z^d \Upsilon^{efg} \\
        &\qquad + \tidadi{X}{^a_d} \tidadi{Y}{^b_e} \tidadi{Z}{^c_f} \ee^{def} \Omega,
      \end{split}
    \end{equation*}
    where $\Upsilon^{efg}$ and $\Omega$ are, respectively, the terms in the
    first and second lines of \eqref{eq:dcybe-with-UV}.  This shows that the
    Jacobi identity for brackets of this type is satisfied if and only if $r$
    is a solution of the CDYBE.

  \item The remaining cases involve cyclic sums over brackets of the form
    $\{\psi, \{j^a_X. j^b_Y\}_D\}_D$, $\{\psi,\{\alpha,j^a_X\}_D\}_D$,
    $\{\alpha,\{h,j^b_Y\}_D\}_D$ and $\{\alpha,\{j^a_X,j^b_Y\}_D\}_D$ with
    $X,Y\in\{M_3,\dots,B_g\}$.  A direct calculation along the same lines as in
    cases (a) and (b) shows that the Jacobi identity is satisfied for brackets
    of this type if and only if the identities in \eqref{eq:q-relations} hold.
    \end{enumerate}
  \end{enumerate}
\end{proof}

Theorem \ref{thm:jacobi-for-extended-dirac} gives a direct link between Poisson
structures on $\RR^2\times\restpogr$ of the form in Definition
\ref{def:extended-dirac-bracket} and solutions of the CDYBE. As is apparent in
the proof, the CDYBE is a necessary and sufficient condition which ensures that
the Poisson brackets of functions $F, G\in\cif(\restpogr)$ satisfy the Jacobi
identity for all values of $\psi$ and $\alpha$. The additional conditions
\eqref{eq:q-relations} ensure that the Jacobi identity also holds for mixed
brackets involving the variables $\psi,\alpha$ as well as functions
$F\in\cif(\restpogr)$.  We will show in the next section that these conditions
have a direct geometrical interpretation. They allow one to locally transform a
solution $r:\RR^2\to\poal\oo \poal$ of the CDYBE into a classical dynamical
$r$-matrix in the sense of Definition \ref{def:dyn-r-matrix} that is invariant
under a fixed Cartan subalgebra $\ah\subset\poal$.

\subsection{Examples of solutions}

The conditions \eqref{eq:dcybe-with-UV} that characterise the classical
dynamical Yang-Baxter equation and the supplementary conditions
\eqref{eq:q-relations} in Theorem \ref{thm:jacobi-for-extended-dirac} are quite
complicated.  It is therefore not obvious to determine solutions of these
equations.  In the following, we show that the specific gauge fixing conditions
investigated in \cite{Meusburger:2011aa} give rise to a solution of the CDYBE
that also satisfies the additional conditions \eqref{eq:q-relations} in Theorem
\ref{thm:jacobi-for-extended-dirac}. We also determine a simplified standard
set of solutions of these equations that are classical dynamical $r$-matrices
in the sense of Definition \ref{def:dyn-r-matrix}.

The publication \cite{Meusburger:2011aa} investigates a specific set of gauge
fixing conditions of the type discussed in Section
\ref{subsec:constraints-and-gauge-fixing}, which is motivated by its direct
physical interpretation in the application to the Chern-Simons formulation of
(2+1)-gravity.  These gauge fixing conditions consider the case where the two
gauge-fixed holonomies $M_1,M_2$ are restricted to conjugacy classes
\begin{equation*}
  \grporbit[j]=\{h\cdot \exp(-\mu_j J_0-s_j P_0)\cdot h^\inv \mid h\in \pogr\} \qquad
  \forall j=1,2,
\end{equation*}
with $\mu_1,\mu_2\in(0,2\pi)$, $s_1,s_2\in\RR$.  The resulting Dirac bracket is
determined in \cite{Meusburger:2011aa}. It takes the form of Theorem
\ref{thm:generic-dirac-bracket} and Definition \ref{def:extended-dirac-bracket}
with $r:\RR^2\to\poal\oo\poal$ given by
\begin{equation}\label{eq:special-w-m}
  \left.
  \begin{aligned}
    &r(\psi,\alpha)= \tfrac 1 2 \left(P_a \!\otimes\! J^a\!+\!J^a\!\otimes\! P_a\right) \!-\! \tfrac 1 2 \ee^{abc} w_c(\psi)(P_a \!\otimes\! J_b \!-\! J_b \!\otimes\! P_a) \!+\! \ee^{abc}m_c(\psi,\alpha) P_a \!\otimes\! P_b,\\
    &\bw(\psi)=\cot\tfrac{\mu_1} 2\, \be_0+\cot\tfrac{\mu_1} 2 \coth\psi \, \be_1-\coth\psi \, \be_2,\\
    &\bm(\psi,\alpha)=s_1/(4\sin^2\tfrac{\mu_1} 2)\,\be_0 + s_1\coth\psi/(4\sin^2\tfrac{\mu_1} 2)\,\be_1 + \tfrac12 \alpha \, \partial_\psi \bw(\psi).
  \end{aligned}
  \;\;\right\}
\end{equation}
The associated maps $\bq_\psi,\bq_\alpha,\bq_\theta:\RR^2\to\RR^3$ take the
form
\begin{equation}\label{eq:special-q-psi-alpha}
\left.
\begin{aligned}
  &\bq_\psi(\psi,\alpha)=-\bq_\alpha(\psi,\alpha)=\tfrac 1 2 ({\coth\psi \cot\tfrac{\mu_1}{2}+ \cot\tfrac{\mu_2}{2}/\sinh\psi})\,\be_0+\tfrac 1 2 \cot\tfrac{\mu_1} 2\,\be_1-\tfrac 1 2 \, \be_2,\\
  &\bq_\theta (\psi,\alpha)\!=\!\left[
      \frac{s_1\coth\psi}{4\sin^2\tfrac{\mu_1}{2}} \!+\!
      \frac{s_2}{4\sin^2\tfrac{\mu_2}{2}\sinh\psi} \!-\!
      \frac{\alpha (\cot\frac{\mu_1}{2} \!+\! \cosh\psi\cot\frac{\mu_2}{2})}{2\sinh^2\psi}\!
    \right]\be_0 +
    \frac{s_1}{4\sin^2\tfrac{\mu_1}{2}}\,\be_1.
\end{aligned}
\;\right\}
\end{equation}

We will now show that this defines a solution of the CDYBE which also satisfies
the additional conditions \eqref{eq:q-relations}.

\begin{lemma}\label{lem:dcybe-for-special-r}
  The map $r: \RR^2 \to \poal\oo\poal$ in \eqref{eq:special-w-m} is a solution
  of the CDYBE with $x^1=\psi$, $x^2=\alpha$, $x_1=q_\psi^a P_a$,
  $x_2=q_\alpha^a J_a+q_\theta^a P_a$, $\bq_\psi,\bq_\alpha,\bq_\theta$ as in
  \eqref{eq:special-q-psi-alpha}, and satisfies the additional conditions
  \eqref{eq:q-relations}.  The maps $x_1,x_2:\RR^2\to\poal$ define a Cartan
  subalgebra $\ah(\psi,\alpha)$ for all values of $\psi$ for which
  $\bq_\psi^2(\psi), \bq_\alpha^2(\psi) \neq 0$.
\end{lemma}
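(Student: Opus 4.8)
The plan is to use Theorem~\ref{thm:jacobi-for-extended-dirac} to convert the assertion into a finite set of pointwise identities in the variables $\psi,\alpha$ and then verify these by substitution. Concretely, the first statement (that $r$ solves the CDYBE) is, by the first part of the proof of Theorem~\ref{thm:jacobi-for-extended-dirac}, equivalent to the vanishing of the tensor $\Upsilon^{abc}$ and the scalar $\Omega$ of \eqref{eq:dcybe-with-UV}, while the second statement is precisely \eqref{eq:q-relations}. Before substituting, I would bring \eqref{eq:special-w-m} into the normal form \eqref{eq:extended-dirac-r}: writing $\tfrac12(P_a\oo J^a - J^a\oo P_a)=\tfrac12\eta^{bc}(P_b\oo J_c - J_c\oo P_b)$ and comparing the $P\oo J$ and $P\oo P$ sectors identifies $V^{bc}=\tfrac12\eta^{bc}+\tfrac12\ee^{bce}w_e$ and leaves $\bm$ unchanged. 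One checks $V^{ab}-V^{ba}=\ee^{abc}w_c$, so the vector $\bw$ of \eqref{eq:dcybe-with-UV} coincides with the $\bw$ of \eqref{eq:special-w-m}; moreover $\partial_\alpha V=0$ since $\bw=\bw(\psi)$, and $\bm$ is affine in $\alpha$ with $\partial_\alpha\bm=\tfrac12\partial_\psi\bw$, so the structural hypotheses of Definition~\ref{def:extended-dirac-bracket} hold.

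With this dictionary in hand, the CDYBE reduces to inserting the explicit $\bw(\psi)$ and $\bm(\psi,\alpha)$ into $\Upsilon^{abc}$ and $\Omega$. The purely quadratic-in-$V$ part of $\Upsilon$ is governed by $\bw\wedge\bw$ and the $\bw$-linear terms, the $\partial_\psi V$ part by $\partial_\psi\bw$, and the $q_\alpha^a\ee^{bcd}\partial_\alpha m_d$ term by the coefficient $\tfrac12\partial_\psi\bw$ of $\alpha$ in $\bm$; the cancellations then follow from the elementary identities $\partial_\psi\coth\psi=-1/\sinh^2\psi$ together with the Lorentzian contractions of $\ee$ with $\eta=\diag(1,-1,-1)$. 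The scalar $\Omega=\bq_\psi\cdot\partial_\psi\bm+\bq_\theta\cdot\partial_\alpha\bm+\bw\cdot\bm$ is treated the same way, again using $\partial_\alpha\bm=\tfrac12\partial_\psi\bw$.

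The second statement requires checking the three vector identities \eqref{eq:q-relations}. The key structural simplification is that \eqref{eq:special-q-psi-alpha} gives $\bq_\psi=-\bq_\alpha$, both depending on $\psi$ alone, so that $\partial_\alpha\bq_\psi=\partial_\alpha\bq_\alpha=0$, $\bq_\psi\wedge\bq_\alpha=0$, and $\bq_\theta$ is affine in $\alpha$; substituting these together with $V$ and $\bm$ collapses each equation to an identity among $\cot\tfrac{\mu_i}2$, $\coth\psi$, $1/\sinh\psi$ and their $\psi$-derivatives. I expect the second and third equations of \eqref{eq:q-relations} to be the main obstacle: they are full rank-two, respectively rank-one, tensor equations coupling $V$, $\bq_\theta$, $\bm$ and their derivatives, the $\alpha$-dependence sits entirely inside $\bq_\theta$ and $\bm$, and it is here that raising and lowering indices with $\eta$ and the non-standard wedge are most error-prone. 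Organising the verification by the power of $\alpha$ and then by the parameters $s_1,s_2$ should keep the bookkeeping under control; as a consistency check, \eqref{eq:special-w-m} is known from \cite{Meusburger:2011aa} to arise as a genuine Dirac bracket, which by the computation in the proof of Theorem~\ref{thm:jacobi-for-extended-dirac} already forces these identities to hold on the $(\psi,\alpha)$-values realised on $\csurface$.

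Finally, for the Cartan-subalgebra claim I would argue directly from \eqref{eq:csa}. Since $\bq_\psi=-\bq_\alpha$ are parallel, $\ah(\psi,\alpha)=\Span\{q_\psi^aP_a,\,-q_\psi^aJ_a+q_\theta^aP_a\}$ is a two-dimensional abelian subalgebra. Taking $\bx=-\bq_\psi$, the first generator is a multiple of $x^aP_a$ and the second has $J$-part $x^aJ_a$; adding a suitable multiple of the first generator to the second shifts $\by=\bq_\theta$ by a multiple of $\bq_\psi$, and when $\bq_\psi^2=\bq_\alpha^2\neq0$ this multiple can be chosen so that $\bx\cdot\by=0$, after which a rescaling normalises $\bx^2$ to $\pm1$. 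This exhibits $\ah(\psi,\alpha)$ in the form \eqref{eq:csa}. Conversely, when $\bq_\psi^2=0$ the common direction $\bx$ is lightlike and, as noted after \eqref{eq:csa}, the subalgebra is not self-normalising; hence $\ah(\psi,\alpha)$ is a Cartan subalgebra exactly for the stated values $\bq_\psi^2,\bq_\alpha^2\neq0$.
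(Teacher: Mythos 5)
Your proposal is correct and follows essentially the paper's own route: both reduce the CDYBE claim to the vanishing of $\Upsilon$ and $\Omega$ in \eqref{eq:dcybe-with-UV} via the dictionary $V^{ab}=\tfrac12\eta^{ab}+\tfrac12\ee^{abc}w_c$, verify \eqref{eq:q-relations} by substituting \eqref{eq:special-w-m} and \eqref{eq:special-q-psi-alpha}, and deduce the Cartan-subalgebra statement from $\bq_\psi\wedge\bq_\alpha=0$ together with $\bq_\psi^2\neq 0$. The only cosmetic difference is that the paper first records the simplified intermediate identities \eqref{eq:dcybe-with-simple-UV} and \eqref{eq:q-relations-simple}, valid for any $V$ of this half-metric-plus-antisymmetric form, before substituting, whereas you substitute directly into $\Upsilon$, $\Omega$ and \eqref{eq:q-relations}.
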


\begin{proof} $\quad$
\begin{enumerate}
\item That the maps $x_1,x_2:\RR^2\to\poal$ define a two-dimensional abelian
  Lie subalgebra of $\poal$ follows directly from the condition
  $\bq_\psi\wedge\bq_\alpha=0$. One finds that this Lie subalgebra is a Cartan
  subalgebra for all values of $\psi$ for which $\bq_\psi^2(\psi) \neq 0$.

\item That $r$ solves the CDYBE can be shown with a direct calculation.
  Inserting the expressions \eqref{eq:special-w-m} for $\bw,\bm$ and
  expressions \eqref{eq:special-q-psi-alpha} for $\bq_\psi,\bq_\theta$ into the
  left-hand-side of the second equation in \eqref{eq:dcybe-with-UV}, one finds
  after some computations that this expression vanishes.  To verify that the
  first set of equations in \eqref{eq:dcybe-with-UV} is satisfied, we note that
  for maps $V:\RR\rightarrow\Mat(3, \RR)$ of the form $V^{ab}(\psi) = \frac12
  \eta^{ab} + \frac12 \tensor{\ee}{^a^b_c} w^c(\psi)$ with $\bw: \RR \to
  \RR^3$, the first line of \eqref{eq:dcybe-with-UV} is equivalent to the
  following conditions
  \begin{equation}\label{eq:dcybe-with-simple-UV}
    1+\bw^2+2\bq_\psi\cdot \partial_\psi \bw=0,\qquad
    \partial_\psi\bw\wedge \partial_\alpha\bm=0,\qquad
    \partial_\alpha m^a q_\alpha^b=-\tfrac 1 2 \partial_\psi w^a q_\psi^b.
  \end{equation}
  Setting $\bq_\alpha=-\bq_\psi$ and inserting expressions
  \eqref{eq:special-w-m} for $\bw,\bm$ together with expression
  \eqref{eq:special-q-psi-alpha} for $\bq_\psi$ into
  \eqref{eq:dcybe-with-simple-UV}, one verifies the first condition in
  \eqref{eq:dcybe-with-UV}.

\item To show that $r$ and $\bq_\psi,\bq_\alpha,\bq_\theta$ satisfy the
  additional conditions \eqref{eq:q-relations}, we note that for matrices $V$
  of the form $V^{ab}(\psi)=\tfrac 1 2 \eta^{ab}+\tfrac 1 2 \ee^{abc}
  w^c(\psi)$, these conditions reduce to the following set of equations
  \begin{equation}
  \left.
  \begin{aligned}\label{eq:q-relations-simple}
    0&=\bq_\psi\wedge  (\partial_\psi\bq_\psi-\tfrac 1 2 \bw),\\
    0&=\tfrac 1 2 (q_\alpha^bw^a-q_\alpha^aw^b)+q^a_\alpha\partial_\alpha q^b_\theta-q^b_\psi\partial_\psi q^a_\alpha\quad\forall a,b\in\{0,1,2\},\\
    0&=\bq_\theta\wedge(\partial_\alpha\bq_\theta-\tfrac 1 2 \bw)+\bq_\psi\wedge \partial_\psi \bq_\theta+\bq_\alpha\wedge\bm.
  \end{aligned}
  \qquad\right\}
  \end{equation}
  Inserting expressions \eqref{eq:special-w-m} for $\bw,\bm$ and expressions
  \eqref{eq:special-q-psi-alpha} for $\bq_\psi,\bq_\alpha,\bq_\theta$ into
  these equations, one finds that they are indeed satisfied.
\end{enumerate}
\end{proof}

Note that the resulting solution of the CDYBE in \cite{Meusburger:2011aa} is
{\em not} a classical dynamical $r$-matrix in the sense of Definition
\ref{def:dyn-r-matrix}. While Definition \ref{def:dyn-r-matrix} requires the
choice of an abelian Lie subalgebra $\ah\subset \poal$ and an identification of
the two variables in the solution with its dual, the abelian Lie subalgebra
$\ah(\psi,\alpha)=\Span\{ q_\psi^aP_a, q_\alpha^aJ_a+q_\theta^aP_a\}$
associated with the above solution varies with $\psi$ and $\alpha$.  A direct
calculation shows that depending on the value of $\psi$, the Lie subalgebra
$\ah(\psi,\alpha)$ is conjugate either to the Cartan subalgebra
$\ah_a=\Span\{J_0,P_0\}$ for $\bq_\psi^2(\psi)>0$, to the Cartan subalgebra
$\ah_b=\Span\{J_1,P_1\}$ for $\bq_\psi^2(\psi)<0$ or to the two-dimensional Lie
subalgebra $\ah_c=\Span\{J_0+J_1,P_0+P_1\}$ for $\bq_\psi^2(\psi)=0$.

The solution therefore combines solutions of the CDYBE that are associated with
different, non-conjugate two-dimensional Lie subalgebras of $\poal$. To show
that the existence of solutions associated with different Lie subalgebras is a
generic phenomenon and not a consequence of the specific gauge fixing
conditions in \cite{Meusburger:2011aa}, we determine a simple set of solutions
of a similar form.

\begin{lemma}\label{lem:simple-gen-solutions}
For all $c\in\RR$, $\gamma\in\cif(\RR)$, the map $r:\RR^2\to\poal\oo
\poal$ given by
\begin{equation*}
  r(\psi,\alpha)=\tfrac 1 2 (P_a\!\oo\!J^a + J^a\!\oo\!P_a) - \tensor{\ee}{^a^b_c}\,\partial_\psi q_\psi^c(\psi)(P_a{\oo} J_b - J_b {\oo} P_a) - \alpha\, \tensor{\ee}{^a^b_c}\,\partial_\psi^2 q_\psi^c (\psi)\, P_a {\oo} P_b
\end{equation*}
is a solution of the CDYBE with $x_1=\psi$, $x_2=\alpha$, $x^1=q_\psi^aP_a$,
$x^2=q_\alpha^aJ_a+q_\theta^aP_a$ and
\begin{equation*}
  \bq_\psi(\psi)=\bq_\alpha(\psi)=\gamma(\psi) \be_0+\sqrt{\gamma^2(\psi)+\tfrac 1 4 (\psi-c)^2} \,\be_1,\qquad
  \bq_\theta(\psi,\alpha)=\alpha\,\partial_\psi\bq_\psi(\psi).
\end{equation*}
\end{lemma}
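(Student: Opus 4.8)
The plan is to recognise the given $r$-matrix as a special case of the form \eqref{eq:special-w-m} and then to reuse the reduction of the CDYBE carried out in the proof of Lemma~\ref{lem:dcybe-for-special-r}. Comparing the coefficients of $(P_a\oo J_b-J_b\oo P_a)$ and of $P_a\oo P_b$, I would read off that the $r$-matrix corresponds to $V^{ab}(\psi)=\tfrac12\eta^{ab}+\tfrac12\ee^{abc}w_c(\psi)$ with $\bw=2\,\partial_\psi\bq_\psi$ and to $\bm(\psi,\alpha)=-\alpha\,\partial_\psi^2\bq_\psi$. Since $V$ is of this special form, Lemma~\ref{lem:dcybe-for-special-r} tells me that the first set of conditions in \eqref{eq:dcybe-with-UV} is equivalent to the three equations \eqref{eq:dcybe-with-simple-UV}, so by Theorem~\ref{thm:jacobi-for-extended-dirac} it suffices to verify those together with the remaining condition $\Omega=0$ from the second line of \eqref{eq:dcybe-with-UV}. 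Note also that $\bq_\alpha=\bq_\psi$ makes $\bq_\psi\wedge\bq_\alpha=0$ automatic, so $x_1,x_2$ do span an abelian subalgebra as the setup requires.

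The decisive observation is that, in the Minkowski signature $\eta=\diag(1,-1,-1)$, the norm of $\bq_\psi$ does not depend on the free function $\gamma$:
\begin{equation*}
  \bq_\psi^2=\gamma^2-\bigl(\gamma^2+\tfrac14(\psi-c)^2\bigr)=-\tfrac14(\psi-c)^2 .
\end{equation*}
This is a quadratic polynomial in $\psi$, so $\partial_\psi^2(\bq_\psi^2)=-\tfrac12$ and $\partial_\psi^3(\bq_\psi^2)=0$, and these two scalar identities are exactly what the CDYBE demands. For the first equation of \eqref{eq:dcybe-with-simple-UV} I would use $\bw^2+2\,\bq_\psi\cdot\partial_\psi\bw=4\bigl[(\partial_\psi\bq_\psi)^2+\bq_\psi\cdot\partial_\psi^2\bq_\psi\bigr]=2\,\partial_\psi^2(\bq_\psi^2)=-1$, whence $1+\bw^2+2\,\bq_\psi\cdot\partial_\psi\bw=0$.

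The remaining conditions follow from the fact that the relevant vectors are all proportional to $\partial_\psi\bq_\psi$ or $\partial_\psi^2\bq_\psi$. Since $\partial_\psi\bw=2\,\partial_\psi^2\bq_\psi$ and $\partial_\alpha\bm=-\partial_\psi^2\bq_\psi$ are parallel, their wedge product vanishes, giving the second equation of \eqref{eq:dcybe-with-simple-UV}; and with $\bq_\alpha=\bq_\psi$ the third equation becomes the identity $-(\partial_\psi^2q_\psi^a)\,q_\psi^b=-(\partial_\psi^2q_\psi^a)\,q_\psi^b$. Finally, inserting $\bq_\theta=\alpha\,\partial_\psi\bq_\psi$, $\bm=-\alpha\,\partial_\psi^2\bq_\psi$ and $\bw=2\,\partial_\psi\bq_\psi$ into $\Omega=\bq_\psi\cdot\partial_\psi\bm+\bq_\theta\cdot\partial_\alpha\bm+\bw\cdot\bm$ collapses it to $-\alpha\bigl[\bq_\psi\cdot\partial_\psi^3\bq_\psi+3\,\partial_\psi\bq_\psi\cdot\partial_\psi^2\bq_\psi\bigr]=-\tfrac{\alpha}{2}\,\partial_\psi^3(\bq_\psi^2)=0$. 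I expect the only genuine subtlety to be the bookkeeping of index and sign conventions when matching the $r$-matrix to \eqref{eq:special-w-m}, in particular fixing the sign of $\bm$, which is precisely what makes the third equation of \eqref{eq:dcybe-with-simple-UV} and the $\Omega$-equation balance; once the identity $\bq_\psi^2=-\tfrac14(\psi-c)^2$ is in hand, every condition is just a low-order $\psi$-derivative of this single quadratic.
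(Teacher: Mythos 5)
Your proposal is correct and takes essentially the same route as the paper's own proof: both identify $\bw=2\,\partial_\psi\bq_\psi$, $\bm=-\alpha\,\partial_\psi^2\bq_\psi$, reuse the reduction of the CDYBE to \eqref{eq:dcybe-with-simple-UV} established in the proof of Lemma \ref{lem:dcybe-for-special-r}, and exploit the key identity $\bq_\psi^2=-\tfrac14(\psi-c)^2$ so that everything collapses to $1+2\,\partial_\psi^2(\bq_\psi^2)=0$. The only (immaterial) difference is that the paper routes part of the check through the conditions \eqref{eq:q-relations} via \eqref{eq:q-relations-simple}, whereas you verify the three equations of \eqref{eq:dcybe-with-simple-UV} and $\Omega=0$ directly, which suffices for the claim as stated.
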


\begin{proof}
This follows by a direct calculation.  As $r$ is of a form similar to the
solution in Lemma~\ref{lem:dcybe-for-special-r} with
$\bw=2 \partial_\psi\bq_\psi$, $\bm=-\alpha\, \partial_\psi^2 \bq_\psi$,
inserting these expressions into \eqref{eq:q-relations-simple} shows directly
that the conditions \eqref{eq:q-relations} are satisfied. The CDYBE then
reduces to the requirement $1+2\partial_\psi^2(\bq_\psi^2)=0$, which is
verified by a simple computation.
\end{proof}

Note, however, that the CDYBE \eqref{eq:dcybe-with-UV} and the additional
requirements \eqref{eq:q-relations} also admit solutions which are associated
with fixed Cartan subalgebras of $\poal$ and define classical dynamical
$r$-matrices in the sense of Definition \ref{def:dyn-r-matrix}.  To obtain such
solutions, we set $\bq_\theta(\psi,\alpha)=0$ and either
$\bq_\psi(\psi)=\bq_\alpha(\psi)=\be_0$ or
$\bq_\psi(\psi)=\bq_\alpha(\psi)=\be_1$ for all admissible values of
$\psi$. The conditions \eqref{eq:q-relations-simple} then reduce to the
requirements $\bw,\bm\in\Span\{\bq_\psi\}$, and the expressions
\eqref{eq:dcybe-with-simple-UV} to $\partial_\alpha\bm=-\tfrac 1
2 \partial_\psi\bw$, $1+\bw^2+2\bq_\psi\cdot\partial_\psi\bw=0$.  From this,
one then obtains two solutions associated with the Cartan subalgebras
$\ah_a=\Span\{J_0,P_0\}$ and $\ah_b=\Span\{J_1,P_1\}$ in $\poal$.

\begin{lemma}\label{lem:dcybe-solutions}
Two solutions of the CDYBE with $x^1=\psi$, $x^2=\alpha$, $x_1=q_\psi^aP_a$ and
$x_2=q_\alpha^aJ_a+q^a_\theta P_a$ that also satisfy the additional conditions
\eqref{eq:q-relations} are given by
\begin{compactenum}[\quad a)]
\item $\bq_\psi=\bq_\alpha=\be_0$, $\bq_\theta=0$ and $r:{(-\tfrac \pi 2,\tfrac \pi 2)}\times \RR\to\poal\oo\poal$,
  \begin{equation*}
    r_a(\psi,\alpha)=\tfrac 1 2 (P_a\!\oo\!J^a+J^a\!\oo\!P_a) + \tfrac12 \tan\tfrac\psi 2\left(P_1{\wedge} J_2 - P_2{\wedge} J_1\right) + \frac{\alpha}{4\cos^2\tfrac \psi 2} P_1{\wedge} P_2,
  \end{equation*}

\item $\bq_\psi=\bq_\alpha=\be_1$, $\bq_\theta=0$ and $r:
  \RR^2\to\poal\oo\poal$,
  \begin{equation*}
    r_b(\psi,\alpha)=\tfrac 1 2 (P_a\!\oo\!J^a+J^a\!\oo\!P_a) + \tfrac 1 2 \tanh\tfrac\psi 2 \left(P_2{\wedge} J_0 - P_0{\wedge} J_2\right) + \frac{\alpha}{4\cosh^2\tfrac \psi 2} P_2{\wedge} P_0,
  \end{equation*}
\end{compactenum}
where $X\wedge Y:=X\oo Y-Y\oo X$.  They are classical dynamical $r$-matrices as
in Definition \ref{def:dyn-r-matrix} for, respectively, the Cartan subalgebras
$\ah_a=\Span\{J_0,P_0\}$ and $\ah_b=\Span\{J_1,P_1\}$.
\end{lemma}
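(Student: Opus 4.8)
The plan is to treat cases a) and b) in parallel and to feed the explicit data straight into the reductions already assembled in the discussion preceding the lemma. Recall that for $V$ of the special form $V^{ab}(\psi)=\tfrac12\eta^{ab}+\tfrac12\ee^{abc}w_c(\psi)$ the general conditions were reduced in Lemma~\ref{lem:dcybe-for-special-r} to \eqref{eq:dcybe-with-simple-UV} and \eqref{eq:q-relations-simple}, and that for the present ansatz ($\bq_\theta=0$, $\bq_\psi=\bq_\alpha$ constant) these collapse further to $\bw,\bm\in\Span\{\bq_\psi\}$ together with $1+\bw^2+2\bq_\psi\cdot\partial_\psi\bw=0$ and $\partial_\alpha\bm=-\tfrac12\partial_\psi\bw$. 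The first step is therefore to cast each candidate into the shape of Lemma~\ref{lem:dcybe-for-special-r} and read off $\bw,\bm$. Comparing the coefficients of $P\wedge J$ and $P\wedge P$, and using that $\ee^{abc}$ has the same components as $\ee_{abc}$, one finds $\bw=-\tan\tfrac\psi2\,\be_0$ and $\bm=\tfrac{\alpha}{4\cos^2(\psi/2)}\,\be_0$ in case a), and $\bw=\tanh\tfrac\psi2\,\be_1$ and $\bm=-\tfrac{\alpha}{4\cosh^2(\psi/2)}\,\be_1$ in case b). In both cases $\bw$ and $\bm$ are collinear with $\bq_\psi$, so $\bw,\bm\in\Span\{\bq_\psi\}$ and, since every term of \eqref{eq:q-relations-simple} is then a wedge of collinear vectors, the supplementary conditions \eqref{eq:q-relations} hold automatically.

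Next I would verify the CDYBE through \eqref{eq:dcybe-with-simple-UV}. The relation $\partial_\psi\bw\wedge\partial_\alpha\bm=0$ is immediate, both vectors being proportional to $\bq_\psi$, and $\partial_\alpha m^a q_\alpha^b=-\tfrac12\partial_\psi w^a q_\psi^b$ reduces to the scalar identity $\partial_\alpha\bm=-\tfrac12\partial_\psi\bw$, checked by differentiating the explicit expressions. The only substantive equation is $1+\bw^2+2\bq_\psi\cdot\partial_\psi\bw=0$. Here I would use $\be_0^2=1$ in case a), turning it into $1+\tan^2\tfrac\psi2-\tfrac{1}{\cos^2(\psi/2)}=0$, and $\be_1^2=-1$ in case b), turning it into $1-\tanh^2\tfrac\psi2-\tfrac{1}{\cosh^2(\psi/2)}=0$; both hold by the elementary identities $1+\tan^2=\cos^{-2}$ and $1-\tanh^2=\cosh^{-2}$. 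This establishes that $r_a,r_b$ solve the CDYBE with the stated $x_i,x^i$ and satisfy \eqref{eq:q-relations}. The one normalisation point worth making explicit is that the $\psi$-only integration function in $\bm$ (left free by $\partial_\alpha\bm=-\tfrac12\partial_\psi\bw$) is set to zero, so that $\bm$ matches the stated expression exactly.

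It then remains to identify $r_a,r_b$ as classical dynamical $r$-matrices in the sense of Definition~\ref{def:dyn-r-matrix}, which means restoring the two requirements dropped before the lemma. Since $\bq_\theta=0$ and $\bq_\psi=\bq_\alpha$ are \emph{constant}, the maps $x_1=q_\psi^aP_a$ and $x_2=q_\alpha^aJ_a+q_\theta^aP_a$ equal the fixed generators $P_0,J_0$ (resp.\ $P_1,J_1$) and span the fixed Cartan subalgebra $\ah_a$ (resp.\ $\ah_b$); hence $\psi,\alpha$ are honest linear coordinates on $\ah^*$ dual to this basis, and the generalised CDYBE \eqref{eq:dcybe} coincides with the CDYBE of Definition~\ref{def:dyn-r-matrix} on $U=(-\tfrac\pi2,\tfrac\pi2)\times\RR$ (resp.\ $U=\RR^2$). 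The remaining requirement is $\ah$-invariance, $[x\oo\mathds{1}+\mathds{1}\oo x,\,r]=0$ for $x\in\{P_0,J_0\}$ (resp.\ $\{P_1,J_1\}$). Since the symmetric part $\tfrac12(P_a\oo J^a+J^a\oo P_a)$ is $\Ad$-invariant, only the $P\wedge J$ and $P\wedge P$ terms need testing: using \eqref{eq:poincare-algebra-bracket} one checks that $[J_0,\,\cdot\,]$ rotates $(J_1,J_2)$ and $(P_1,P_2)$ into one another while $[P_0,\,\cdot\,]$ sends $J_1,J_2\mapsto -P_2,P_1$ and annihilates $P_1,P_2$, and under these actions both $P_1\wedge J_2-P_2\wedge J_1$ and $P_1\wedge P_2$ are annihilated; case b) follows by the cyclic relabelling $0\to1\to2$.

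I expect the main obstacle to be purely bookkeeping rather than conceptual: keeping the Lorentzian conventions consistent when reading $\bw,\bm$ off $r_a,r_b$ and when verifying $\ah$-invariance. The index-raising with $\eta=\diag(1,-1,-1)$ and the tensor $\ee_{abc}$ produce signs that differ between the timelike direction $\be_0$ (case a) and the spacelike direction $\be_1$ (case b), and the single analytic input $1+\bw^2+2\bq_\psi\cdot\partial_\psi\bw=0$ changes from its trigonometric to its hyperbolic form accordingly; these are the places where care is needed, but no new idea beyond the reductions of Lemma~\ref{lem:dcybe-for-special-r}.
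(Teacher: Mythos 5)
Your overall route coincides with the paper's: specialise the reductions established around Lemma \ref{lem:dcybe-for-special-r} (conditions \eqref{eq:q-relations} collapse to \eqref{eq:q-relations-simple}, and for $V^{ab}=\tfrac12\eta^{ab}+\tfrac12\ee^{abc}w_c$ the first equation of \eqref{eq:dcybe-with-UV} collapses to \eqref{eq:dcybe-with-simple-UV}), then insert $\bq_\theta=0$, $\bq_\psi=\bq_\alpha=\be_j$. Your readings $\bw=-\tan\tfrac\psi2\,\be_0$, $\bm=\tfrac{\alpha}{4\cos^2(\psi/2)}\,\be_0$ (case a) and $\bw=\tanh\tfrac\psi2\,\be_1$, $\bm=-\tfrac{\alpha}{4\cosh^2(\psi/2)}\,\be_1$ (case b) are correct, as are the collinearity argument for \eqref{eq:q-relations-simple} and the signed Riccati identity $1+\bw^2+2\bq_\psi\cdot\partial_\psi\bw=0$ with $\be_0^2=1$, $\be_1^2=-1$; the explicit $\ah$-invariance check is a welcome addition that the paper only asserts.

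There is, however, a genuine gap in your CDYBE verification. By step 1 of the proof of Theorem \ref{thm:jacobi-for-extended-dirac}, a map $r$ of the form \eqref{eq:extended-dirac-r} solves the CDYBE if and only if \emph{both} equations in \eqref{eq:dcybe-with-UV} hold: the first ($\Upsilon^{abc}=0$), which for the special $V$ is equivalent to \eqref{eq:dcybe-with-simple-UV}, \emph{and} the second, $\Omega=\bq_\psi\cdot\partial_\psi\bm+\bq_\theta\cdot\partial_\alpha\bm+\bw\cdot\bm=0$. You verify only \eqref{eq:dcybe-with-simple-UV} and then declare the CDYBE established, so the $\Omega$-equation is never addressed; this is precisely the equation the paper checks first in the proof of Lemma \ref{lem:dcybe-for-special-r}, and it resurfaces as the third ODE $\beta\varphi_0'+\delta\varphi_1+\epsilon\varphi_0=0$ in the proof of Theorem \ref{thm:dcybe-standard-transformation}. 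The gap is easy to close for your candidates: writing $\bw=\omega\,\be_j$, $\bm=\mu\,\be_j$, $\sigma=\be_j^2=\pm1$, and using $\bq_\theta=0$, one gets $\Omega=\sigma(\partial_\psi\mu+\omega\mu)$; since you take $\mu=-\tfrac{\alpha}{2}\omega'$ with vanishing integration function, $\Omega=-\sigma\tfrac{\alpha}{2}(\omega''+\omega\omega')$, and differentiating $1+\sigma\omega^2+2\sigma\omega'=0$ gives exactly $\omega''+\omega\omega'=0$. But the step is not vacuous in general: had you retained a nonzero $\psi$-dependent integration function $\mu_0$ in $\bm$ (which \eqref{eq:dcybe-with-simple-UV} permits), $\Omega=0$ would impose the additional ODE $\mu_0'+\omega\mu_0=0$, so the CDYBE genuinely contains information beyond \eqref{eq:dcybe-with-simple-UV}.

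A smaller point: your claim that case b) ``follows by the cyclic relabelling $0\to1\to2$'' is not actually an argument, because cyclic permutation of the indices is not an automorphism of $\poal$ in Lorentzian signature $\eta=\diag(1,-1,-1)$; for instance $[J_0,J_1]=-J_2$ while $[J_1,J_2]=+J_0$. The invariance of $r_b$ under $\ah_b=\Span\{J_1,P_1\}$ does hold, but it must be checked directly (the signs work out).
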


Lemma \ref{lem:dcybe-solutions} provides us with two particularly simple
classical dynamical $r$-matrices for $\poal$. We will show in the next section
that every solution of the CDYBE of the form \eqref{eq:extended-dirac-r} which
satisfies the additional conditions \eqref{eq:q-relations} can be transformed
into one of these two solutions for all values of $\psi$ for which either
$\bq_\psi^2(\psi),\bq_\alpha^2(\psi)>0$ (case a) or
$\bq_\psi^2(\psi),\bq_\alpha^2(\psi)<0$ (case b).

One might wonder if there are similar solutions of the CDYBE and conditions
\eqref{eq:q-relations} for which $\bq_\psi, \bq_\alpha$ are fixed lightlike
vectors that do not depend on $\psi$ and $\alpha$. However, it turns out that
such solutions do not exist. This appears to be linked to the fact that the
vectors $\bq_\psi, \bq_\alpha$ associated with the solutions in Lemma
\ref{lem:dcybe-for-special-r} and Lemma \ref{lem:simple-gen-solutions} are
spacelike or timelike for values of $\psi$ in certain open intervals of $\RR$,
but can become lightlike only for a very specific discrete set of values of
$\psi$. This again suggests that the variation of $\bq_\psi, \bq_\alpha$ with
$\psi$ is a generic feature of the gauge fixing procedure, and that there are
no gauge fixing conditions that allow one to obtain a Poisson structure
determined by vectors $\bq_\psi, \bq_\alpha$ that are lightlike for all $\psi$.
We have the following lemma:

\begin{lemma}
There are no simultaneous solutions of the CDYBE \eqref{eq:dcybe} and
conditions \eqref{eq:q-relations} for which $\bq_\psi,\bq_\alpha,\bq_\theta$
are constant vectors with $\bq_\psi\wedge\bq_\alpha=0$ and
$\bq_\psi^2=\bq_\alpha^2=0$.
\end{lemma}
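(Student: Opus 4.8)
The plan is to argue by contradiction: suppose $\bq_\psi,\bq_\alpha,\bq_\theta$ are constant, $\bq_\psi\wedge\bq_\alpha=0$ and $\bq_\psi^2=\bq_\alpha^2=0$, and that the associated map $r$ of the form \eqref{eq:extended-dirac-r} solves the CDYBE \eqref{eq:dcybe} together with the conditions \eqref{eq:q-relations}. Since $\bq_\psi,\bq_\alpha$ are nonzero null vectors with $\bq_\psi\wedge\bq_\alpha=0$, they are parallel, $\bq_\alpha=\lambda\bq_\psi$ with $\lambda\neq0$. Because all three vectors are constant, every term in \eqref{eq:q-relations} and \eqref{eq:dcybe-with-UV} carrying a $\partial_\psi$ or $\partial_\alpha$ acting on a $\bq$ drops out, so \eqref{eq:q-relations} collapses to purely algebraic relations tying $\bq_\psi$ to the matrix $V$ (and $\bm$).

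First I would extract the structural consequences of \eqref{eq:q-relations}. With $\partial_\psi\bq_\psi=0$, its first line reads $q_\psi^b\,V_b{}^a=(\Tr V-1)\,q_\psi^a$, so $\bq_\psi$ is a left eigenvector of $V$ for the eigenvalue $\Tr V-1$; contracting with $q_{\psi,a}$ and using $\bq_\psi^2=0$ gives $V_{(s)}(\bq_\psi,\bq_\psi)=0$. The decisive step is the second line of \eqref{eq:q-relations}: after dropping the derivative terms and rewriting its last contribution as $-q_\alpha^a w^b$ via the identity $\ee^b{}_{de}V^{de}=w^b$ (with $\bw$ the axial vector of the antisymmetric part of $V$), its antisymmetric part in the two free indices is $q_\alpha^a w^b-q_\alpha^b w^a=0$, i.e.\ $\bq_\alpha\wedge\bw=0$. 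Hence $\bw\parallel\bq_\psi$, and since $\bq_\psi$ is null this forces $\bw^2=0$ and $\bq_\psi\cdot\bw=0$. Contracting the same equation with $q_{\psi,a}$ and using $\bq_\psi\cdot\bw=0$ then yields $\bq_\psi\wedge(V\bq_\psi)=0$, so that $\bq_\psi$ is a common (left and right) null eigenvector of $V$ with eigenvalue $\Tr V-1$ and $V_{(a)}\bq_\psi=0$.

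The contradiction is then produced by the CDYBE, i.e.\ by the first line of \eqref{eq:dcybe-with-UV}. The mechanism is transparent in the Cartan-adapted case $V^{ab}=\tfrac12\eta^{ab}+\tfrac12\ee^{ab}{}_c w^c$ treated in \eqref{eq:dcybe-with-simple-UV}, \eqref{eq:q-relations-simple}: there the first condition of \eqref{eq:q-relations-simple} forces $\bw\parallel\bq_\psi$, so that $\bw^2=0$ and, because $\bq_\psi$ is constant and null, $\bq_\psi\cdot\partial_\psi\bw=0$; the first condition of \eqref{eq:dcybe-with-simple-UV} then degenerates to $1+\bw^2+2\bq_\psi\cdot\partial_\psi\bw=1=0$. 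I would reproduce this in general by contracting the tensor identity $\Upsilon^{abc}=0$ so as to isolate the inhomogeneous term $-V^{da}\ee^{bc}{}_d$, using the companion conditions of \eqref{eq:q-relations} (which express $\partial_\alpha\bm$ through $\partial_\psi\bw$) to remove the $\bm$-dependence, and then showing that under the eigenvector relations, $V_{(a)}\bq_\psi=0$ and $\bw^2=\bq_\psi\cdot\partial_\psi\bw=0$, every quadratic-in-$V$ and every derivative contribution cancels, leaving a nonzero constant set equal to zero.

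The main obstacle is precisely this last computation for a general $V$, whose symmetric part need not equal $\tfrac12\eta$; indeed one checks that the symmetric part of $r$ in \eqref{eq:extended-dirac-r} equals $\tfrac12(P_a\otimes J^a+J^a\otimes P_a)$ for any $V$, so the Cartan-adapted form cannot be assumed. The cleanest way to control the two Levi-Civita contractions in the quadratic terms of $\Upsilon^{abc}$ is to pass to a lightcone basis $(\be_+,\be_-,\be_2)$ adapted to the null vector $\bq_\psi\parallel\be_+$; the eigenvector conditions then pin $V$ down to an upper-triangular form fixed by $\lambda_0=\Tr V-1$ and the relation $u=1-\lambda_0$ together with a few off-diagonal parameters, after which the relevant component of \eqref{eq:dcybe-with-UV} is evaluated explicitly. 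Verifying that the residual inhomogeneous constant is genuinely nonzero — the direct analogue of the $1$ in $1+\bw^2+2\bq_\psi\cdot\partial_\psi\bw=0$ — completes the contradiction and proves the lemma.
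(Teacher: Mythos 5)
Your overall strategy is the same as the paper's: decompose $V^{ab}=\tfrac12\eta^{ab}+Q^{ab}+\tfrac12\ee^{abc}w_c$ with $Q$ symmetric, use the constancy of $\bq_\psi,\bq_\alpha,\bq_\theta$ to turn \eqref{eq:q-relations} into algebraic constraints on $Q$ and $\bw$, and then obtain a contradiction by inserting the constrained form into the first equation of \eqref{eq:dcybe-with-UV}. Your opening deductions are essentially correct: the first condition of \eqref{eq:q-relations} makes $\bq_\psi$ a left eigenvector of $V$ with eigenvalue $\Tr V-1$ and gives $Q(\bq_\psi,\bq_\psi)=0$, and the antisymmetric part (in $a,b$) of the second condition gives $\bq_\alpha\wedge\bw=0$, hence $\bw^2=\bq_\psi\cdot\bw=0$. (One justification is faulty, though: contracting the second condition with $q_{\psi,a}$ yields $0=0$ identically once these facts are in place, so it does \emph{not} produce the right-eigenvector property; that property does hold, but only because $\bw\parallel\bq_\psi$ already forces the antisymmetric part of $V$ to annihilate $\bq_\psi$, making the left and right eigenvector conditions equivalent.)

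The genuine gap is twofold. First, the decisive step is never executed: you explicitly defer the computation that the constrained $V$ is incompatible with \eqref{eq:dcybe-with-UV} (``I would reproduce this\dots'', ``verifying that the residual inhomogeneous constant is genuinely nonzero\dots completes the contradiction''), so no contradiction is actually established. Second, and more seriously, the constraints you extract are strictly weaker than what the paper derives, so the deferred computation may not even close as described. Concretely, in the basis with $\bq_\psi\parallel\be_0+\be_1$, your eigenvector conditions amount to $Q^{00}=Q^{11}+2Q^{22}$, $Q^{01}=Q^{11}+Q^{22}$, $Q^{02}=Q^{12}$ --- a three-parameter family --- whereas the paper's proof shows that the \emph{full} set \eqref{eq:q-relations} (in particular the symmetric part of the second condition and the third condition, neither of which you use) forces $Q^{02}=Q^{12}=Q^{22}=0$, $Q^{00}=Q^{01}=Q^{11}$ and $\bw\in\Span\{\be_0+\be_1\}$. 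It is this one-parameter normal form that makes the insertion into \eqref{eq:dcybe-with-UV} yield the clean analogue of the contradiction $1=0$ visible in \eqref{eq:dcybe-with-simple-UV}. With only your weaker constraints the quadratic-in-$Q$ terms do not obviously cancel, and nothing in the proposal shows that they do; to repair the argument you must either exploit the remaining content of \eqref{eq:q-relations} to reach the paper's normal form, or genuinely carry out the CDYBE computation over your larger family.
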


\begin{proof}
Suppose that $r:\RR^2\to\poal\oo\poal$ is of the form in
Definition~\ref{def:extended-dirac-bracket} with $V^{ab}(\psi)=\tfrac 1 2
\eta^{ab}+Q^{ab}(\psi)+\tfrac12\ee^{abc} w_c(\psi)$, where $Q:\RR\to\Mat(3,
\RR)$ is symmetric. Then conditions~\eqref{eq:q-relations} imply
$Q^{02}=Q^{12}=Q^{22}=0$, $Q^{00}=Q^{01}=Q^{11}$ and $\bw \in
\Span\{\be_0+\be_1\}$.  Inserting this into the first equation of
\eqref{eq:dcybe-with-UV} yields a contradiction and thus proves the claim.
\end{proof}

\section{Transformations between  Dirac brackets}
\label{sec:dynamic-poincare-trafos}

\subsection{Dynamical  Poincar\'e transformations}

The results of the previous section show how constraint functions that satisfy
the conditions in Definition \ref{def:dirac-bracket} and the requirements a and
b from Section~\ref{subsec:constraints-and-gauge-fixing} give rise to a Dirac
bracket that is given by solutions of the CDYBE.  In this section, we
investigate the transformation of the Dirac bracket under a change of
constraint functions.

For this, recall from the discussion before Theorem
\ref{thm:generic-dirac-bracket} that any set of admissible constraint functions
restricts the variables $M_1,M_2$ which parametrise the first two copies of
$\pogr$ in $\allpogr$ in such a way that they are determined uniquely by the
two conjugation invariant quantities $\psi,\alpha$ which depend only on the
product $M_2\cdot M_1$.  This suggests that for any two sets of variables
$M_1,M_2$ and $M_1',M_2'$ obtained in this way, there should be a Poincar\'e
transformation $p(\psi,\alpha)\in \pogr$ such that $M_1'=p(\psi,\alpha)\cdot
M_1\cdot p(\psi,\alpha)^\inv$ and $M_2'=p(\psi,\alpha)\cdot M_2\cdot
p(\psi,\alpha)^\inv$. If the associated gauge fixing conditions satisfy
conditions a and b from Section \ref{subsec:constraints-and-gauge-fixing}, it
follows from \eqref{eq:psi-alpha-general} that one can restrict attention to
Poincar\'e transformations $p(\psi,\alpha)$ whose Lorentzian components do not
depend on $\alpha$ and whose translational components depend on $\alpha$ at
most linearly.

This is a strong motivation to investigate the transformation of the bracket in
Definition \ref{def:extended-dirac-bracket} under such Poincar\'e
transformations.  We therefore consider smooth maps
\begin{align}\label{eq:poincare-diffeo}
  \Phi^p:\; &\RR^2\times\restpogr\to \RR^2\times\restpogr,\\
  &(\psi,\alpha,M_3,\dots,B_g)\mapsto (\psi,\alpha,\; p(\psi,\alpha)\cdot M_3\cdot p(\psi,\alpha)^\inv,\dots,\;p(\psi,\alpha)\cdot B_g\cdot p(\psi,\alpha)^\inv),\nonumber
\end{align}
where $p=(g,-\Ad(g)\bt)\in\cif(\RR^2,\pogr)$ with $\partial_\alpha
g=\partial_\alpha^2\bt=0$.  We find that the transformation of the bracket
$\{\,,\}_D$ under such a dynamical Poincar\'e transformation corresponds to a
simultaneous transformation of the maps $r:\RR^2\to\poal\oo\poal$ and
$\bq_\psi,\bq_\alpha,\bq_\theta:\RR^2\to\RR^3$.

\begin{lemma}\label{lem:poinc-trafo}
  Let $\{,\}_D$ and $r: \RR^2 \to \poal \oo \poal$ be given as in Definition
  \ref{def:extended-dirac-bracket} and consider a dynamical Poincar\'e
  transformation $\Phi^p:\RR^2\times\restpogr\to \RR^2\times\restpogr$ as
  above. Then for all $F,G\in\cif(\RR^2\times\restpogr)$:
  \begin{equation*}
   \{ F\circ \Phi^p, G\circ \Phi^p\}_D=\{F,G\}^p_D\circ \Phi^p,
  \end{equation*}
  where $\{\,,\,\}_D^p$ is the bracket from Definition
  \ref{def:extended-dirac-bracket} associated with
  \begin{equation}\label{eq:trafo-quant}
    \left.
    \begin{aligned}
      &\bq_\psi^p=\Ad(g)\bq_\psi,\qquad \bq_\alpha^p=\Ad(g)\bq_\alpha,\qquad \bq_\theta^p=\Ad(g)(\bq_\theta-\bq_\alpha\wedge\bt),\\
      &r^p=\left(\Ad(p)\oo\Ad(p)\right)\left[r+\bar\eta^p-\bar\eta^p_{21}\right],
    \end{aligned}
    \qquad\right\}
  \end{equation}
  and $\bar\eta^p:\RR^2\to\poal\oo \poal$ is given by
  \begin{equation}\label{eq:trafo-eta}
    \bar\eta^p = q_\psi^a P_a \oo p^\inv\partial_\psi p + (q^a_\alpha  J_a+q^a_\theta P_a) \oo p^\inv\partial_\alpha p.
  \end{equation}
\end{lemma}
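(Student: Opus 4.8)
The plan is to verify the identity $\{F\circ\Phi^p, G\circ\Phi^p\}_D=\{F,G\}^p_D\circ\Phi^p$ directly. Since both $\{\,,\,\}_D$ and $\{\,,\,\}^p_D$ are biderivations and $(\Phi^p)^*$ is an algebra homomorphism, it suffices to check the identity when $F,G$ range over a generating set of $\cif(\RR^2\times\restpogr)$, namely the coordinates $\psi,\alpha$, the functions $f\in\cif(\restlogr)$, and the coordinate functions $j_X^a$ for $X\in\{M_3,\dots,B_g\}$. The first input is the explicit form of the pullbacks. As $\Phi^p$ fixes the $\RR^2$-factor, $\psi\circ\Phi^p=\psi$ and $\alpha\circ\Phi^p=\alpha$; a function $f\in\cif(\restlogr)$ pulls back to $(\psi,\alpha,\dots)\mapsto f(gu_{M_3}g^\inv,\dots)$, depending on $\psi$ only through $g$; and a short computation from $X\mapsto p\cdot X\cdot p^\inv$ with $p=(g,-\Ad(g)\bt)$ gives the central formula
\[
  j_X^a\circ\Phi^p=\tensor{\Ad(g)}{^a_b}\,j_X^b-\bigl(\Ad(g)\idadi{X}\bt\bigr)^a .
\]
Thus conjugation acts on the translational coordinates by the Lorentzian rotation $\Ad(g)$ together with the holonomy-dependent shift $-\Ad(g)\idadi{X}\bt$.

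Next I would assemble the auxiliary identities used throughout: under $\Phi^p$ the invariant vector fields satisfy $(J_c^{R,Y}+J_c^{L,Y})(f\circ\Phi^p)=\tensor{\Ad(g)}{^d_c}\bigl((J_d^{R,Y}+J_d^{L,Y})f\bigr)\circ\Phi^p$; the operator $\idadi{X}$ pulls back to $\Ad(g)\idadi{X}\Ad(g^\inv)$; and the Lorentzian cross product is $\Ad(g)$-equivariant, $\Ad(g)(\bx\wedge\by)=\Ad(g)\bx\wedge\Ad(g)\by$, since $\Ad(g)\in\logr$ preserves $\eta$ and $\ee_{abc}$. With these, the brackets involving $\psi$ and $\alpha$ follow by direct evaluation of the defining relations in Definition \ref{def:extended-dirac-bracket} together with the Leibniz rule, using that coefficients depending only on $\psi$ (such as $g$) Poisson-commute with $\psi$ and $\alpha$. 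This reproduces $\bq_\psi^p=\Ad(g)\bq_\psi$, $\bq_\alpha^p=\Ad(g)\bq_\alpha$ and, after computing $\{\alpha,\Ad(u_X^\inv)\}_D=[\Ad(u_X^\inv),\operatorname{ad}(q_\alpha^cJ_c)]$, the law $\bq_\theta^p=\Ad(g)(\bq_\theta-\bq_\alpha\wedge\bt)$. The geometric content, which I would use as a guide, is that $\{\psi,\cdot\}_D$ and $\{\alpha,\cdot\}_D$ generate infinitesimal dynamical Poincaré transformations whose generators are encoded by $\bq_\psi,\bq_\alpha,\bq_\theta$; conjugation by $p$ transforms these generators by $\Ad(p)$, which is precisely \eqref{eq:trafo-quant}.

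The main work, and the expected obstacle, is the bracket of two functions $F,G\in\cif(\restpogr)$, equivalently $\{j_X^a\circ\Phi^p,\,j_Y^b\circ\Phi^p\}_D$. Here the pullback mixes the coordinate $j_X^b$ with the $(\psi,\alpha)$-dependent factor $\Ad(g)$ and with the $(\psi,\alpha,u_X)$-dependent shift, so the Leibniz rule produces a large number of cross terms, which organise into two groups. The terms in which both brackets act on the holonomy variables reproduce the Fock--Rosly bivector $\frrestbivector$ with the $r$-matrix replaced by $(\Ad(p)\oo\Ad(p))r$: the Lorentzian conjugation rotates the $J$-type contributions (the vector fields) by $\Ad(g)$, while the translational shift of the coordinates converts part of the $J$-contributions into $P$-contributions, exactly implementing the translational part of $\Ad(p)$. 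The remaining terms are those in which one bracket instead hits the $(\psi,\alpha)$-dependence of $g$ or $\bt$; evaluating these with the relations $\{\psi,\bj\}_D\sim\bq_\psi$ and $\{\alpha,\bj\}_D\sim\bq_\alpha,\bq_\theta$, and rewriting $\partial_\psi g,\partial_\psi\bt,\partial_\alpha\bt$ through the Maurer--Cartan forms $p^\inv\partial_\psi p$ and $p^\inv\partial_\alpha p$, yields precisely the antisymmetric correction $\bar\eta^p-\bar\eta^p_{21}$ of \eqref{eq:trafo-eta}. Matching coefficients then gives $r^p=(\Ad(p)\oo\Ad(p))[r+\bar\eta^p-\bar\eta^p_{21}]$.

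Finally I would verify that $r^p$ is again admissible, i.e.\ of the form \eqref{eq:extended-dirac-r}. Its symmetric part is unchanged, because $r_S$ is $\Ad$-invariant and $\bar\eta^p-\bar\eta^p_{21}$ is antisymmetric. The restriction $\partial_\alpha g=0$ makes $p^\inv\partial_\alpha p$ purely translational, which removes the potential $J\oo J$-term in $\bar\eta^p$ and keeps $\partial_\alpha V^p=0$; together with $\partial_\alpha^2\bt=0$ it also forces $\partial_\alpha^2\bm^p=0$, so $r^p$ stays in the admissible class. Throughout, the principal difficulty is purely one of bookkeeping: controlling the many Leibniz cross-terms and applying the $\Ad$-equivariance of the Lorentzian cross product with the correct signs. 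I would manage this exactly as in the proof of Theorem \ref{thm:jacobi-for-extended-dirac}, sorting every contribution according to how many $J$- and $P$-generators it carries and treating each type separately.
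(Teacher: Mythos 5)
Your proposal is correct and its core strategy is the same as the paper's: check the identity on the generating functions $\psi$, $\alpha$, $f\in\cif(\restlogr)$ and $j_X^a$, compute the pullbacks under conjugation explicitly, and match the resulting brackets against Definition \ref{def:extended-dirac-bracket} with transformed data $\bq_\psi^p,\bq_\alpha^p,\bq_\theta^p,V^p,\bm^p$. The one real difference is organisational: the paper factorises $p$ into a pure Lorentz transformation and a pure translation and runs the computation separately for each case (so that in each pass either the rotation or the shift term is absent, e.g.\ $h\circ\Phi^p=h$ for translations), whereas you treat a general $p$ in a single pass via $j_X^a\circ\Phi^p=\tensor{\Ad(g)}{^a_b}\bigl(j_X^b-(\idadi{X}\bt)^b\bigr)$. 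Your route dispenses with composing the two transformation laws, a step the paper leaves implicit, at the price of handling all Leibniz cross terms at once in $\{j_X^a\circ\Phi^p,\,j_Y^b\circ\Phi^p\}_D$; both are workable, and your bookkeeping plan (sorting contributions by the number of $J$- and $P$-generators, as in Theorem \ref{thm:jacobi-for-extended-dirac}) is exactly what is needed. One point you should reconcile: your pullback formula is the one dictated by the parametrisation $p=(g,-\Ad(g)\bt)$ given before the lemma, and carried through consistently it yields $\bq_\theta^p=\Ad(g)(\bq_\theta+\bq_\alpha\wedge\bt)$, i.e.\ the opposite sign of $\bt$ relative to \eqref{eq:trafo-quant}; the paper's own proof instead works with the convention $\bj_X\circ\Phi^p=\bj_X+\idadi{X}\bt$ (translational part $+\bt$), which is what produces the stated signs. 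This is a sign-convention mismatch inherited from the paper rather than a gap in your argument, but as written you cannot simultaneously assert your pullback formula and the stated transformation law for $\bq_\theta^p$.
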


\begin{proof} $\quad$
\begin{enumerate}
\item To derive explicit expressions for the transformed bracket $\{,\}_D^p$,
  it is convenient to consider two cases separately, namely Lorentz
  transformations $g: \RR^2 \to \logr$, which do not depend on $\alpha$, and
  translations $t: \RR^2 \to \RR^3$, which depend on $\alpha$ at most linearly.

  We start by determining concrete expressions for $r^p$ from formula
  \eqref{eq:trafo-quant}.  For a Lorentz transformation $p=(g,0):\RR\to \logr$
  with $\partial_\alpha p=0$, we have $\bar\eta^p(\psi)=q^a_\psi(\psi) P_a \oo
  g^\inv\partial_\psi g(\psi)$. Expanding this terms of a basis as
  $g^\inv\partial_\psi g(\psi)=n^a(\psi)J_a$ then yields
  \begin{equation}\label{eq:r-gauged-classification-Ltrafo}
    r^p = \bigl(\Ad(g) \oo \Ad(g)\big) [r + q_\psi^a n^b (P_a \otimes J_b - J_b \otimes P_a)].
  \end{equation}
  In the case of a translation $p=(1,\bt):\RR^2\to\RR$ with
  $\partial_\alpha^2\bt=0$, we have $\bar\eta^p=q_\psi^a\partial_\psi t^b P_a
  \oo P_b + q_\alpha^a\partial_\alpha t^b J_a \oo P_b +
  q_\theta^a\partial_\alpha t^b P_a \oo P_b$. Inserting this with expression
  \eqref{eq:extended-dirac-r} for $r$ into \eqref{eq:trafo-quant} and using the
  identities $\Ad(\bt) J_a=J_a+\tensor{\ee}{_a_b^c} t^b P_c$,
  $\Ad(\bt)P_a=P_a$, we obtain after some computations
  \begin{equation}\label{eq:r-gauged-classification-translation}
    \begin{split}
    r^p = r &- \partial_\alpha t^a q_\alpha^b (P_a \otimes J_b - J_b \otimes P_a) \\
            &+ \tensor{\ee}{^a^b_c}\bigl[
                      (1-V^d_{\;\;d})\bt + V^T\bt
                          + \bq_\psi\wedge\partial_\psi\bt
                          + [\bq_\theta - \bq_\alpha\wedge\bt]\wedge\partial_\alpha\bt
               \bigr]^c P_a \otimes P_b.
    \end{split}
  \end{equation}

\item We now derive explicit expressions for the transformed Poisson brackets
  $\{\,,\,\}_D^p$.  For a Lorentz transformation $p=(g,0):\RR\to \logr$ with
  $\partial_\alpha p=0$, it follows directly from the identity
  $\{\psi,\alpha\}_D=0$ that the Poisson brackets involving the variables
  $\psi$ and $\alpha$ with functions on $\restpogr$ are given by:
  \begin{equation*}
    \begin{aligned}
      &\{\psi\circ \Phi^p, h\circ \Phi^p\}_D = 0, \\
      &\{\psi\circ \Phi^p, \bj_X\circ \Phi^p\}_D = \big[-\idadi{X} \Ad(g) \, \bq_\psi\big] \circ \Phi^p, \\
      &\{\alpha\circ \Phi^p, h\circ \Phi^p\}_D = \big[\smashoperator{\sum_{Y\in\{M_3,\dots,B_g\}}} \tensor{\Ad(g)}{^a_b}q_\alpha^b(J_{R,a}^Y+J_{L,a}^Y)h\big] \circ \Phi^p, \\
      &\{\alpha\circ \Phi^p, \bj_X\circ \Phi^p\}_D = \big[-\idadi{X} \Ad(g)\bq_\theta - (\Ad(g)\bq_\alpha) \wedge \bj_X\big] \circ \Phi^p,
    \end{aligned}
  \end{equation*}
  for all $h\in\cif(\restlogr)$ and $X\in\{M_3,\dots,B_g\}$.  To determine the
  brackets of the type $\{F, G\}_D^p$ with $F,G\in\cif(\restpogr)$, we again
  consider functions $h\in\cif(\restlogr)$ and the variables $j_X$,
  $X\in\{M_3,\dots,B_g\}$. After some straightforward computations, we obtain
  \begin{multline*}
    \{j_X^a\circ\Phi^p, h\circ \Phi^p\}_D = \\
      \{j_X^a, h\} \circ \Phi^p
      - \tensor{\Ad(g)}{^a_c} \tidadi{X}{^c_h} \bigl(\tensor{V}{^h_e} - q_\psi^h n_e\bigr) \Bigl[\smashoperator{\sum_{Y\in\{M_3,\dots,B_g\}}} (J^{R,e}_Y+J^{L,e}_Y)h\Bigr] \circ \Phi^p,
  \end{multline*}
  which allows us to express the transformed bracket as
  \begin{align*}
    &\{j_X^a\circ \Phi^p, h\circ \Phi^p\}_D
      = \Bigl[
          \{j_X^a, h\} -
          \tidadi{X}{^a_h} \tensor{({V}^p)}{^h_e} \smashoperator{\sum_{Y\in\{M_3,\dots,B_g\}}} (J^{R,e}_Y+J^{L,e}_Y)h
        \Bigr] \circ \Phi^p,\\[-.4em]
    &\text{with}\quad \tensor{({V}^p)}{^h_e} = \tensor{\Ad(g)}{^h_m}\tensor{\Ad(g)}{_e^p}\big[\tensor{V}{^m_p} - q_\psi^m n_p\big].
  \end{align*}
  An analogous calculation for the brackets of the form $\{j_X^a\circ \Phi^p,
  j_Y^b\circ \Phi^p\}_D$ shows that it is obtained by transforming $V\to V^p$
  as above and replacing $\bm:\RR^2\to\RR^3$ by
  $\bm^p=\Ad(g)\bm:\RR^2\to\RR^3$.  This implies that the transformed bracket
  takes the form $\{F\circ\Phi^p, G\circ\Phi^p\}_D = [\frrestbivector[r^p]
  (\diffd F \otimes \diffd G)]\circ\Phi^p$ with $r^p$ given by
  \eqref{eq:r-gauged-classification-Ltrafo} and proves the claim for the
  Lorentz transformations.

\item To determine the transformation of the bracket $\{\,,\,\}_D$ under
  translations, we again use the parametrisation in terms of the variables
  $j_X$, $X\in\{M_3,\dots,B_g\}$ and functions $h\in\cif(\restlogr)$. The
  transformation of these variables under a translation
  $p=(0,\bt):\RR^2\to\RR^3$ is given by $h\circ \Phi^p= h$ and $\bj_X \circ
  \Phi^p=\bj_X + \idadi{X}\bt$.  This implies directly that the brackets
  $\{\psi,h\}_D$ and $\{\alpha,h\}_D$ are preserved, and with the relations
  $\{\alpha,\psi\}_D=0$, $\{\psi, h\}_D=0$, one obtains the same for the
  brackets $\{\psi,\bj_X\}_D$.  The formula for the brackets
  $\{\alpha,\bj_X\}_D^p$ follows directly from the relation
  \begin{equation*}
    \Big\{\alpha, \tidadi{X}{^a_b}\Big\}_D =
      \Big[\tensor{\ee}{_d^a_m} \tidadi{X}{^m_b} +
      \tensor{\ee}{_d_b^m} \tidadi{X}{^a_m}\Big] q_\alpha^d.
  \end{equation*}
  We thus find that the transformed brackets involving the variables
  $\psi,\alpha$ are given by
  \begin{equation*}
    \begin{aligned}
      &\{\psi \circ \Phi^p, h\circ \Phi^p\}_D = 0, \\
      &\{\psi\circ \Phi^p, \bj_X\circ \Phi^p\}_D = \bigl[-\idadi{X} \, \bq_\psi\bigr] \circ \Phi^p, \\
      &\{\alpha\circ \Phi^p, h\circ \Phi^p\}_D = \bigl[\smashoperator{\sum_{Y\in\{M_3,\dots,B_g\}}} q_\alpha^a(J_{R,a}^Y+J_{L,a}^Y)h\bigr] \circ \Phi^p, \\
      &\{\alpha\circ \Phi^p, \bj_X\circ \Phi^p\}_D = \bigl[-\idadi{X} (\bq_\theta - \bq_\alpha \wedge \bt) - \bq_\alpha \wedge \bj_X\bigr] \circ \Phi^p.
    \end{aligned}
  \end{equation*}
  To determine the transformed brackets $\{F\circ \Phi^p, G\circ \Phi^p\}_D$
  for $F,G\in\cif(\restpogr)$, we calculate the brackets of functions
  $h\in\cif(\restlogr)$ and variables $\bj_X, \bj_Y$ for $X, Y\in
  \{M_3,\dots,B_g\}$. A direct computation yields
  \begin{align*}
    &\{j_X^a\circ \Phi^p, h\circ \Phi^p\}_D
      = \Bigl[\{j_X^a, h\} - \tidadi{X}{^a_g} \tensor{(V^p)}{^g_d} \smashoperator{\sum_{Y\in\{M_3,\dots,B_g\}}} (J^{R,d}_Y+J^{L,d}_Y)h\Bigr]\circ \Phi^p, \\
    \begin{split}
    &\{j_X^a\circ \Phi^p, j_Y^b\circ \Phi^p\}_D
      = \Bigl[\{j_X^a, j_Y^b\}
                      + \tidadi{X}{^a_g} \tensor{({V^p})}{^g_d}\,\tensor{\ee}{^d^b_f} j_Y^f\\[-.5em]
               &\quad - \tidadi{Y}{^b_g} \tensor{({V}^p)}{^g_d}\,\tensor{\ee}{^d^a_f} j_X^f
                      + \tidadi{X}{^a_c}\tidadi{Y}{^b_d} \ee^{cdf} m^p_f\Bigr] \circ \Phi^p,
    \end{split}\\
    &\text{with } \tensor{({V}^p)}{^b^c} = \tensor{V}{^b^c} + \partial_\alpha t^b q_\alpha^c,\\
    &\text{and } \bm^p= \bm+(1-\tensor{V}{^d_d}) \bt
                   + V^T \bt
                   + \bq_\psi \wedge \partial_\psi\bt
                   + (\bq_\theta - \bq_\alpha\wedge\bt)\wedge\partial_\alpha\bt.
  \end{align*}
  For all $F,G\in\cif(\restpogr)$ the transformed bracket therefore takes
  the form $\{F\circ \Phi^p, G\circ \Phi^p\}_D= [\frrestbivector[r^p] (\diffd F
  \otimes \diffd G)] \circ \Phi^p$ with $r^p$ given by
  \eqref{eq:r-gauged-classification-translation}.  This proves the claim.
\end{enumerate}
\end{proof}

Lemma \ref{lem:poinc-trafo} gives explicit expressions for the transformation
of the bracket in Definition~\ref{def:extended-dirac-bracket} under dynamical
Poincar\'e transformations which depend on the variables $\psi, \alpha$ and act
diagonally on $\restpogr$. It allows one to identify the transformed bracket
with another bracket of the form in Definition \ref{def:extended-dirac-bracket}
associated with transformed maps $r^p:\RR^2\to\poal\oo\poal$ and
$\bq_\psi^p,\bq_\alpha^p,\bq_\theta^p:\RR^2\to\RR^3$.

As the bracket $\{\,,\,\}_D$ in Definition \ref{def:extended-dirac-bracket} is
modelled after the Dirac bracket in Theorem \ref{thm:generic-dirac-bracket} and
Poincar\'e transformations of this type can be viewed as transitions between
different gauge fixing conditions, it is natural to ask whether these
Poincar\'e transformations preserve the Jacobi identity.  For this, note that,
given any Poisson manifold $(M, \{\,,\,\})$ and a diffeomorphism $\Phi: M \to
M$, one obtains a new Poisson bracket $\{\,,\,\}^\Phi$ on $M$ by setting $\{f,
g\}^\Phi \defeq \{f \circ \Phi, g \circ \Phi\} \circ \Phi^\inv$.  As shown in
Lemma \ref{lem:poinc-trafo}, the bracket $\{\,,\,\}^p_D = \{\,,\,\}^{\Phi^p}_D$
obtained from $\{\,,\,\}_D$ by applying the diffeomorphism $\Phi^p$ from
\eqref{eq:poincare-diffeo} is again of the form in Definition
\ref{def:extended-dirac-bracket}, but with the transformed maps $r^p,
\bq_\psi^p, \bq_\alpha^p, \bq_\theta^p$.  From Theorem
\ref{thm:jacobi-for-extended-dirac} we thus deduce that these transformed maps
are solutions of the CDYBE \eqref{eq:dcybe} and the additional conditions
\eqref{eq:q-relations} if and only if the original maps $r, \bq_\psi,
\bq_\alpha, \bq_\theta$ are.

\begin{corollary}
  Let $r: \RR^2\to \poal \oo \poal$ as in \eqref{eq:extended-dirac-r} be a
  solution of the CDYBE with $x^1=\psi$, $x^2=\alpha$, $x_1=q_\psi^a P_a$,
  $x_2=q_\alpha^a J_a+q_\theta^a P_a$ such that the conditions in
  \eqref{eq:q-relations} are satisfied and let $p:\RR^2\to \pogr$ be a
  dynamical Poincaré transformation as in Lemma \ref{lem:poinc-trafo}. Then
  $r^p:\RR^2\to \poal \oo \poal$ is a solution of the CDYBE with $x^1=\psi$,
  $x^2=\alpha$ and $x_1=q_\psi^{p,a}P_a$,
  $x_2=q_\alpha^{p,a}J_a+q_\theta^{p,a}P_a$ and satisfies
  \eqref{eq:q-relations}.  The map $\Phi^p$ is a Poisson isomorphism between
  the Poisson structures $\{\,,\,\}_D$ and $\{\,,\,\}_D^p$.
\end{corollary}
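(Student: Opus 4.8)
The plan is to combine the ``if and only if'' of Theorem~\ref{thm:jacobi-for-extended-dirac} with the transformation formulas of Lemma~\ref{lem:poinc-trafo} and the elementary fact, recorded in the paragraph preceding the corollary, that applying a diffeomorphism $\Phi$ to a Poisson bracket $\{\,,\,\}$ yields a new Poisson bracket $\{\,,\,\}^\Phi$. The whole statement should then follow by a short logical chain, with no genuinely new computation: all the analytic work has already been carried out in Lemma~\ref{lem:poinc-trafo} and Theorem~\ref{thm:jacobi-for-extended-dirac}.

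First I would record that $\Phi^p$ is a diffeomorphism of $\RR^2\times\restpogr$: it fixes the $(\psi,\alpha)$-coordinates and acts by simultaneous conjugation with $p(\psi,\alpha)$ on the remaining factors, so its inverse is the conjugation with $p(\psi,\alpha)^\inv$. Next, since $r,\bq_\psi,\bq_\alpha,\bq_\theta$ solve the CDYBE and satisfy \eqref{eq:q-relations} by hypothesis, Theorem~\ref{thm:jacobi-for-extended-dirac} guarantees that $\{\,,\,\}_D$ obeys the Jacobi identity, i.e.\ is a genuine Poisson structure. Applying the diffeomorphism $\Phi^p$ then shows that $\{\,,\,\}_D^p=\{\,,\,\}_D^{\Phi^p}$ is again a Poisson bracket. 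By Lemma~\ref{lem:poinc-trafo} this transformed bracket is once more of the form in Definition~\ref{def:extended-dirac-bracket}, now built from the data $r^p,\bq_\psi^p,\bq_\alpha^p,\bq_\theta^p$ of \eqref{eq:trafo-quant}. I would then invoke Theorem~\ref{thm:jacobi-for-extended-dirac} a second time, in the reverse direction: because $\{\,,\,\}_D^p$ has the structural form of Definition~\ref{def:extended-dirac-bracket} \emph{and} satisfies the Jacobi identity, its defining map $r^p$ must solve the CDYBE with $x_1=q_\psi^{p,a}P_a$, $x_2=q_\alpha^{p,a}J_a+q_\theta^{p,a}P_a$ and must satisfy \eqref{eq:q-relations}. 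The Poisson-isomorphism claim is then immediate: the identity $\{F\circ\Phi^p,G\circ\Phi^p\}_D=\{F,G\}_D^p\circ\Phi^p$ from Lemma~\ref{lem:poinc-trafo} is exactly the statement that $\Phi^p$ intertwines $\{\,,\,\}_D$ and $\{\,,\,\}_D^p$, and $\Phi^p$ being a diffeomorphism upgrades this to an isomorphism.

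The one point that needs care — and the only place where the argument is more than pure logic — is checking that $\{\,,\,\}_D^p$ really falls within the scope of Theorem~\ref{thm:jacobi-for-extended-dirac}, i.e.\ that the transformed data again meet the structural requirements of Definition~\ref{def:extended-dirac-bracket} (in particular $\bq_\psi^p\wedge\bq_\alpha^p=0$ together with the various $\partial_\alpha$-constraints on $\bq_\psi^p,\bq_\alpha^p,\bq_\theta^p$ and on the associated $V^p,\bm^p$). Here I would use that $\bq_\psi^p=\Ad(g)\bq_\psi$ and $\bq_\alpha^p=\Ad(g)\bq_\alpha$ with $g\in\logr$, so that the Lorentzian wedge is preserved and $\bq_\psi^p\wedge\bq_\alpha^p=\Ad(g)(\bq_\psi\wedge\bq_\alpha)=0$, while $\partial_\alpha g=\partial_\alpha^2\bt=0$ propagates the required $\alpha$-independence through \eqref{eq:trafo-quant}. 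Since Lemma~\ref{lem:poinc-trafo} already asserts that the transformed bracket is of the Definition~\ref{def:extended-dirac-bracket} form, these verifications are implicit in that lemma, so no further obstacle remains.
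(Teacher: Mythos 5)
Your proposal is correct and follows essentially the same route as the paper: the paragraph preceding the corollary argues exactly as you do, combining the fact that a diffeomorphism pulls back a Poisson bracket to a Poisson bracket, Lemma~\ref{lem:poinc-trafo}'s identification of $\{\,,\,\}_D^{p}=\{\,,\,\}_D^{\Phi^p}$ as a bracket of the form in Definition~\ref{def:extended-dirac-bracket} with data $r^p,\bq_\psi^p,\bq_\alpha^p,\bq_\theta^p$, and the ``if and only if'' of Theorem~\ref{thm:jacobi-for-extended-dirac} to transfer the CDYBE and conditions \eqref{eq:q-relations} to the transformed data. Your extra check that the transformed data satisfy the structural hypotheses (the wedge condition via $\Ad(g)$-equivariance and the $\alpha$-dependence constraints) is a sound precaution that the paper leaves implicit in Lemma~\ref{lem:poinc-trafo}.
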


As discussed in the previous section, the equivalence of the CDYBE and
conditions \eqref{eq:q-relations} with the Jacobi identity for the bracket in
Definition \ref{def:extended-dirac-bracket} suggests that the solutions should
be viewed as a generalisation of the classical dynamical $r$-matrices in
Definition \ref{def:dyn-r-matrix} for which the associated abelian subalgebra
of $\poal$ is allowed to vary with the variables $\psi$ and $\alpha$. The
transformation formula \eqref{eq:trafo-quant} for the maps
$r:\RR^2\to\poal\oo\poal$ under dynamical Poincar\'e transformations and the
fact that these Poincar\'e transformations preserve the Jacobi identity
suggests that these Poincar\'e transformations should be interpreted as a
generalised version of the gauge transformations of classical dynamical
$r$-matrices introduced by Etingof and Varchenko in their work on the
classification of classical dynamical $r$-matrices \cite{Etingof:1998aa} (see
also \cite{Schiffmann:1998aa, Etingof:1999aa, Etingof:2000aa, Xu:2002aa} for
further work on the classification). We summarise the relevant definitions and
results from \cite{Etingof:1998aa}.

\begin{definition}[\cite{Etingof:1998aa}]\label{def:gauge-trafo}
Let $G$ be a Lie group, $H\subset G$ an abelian subgroup and
$r:\ah^*\to\ag\oo\ag$ a classical dynamical $r$-matrix for $(\ah,\ag)$. A
\definee{gauge transformation} of $r$ is a smooth function $\Pi: \ah^*\to G^H$
into the centraliser $G^H$ of $H$ in $G$ which acts on $r$ according
to\footnote{The sign difference between this formula and the one in
  \cite{Etingof:1998aa} is due to a different sign convention for the
  CDYBE~\eqref{eq:dcybe}.}
\begin{equation}\label{eq:g-trafo-r}
  r^{\Pi} = \bigl(\Ad(\Pi) \oo \Ad(\Pi)\bigr) [r + \bar{\eta}^\Pi - \bar{\eta}^\Pi_{21}],
\end{equation}
where $\bar \eta_\Pi:\ah^*\to \ah\oo \ag^{H}$ is the map dual to the
$\ag^\ah$-valued one-form $\eta_\Pi=\Pi^\inv\diffd\Pi$ on $\ah^*$ and
$\bar\eta^{21}_\Pi$ denotes its flip with values in $\ag^\ah\oo\ah$.
\end{definition}

The name gauge transformation is motivated by the fact that it maps classical
dynamical $r$-matrices for $(\ah,\ag)$ to classical dynamical $r$-matrices for
$(\ah,\ag)$. It is shown in \cite{Etingof:1998aa} that if $r$ is an
$\ah$-invariant solution of the CDYBE, then this also holds for the transformed
$r$-matrix $r^{\Pi}$.

\begin{theorem}[\cite{Etingof:1998aa}]
Let $G$ be a Lie group, $H\subset G$ an abelian subgroup and
$r:\ah^*\to\ag\oo\ag$ a classical dynamical $r$-matrix for $(\ah,\ag)$. Then
for every gauge transformation $\Pi:\ah^*\to G^H$, $r^\Pi$ is a classical
dynamical $r$-matrix for $(\ah,\ag)$.
\end{theorem}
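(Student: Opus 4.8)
The plan is to check the two defining properties of a classical dynamical $r$-matrix for the transformed $r^\Pi$ in \eqref{eq:g-trafo-r}: $\ah$-invariance and the CDYBE \eqref{eq:dcybe}. Throughout I would write $\eta_i\defeq\Pi^\inv\partial_{x^i}\Pi\in\ag^\ah$ for the components of $\eta_\Pi=\Pi^\inv\diffd\Pi$, so that $\bar\eta^\Pi=\sum_i x_i\oo\eta_i$ and $\bar\eta^\Pi_{21}=\sum_i\eta_i\oo x_i$, and abbreviate $\partial_i\defeq\partial_{x^i}$, $\mu\defeq\bar\eta^\Pi-\bar\eta^\Pi_{21}$ and $\tilde r\defeq r+\mu$, so that $r^\Pi=(\Ad(\Pi)\oo\Ad(\Pi))\tilde r$. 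Two facts will be used repeatedly: since $\Pi$ takes values in the centraliser $G^H$, $\Ad(\Pi)$ fixes every $x_i\in\ah$ and commutes with $\mathrm{ad}(x)$ for $x\in\ah$; and $\eta_\Pi$ obeys the Maurer-Cartan equation, which in the flat coordinates $x^i$ reads $\partial_i\eta_j-\partial_j\eta_i+[\eta_i,\eta_j]=0$.

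The $\ah$-invariance is the easy part. Since $\ah$ is abelian and each $\eta_i\in\ag^\ah$, every leg of $\mu$ is $\ah$-invariant, so together with the assumed invariance of $r$ the element $\tilde r$ satisfies $(\mathrm{ad}(x)\oo1+1\oo\mathrm{ad}(x))\tilde r=0$ for all $x\in\ah$. Because $\Pi\in G^H$, the operator $\Ad(\Pi)\oo\Ad(\Pi)$ commutes with the diagonal $\ah$-action, so $r^\Pi$ is $\ah$-invariant as well; meromorphy is plainly preserved.

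For the CDYBE I would first strip off the conjugation. As the algebraic bracket $[[\,\cdot\,,\,\cdot\,]]$ is built from Lie brackets and $\Ad(\Pi)$ is a pointwise automorphism, $[[r^\Pi,r^\Pi]]=(\Ad(\Pi))^{\oo3}[[\tilde r,\tilde r]]$. Differentiating $r^\Pi=(\Ad(\Pi))^{\oo2}\tilde r$ with $\partial_i\Ad(\Pi)=\Ad(\Pi)\,\mathrm{ad}(\eta_i)$, and using the identity $\Ad(\Pi)x_i=x_i$ to extract the common factor $(\Ad(\Pi))^{\oo3}$ out of each term $x_i^{(1)}\partial_i r^\Pi_{23}$, and so on, the CDYBE \eqref{eq:dcybe} for $r^\Pi$ becomes, after cancelling $(\Ad(\Pi))^{\oo3}$, equivalent to the covariant equation
\begin{equation*}
  [[\tilde r,\tilde r]]=\sum_i\bigl(x_i^{(1)}\nabla_i^{(23)}\tilde r_{23}-x_i^{(2)}\nabla_i^{(13)}\tilde r_{13}+x_i^{(3)}\nabla_i^{(12)}\tilde r_{12}\bigr),
\end{equation*}
where $\nabla_i^{(kl)}\defeq\partial_i+\mathrm{ad}(\eta_i)^{(k)}+\mathrm{ad}(\eta_i)^{(l)}$ denotes the $\eta$-covariant derivative on the two occupied legs.

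It then remains to establish this covariant identity by inserting $\tilde r=r+\mu$ and collecting terms by their degree in $\eta$. The terms of degree $0$ reduce to $[[r,r]]=\sum_i(x_i^{(1)}\partial_i r_{23}-x_i^{(2)}\partial_i r_{13}+x_i^{(3)}\partial_i r_{12})$, which holds by the CDYBE for $r$. The degree-one terms in which $\eta$ is undifferentiated pair the cross contributions of $[[\tilde r,\tilde r]]$ that are linear in $r$ and $\mu$ against the covariantisation terms $x_i\,\mathrm{ad}(\eta_i)\,r$; here the $\ah$-invariance of $r$ (which lets one move $\mathrm{ad}(x_j)$ between legs) together with $\eta_i\in\ag^\ah$ forces them to cancel. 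Finally, the terms carrying a derivative $\partial_i\eta_j$ (from $x_i\,\partial_i\mu$) and the terms quadratic in $\eta$ (from $[[\mu,\mu]]$ and from $x_i\,\mathrm{ad}(\eta_i)\,\mu$) assemble, leg by leg, into the Maurer-Cartan combination $\partial_i\eta_j-\partial_j\eta_i+[\eta_i,\eta_j]$ and hence vanish. I expect the main obstacle to be precisely this last bookkeeping: tracking in which tensor factor each $x_i$, each $\eta_j$ and each derivative resides, and arranging the signs coming from the $+,-,+$ pattern of \eqref{eq:dcybe} so that the differentiated and the quadratic $\eta$-terms collapse exactly onto the vanishing Maurer-Cartan expression. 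The conceptual inputs, namely equivariance of $[[\,\cdot\,,\,\cdot\,]]$, the identity $\Ad(\Pi)x_i=x_i$, the $\ah$-invariance of $r$, and the flatness of $\eta_\Pi$, are exactly what the cancellation needs; only the combinatorial matching is lengthy.
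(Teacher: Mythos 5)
Your proposal is correct. Note that the paper itself contains no proof of this statement --- it is imported from \cite{Etingof:1998aa} --- and your argument is essentially the standard Etingof--Varchenko computation: equivariance of $[[\,\cdot\,,\,\cdot\,]]$ under the pointwise automorphism $\Ad(\Pi)$, the identity $\Ad(\Pi)\vert_{\ah}=\mathrm{id}$ to strip off $(\Ad(\Pi))^{\otimes 3}$, cancellation of the terms linear in $r$ and $\mu$ against the covariantisation terms, and collapse of the remaining terms, leg by leg, onto the Maurer--Cartan identity; all of these steps check out. One small attribution point: the linear-in-$\eta$ cancellation uses only the $\ah$-invariance of $r$, whereas the hypothesis $\eta_i\in\ag^{\ah}$ is what kills the mixed brackets $[x_j,\eta_k]$ arising in $[[\mu,\mu]]$ and in the covariantisation of $\mu$, i.e.\ it enters in the Maurer--Cartan step rather than in the cross-term step.
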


By comparing formula \eqref{eq:trafo-quant} for the action of dynamical
Poincar\'e transformations on the solutions of the CDYBE with the one in
Definition \ref{def:gauge-trafo}, it becomes apparent that the two expressions
agree. The only difference is that in Definition \ref{def:gauge-trafo} the
gauge transformations are restricted to take values in the centraliser of the
subgroup $H\subset G$, while no such condition is imposed in our
case. Consequently, the dynamical Poincar\'e transformations also act on the
maps $\bq_\psi,\bq_\alpha,\bq_\theta:\RR^2\to \RR^3$ and hence on the
associated two-dimensional Lie subalgebra $\ah(\psi,\alpha)=\Span\{q_\psi^aP_a,
q_\alpha^aJ_a+q_\theta^aP_a\}$.  This is also apparent in the formula for the
map $\bar\eta^p: \RR^2\to \poal\oo\poal$ in \eqref{eq:trafo-eta}, which depends
on the chosen basis of the subalgebra $\ah(\psi,\alpha)$.

It is therefore instructive to consider the gauge transformations
\eqref{eq:g-trafo-r} for the classical dynamical $r$-matrices from Lemma
\ref{lem:dcybe-solutions} which are associated with fixed Cartan subalgebras
$\ah_a=\Span\{P_0,J_0\}\subset \poal$ and $\ah_b=\Span\{P_1,J_1\}\subset\poal$.
In that case, the abelian subgroup $H$ in Definition \ref{def:gauge-trafo} is
obtained by exponentiating, respectively, the Cartan subalgebras $\ah_a$ and
$\ah_b$, and the associated centraliser $G^H$ coincides with $H$.  With our
additional restriction that the Lorentzian component of $\Pi$ does not depend
on $\alpha$ and its translational component depends on $\alpha$ at most
linearly, the map $\Pi:\ah^*\to G^H$ in Definition \ref{def:gauge-trafo}
therefore takes the form $\Pi(\psi,\alpha)=(\exp(-\beta(\psi)J_j),
-[\gamma(\psi)+\alpha\delta(\psi)]\be_j)$ with $\beta,\gamma,\delta: \RR\to
\RR$ and $j=0$ in case $a$ and $j=1$ in case b). The transformation of the
classical dynamical $r$-matrices $r_{a,b}:\RR\to\poal\oo\poal$ in Lemma
\ref{lem:dcybe-solutions} under $\Pi$ is thus given by
\begin{equation*}
  r^\Pi(\psi,\alpha)=r(\psi,\alpha)-[\beta'(\psi)-\delta(\psi)] P_j\wedge J_j,
\end{equation*}
with $j=0$ in case a) and $j=1$ in case b). As $[\beta'(\psi)-\delta(\psi)]
P_j\wedge J_j$ satisfies the classical dynamical Yang-Baxter equation and
because $[[r(\psi,\alpha), P_j\wedge J_j]]+[[P_j\wedge J_j,
r(\psi,\alpha)]]=0$, it is directly apparent that this yields another classical
dynamical $r$-matrix for $(\ah,\poal)$ and only modifies $r$ by adding a twist.

Note in particular that $r_{a,b}$ are invariant under gauge transformations
$\Pi:\ah_{a,b}\to H$ of the form $\Pi(\psi,\alpha)=(\exp(-\beta(\psi)J_j),
-\alpha\beta'(\psi) P_j)$ and $\Pi(\psi,\alpha)=(\mathds{1},
-\gamma(\psi)P_j)$. The former correspond to combinations of a Lorentz
transformation that preserves $\ah_{a,b}$ and a translation in the direction of
its axis.  The latter correspond to translations which do not depend on the
parameter $\alpha$.  By specialising formula
\eqref{eq:extended-dirac-bracket-for-gauge-fixed} to the case at hand in which
$\bq_\psi=\bq_\alpha=e_j$, $\bq_\theta=0$ one finds that these are precisely
the flows that the variables $\psi\cdot \alpha$ and $\psi$ generate via the
bracket $\{\,,\,\}_D$ in Definition \ref{def:extended-dirac-bracket}.

\subsection{Standard solutions and classical dynamical r-matrices}

Although Theorem \ref{thm:jacobi-for-extended-dirac} provides a direct link
between the Jacobi identity for the bracket in Definition \ref{def:extended-dirac-bracket}
and solutions of the CDYBE that are subject to the additional conditions
\eqref{eq:q-relations}, the disadvantage of this description is that the
associated solutions $r:\RR^2\to\poal\oo\poal$ are in general quite complicated
and the additional conditions \eqref{eq:q-relations} do not have an immediate
geometrical interpretation.

It is therefore natural to ask if they can be related to a simple set of
standard solutions which define classical dynamical $r$-matrices in the sense
of Definition \ref{def:dyn-r-matrix}. The results of the previous subsection
suggest that this can be achieved by applying dynamical Poincar\'e
transformations.  As we will see in the following, this is possible for those
values of the parameters $\psi$ for which $\bq_\psi$ and $\bq_\alpha$ are
timelike or spacelike. A necessary and sufficient condition then is that the
maps $r:\RR^2\to\poal\oo\poal$ and $\bq_\psi,
\bq_\alpha,\bq_\theta:\RR^2\to\RR^3$ satisfy the equations
\eqref{eq:q-relations} in Theorem~\ref{thm:jacobi-for-extended-dirac}. We have
the following lemma:

\begin{lemma}\label{lem:standard-form}
Consider $r:I\times \RR\to\poal\oo\poal$ as in \eqref{eq:extended-dirac-r} and
$\bq_\psi,\bq_\alpha,\bq_\theta: I\times \RR\to\RR^3$, where $I\subset\RR$ is
an open interval with $\bq_\psi^2(\psi),\bq_\alpha^2(\psi)\neq 0$,
$\bq_\psi(\psi)\wedge\bq_\alpha(\psi)=0$ and
$\partial_\alpha\bq_\psi(\psi)=\partial_\alpha\bq_\alpha(\psi)=\partial_\alpha^2\bq_\theta(\psi)=0$
for all $\psi\in I$.  If $r$ and $\bq_\psi,\bq_\alpha,\bq_\theta$ satisfy the
conditions \eqref{eq:q-relations}, then there exists a Poincar\'e
transformation as in Lemma \ref{lem:poinc-trafo} such that the transformed
quantities $r^p: I \times \RR\to\poal\oo\poal$,
$\bq_\psi^p,\bq_\alpha^p,\bq_\theta^p:I\times \RR\to \RR^3$ defined by
\eqref{eq:trafo-quant} are of the form
\begin{equation}\label{eq:r-simple}
  r^p=\tfrac 1 2 (P_a\oo J^a+J^a\oo P_a)-\tfrac 1 2 \ee^{abc} w^p_c P_a\wedge J_b+\tfrac 1 2 \ee^{abc} m^p_c P_a\wedge P_b,
\end{equation}
with one of the following:
\begin{compactenum}[\quad a)]
\item $\bq_\psi^p,\bq^p_\alpha,\bq^p_\theta,\bw^p,\bm^p\in\Span\{\be_0\}$ and
  $q_\alpha^{p,0}\partial_\alpha q_\theta^{p,0}=q_\psi^{p,0}\partial_\psi
  q_\alpha^{p,0}$,

\item $\bq_\psi^p,\bq_\alpha^p,\bq_\theta^p,\bw^p,\bm^p\in\Span\{\be_1\}$ and
  $q_\alpha^{p,1}\partial_\alpha q_\theta^{p,1}=q_\psi^{p,1}\partial_\psi
  q_\alpha^{p,1}$.
\end{compactenum}
\end{lemma}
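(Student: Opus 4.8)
The plan is to construct the required Poincaré transformation in two stages — a $\psi$-dependent Lorentz rotation that aligns $\bq_\psi,\bq_\alpha$ with a fixed axis $\be_j$, followed by a translation that aligns $\bq_\theta$ — and then to extract the standard form \eqref{eq:r-simple} from the conditions \eqref{eq:q-relations}, which transform covariantly under dynamical Poincaré transformations (Lemma \ref{lem:poinc-trafo}). First I would fix the case: since $\bq_\psi^2$ is continuous and nowhere zero on the interval $I$, its sign is constant, and I set $j=0$ when $\bq_\psi^2>0$ (case a) and $j=1$ when $\bq_\psi^2<0$ (case b). Because $\bq_\psi\wedge\bq_\alpha=0$ with $\bq_\alpha^2\neq0$, the vector $\bq_\alpha$ is everywhere parallel to $\bq_\psi$ and of the same causal type.

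For the Lorentz stage, I would use that $\logr$ acts transitively on each connected component of the level set $\{\bx^2=\pm1\}$ (a sheet of the two-sheeted hyperboloid in the timelike case, the connected one-sheeted hyperboloid in the spacelike case). Normalising $\bq_\psi$ therefore defines a smooth map from $I$ into a single orbit, and since $I$ is contractible one can choose a smooth lift $g:I\to\logr$ with $\Ad(g(\psi))\bq_\psi(\psi)\in\Span\{\be_j\}$. Applying the associated dynamical Lorentz transformation and using \eqref{eq:trafo-quant} gives $\bq_\psi^p=\Ad(g)\bq_\psi\in\Span\{\be_j\}$ and, since $\bq_\alpha\parallel\bq_\psi$, also $\bq_\alpha^p\in\Span\{\be_j\}$; the $\alpha$-independence of $\bq_\psi^p,\bq_\alpha^p$ is retained because $\partial_\alpha g=0$.

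For the translation stage, \eqref{eq:trafo-quant} shows $\bq_\theta^p=\Ad(g)(\bq_\theta-\bq_\alpha\wedge\bt)$ while $\bq_\psi^p,\bq_\alpha^p$ are unchanged. Since $\bq_\alpha^p\in\Span\{\be_j\}$ is nonzero, the linear map $\bt\mapsto\bq_\alpha^p\wedge\bt$ is onto the plane $\be_j^{\perp}$, so I can solve for $\bt$ cancelling the component of $\bq_\theta^p$ transverse to $\be_j$; as $\bq_\theta$ is at most linear in $\alpha$ and $\bq_\alpha$ is $\alpha$-independent, this $\bt$ is at most linear in $\alpha$, so that $p=(g,-\Ad(g)\bt)$ is an admissible dynamical Poincaré transformation. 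After this stage all three of $\bq_\psi^p,\bq_\alpha^p,\bq_\theta^p$ lie in $\Span\{\be_j\}$.

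It remains to read off the form of $r^p$. Since the conditions \eqref{eq:q-relations} are preserved under the substitution \eqref{eq:trafo-quant} (a direct algebraic check, as underlies the Corollary following Lemma \ref{lem:poinc-trafo}), the transformed data again satisfy them, and I would substitute the alignment $\bq_\psi^p,\bq_\alpha^p,\bq_\theta^p\in\Span\{\be_j\}$ into \eqref{eq:q-relations} and analyse them component by component. Once $(V^p)^{ab}$ is shown to reduce to $\tfrac12\eta^{ab}+\tfrac12\ee^{abc}w^p_c$, the system collapses to the simplified equations \eqref{eq:q-relations-simple}: the relation $\bq_\psi^p\wedge(\partial_\psi\bq_\psi^p-\tfrac12\bw^p)=0$ forces $\bw^p\in\Span\{\be_j\}$, the relation $\bq_\alpha^p\wedge\bm^p=0$ forces $\bm^p\in\Span\{\be_j\}$, and the middle equation of \eqref{eq:q-relations-simple} reduces to the scalar identity $q_\alpha^{p,j}\partial_\alpha q_\theta^{p,j}=q_\psi^{p,j}\partial_\psi q_\alpha^{p,j}$; assembling the symmetric and antisymmetric parts of $V^p$ then yields \eqref{eq:r-simple}. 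I expect the main obstacle to be precisely the step that reduces $V^p$ to this special form — showing that \eqref{eq:q-relations}, evaluated in the aligned frame, force the symmetric–traceless part of $V^p$ to vanish and fix its trace together with its $\be_j$-row and $\be_j$-column. The residual transformation freedom (rotations about the $\be_j$-axis and the leftover translational parameters) only controls the $\be_j$-row and column of $V^p$, so the vanishing of the remaining entries must be an intrinsic consequence of \eqref{eq:q-relations} rather than of the gauge choice.
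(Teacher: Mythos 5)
Your two construction stages (smooth Lorentz alignment of $\bq_\psi,\bq_\alpha$ via transitivity on the hyperboloids, then a translation) are sound, and they allocate the translational freedom differently from the paper: the paper writes $V^{ab}=\tfrac12\eta^{ab}+Q^{ab}+\tfrac12\ee^{abc}w_c$ with $Q$ symmetric, spends the $\alpha$-linear part of the translation on killing the \emph{entire} $\be_j$-row and column of $Q$, and uses only an $\alpha$-independent further translation to align the $\alpha$-independent part of $\bq_\theta$; the conditions \eqref{eq:q-relations} then yield $\partial_\alpha\bq_\theta\in\Span\{\be_j\}$, $Q=0$, and the alignment of $\bw$ and $\bm$. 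You instead spend the whole translation on aligning $\bq_\theta$. That is a legitimate alternative, but it means that the reduction of $V^p$ to $\tfrac12\eta^{ab}+\tfrac12\ee^{abc}w^p_c$ --- which \emph{is} the content of \eqref{eq:r-simple} --- must now come entirely from \eqref{eq:q-relations}, and precisely this step is left in your proposal as an ``expectation'': no component analysis of \eqref{eq:q-relations} in the aligned frame is carried out. This is a genuine gap, not a routine verification; it is where the paper's proof does all its work.

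Moreover, the roadmap you sketch for the missing step is misallocated, so carrying it out as stated would leave one entry unaccounted for. Once $\bq_\psi^p,\bq_\alpha^p,\bq_\theta^p$ are all aligned with $\be_j$, the residual transformations are rotations about $\be_j$ and translations $\bt=\tau\be_j$ (any transverse translation re-introduces a transverse component $-\bq_\alpha^p\wedge\bt$ into $\bq_\theta^p$). By the formulas in Lemma \ref{lem:poinc-trafo}, such a rotation with angle $\rho(\psi)$ and such a translation shift only the single entry $(Q^p)^{jj}$, by $q_\psi^{p,j}\rho'$ and $q_\alpha^{p,j}\partial_\alpha\tau$ respectively, while merely rotating the remaining entries; they cannot kill the $\be_j$-row and column, contrary to your final paragraph. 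The actual division of labour is the opposite of the one you anticipate: evaluating \eqref{eq:q-relations} component by component in your gauge (case a), $j=0$) forces $(Q^p)^{01}=(Q^p)^{02}=(Q^p)^{12}=0$ and $(Q^p)^{11}=(Q^p)^{22}=0$ together with $\bw^p\in\Span\{\be_0\}$ and $\bm^p\wedge\bq_\alpha^p=0$, but leaves $(Q^p)^{00}$ \emph{completely unconstrained} --- that entry is pure gauge and must be removed by the residual translation (which one then has to check preserves all alignments already achieved; it does, since its only effect on $V^p$ is the $(0,0)$-entry and its shift of $\bm^p$ stays in $\Span\{\be_0\}$). So your approach can be completed, but the decisive computation is absent, and your guess about what is ``an intrinsic consequence of \eqref{eq:q-relations}'' versus what is gauge fails exactly at the entry that the conditions cannot reach.
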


\begin{proof} $\quad$
\begin{enumerate}
\item Let $r$ be of the form \eqref{eq:extended-dirac-r} and
  $\bq_\psi,\bq_\alpha,\bq_\theta:I\times\RR\to\RR^3$ with
  $\bq_\psi\wedge\bq_\alpha=0$ and either $\bq_\psi^2,\bq_\alpha^2>0$ (case a)
  or $\bq_\psi^2,\bq_\alpha^2<0$ (case b).  Then the formulas for the action of
  dynamical Lorentz transformations in Lemma \ref{lem:poinc-trafo} imply that
  via a suitable Lorentz transformation $g:\RR^2\to \logr$, $\partial_\alpha
  g=0$, we can achieve one of the following: a)
  $\bq_\psi,\bq_\alpha\in\Span\{\be_0\}$ or b)
  $\bq_\psi,\bq_\alpha\in\Span\{\be_1\}$.

  The resulting matrix $V: \RR^2 \to \Mat(3, \RR)$ in
  \eqref{eq:extended-dirac-r} can be decomposed into a symmetric and an
  antisymmetric component according to
  \begin{equation*}
    V^{ab}(\psi)=\tfrac 1 2 \eta^{ab}+ Q^{ab}(\psi)+\tfrac 1 2 \ee^{abc} w_c(\psi)
  \end{equation*}
  with $\bw:I\to\RR^3$ and $Q^{ab}:I\to\Mat(3, \RR)$ symmetric.  By applying a
  suitable translation $\bt: I\times \RR\to\RR^3$, $\partial_\alpha^2 \bt=0$,
  which does not affect $\bq_\psi,\bq_\alpha$, one can then achieve that the
  symmetric matrix $Q^{ab}:I\times \RR\to\Mat(3, \RR)$ satisfies
  $Q^{0a}=Q^{a0}=0$ $\forall a\in\{0,1,2\}$ in case a) and $Q^{1a}=Q^{a1}=0$
  $\forall a\in\{0,1,2\}$ in case b). With a further translation $\bt':\RR^2\to
  \RR^3$ which satisfies $\partial_\alpha\bt'=0$ and hence does not affect $V$,
  one can achieve that $\bq_\theta$ takes the form
  $\bq_\theta=\alpha\cdot \partial_\alpha\bq_\theta+\tilde\bq_\theta$ with
  $\partial_\alpha\tilde\bq_\theta=0$ and $\tilde\bq_\theta\in\Span\{\be_0\}$
  in case a) or $\tilde\bq_\theta\in\Span\{\be_1\}$ in case b).

\item After these transformations, the first condition in
  \eqref{eq:q-relations} is satisfied if and only if $\bw\in\Span\{\be_0\}$ and
  $Q^{11}+Q^{22}=0$ in case a) and $\bw\in\Span\{\be_1\}$ and $Q^{00}-Q^{22}=0$
  in case b).  Under these conditions, the second equation in
  \eqref{eq:q-relations} simplifies to
  \begin{equation*}
    q^a_\alpha\partial_\alpha q_\theta^b-q_\psi^b\partial_\psi q^a_\alpha
    +q_\alpha^d(\ee^{a}_{\;\;dh} Q^{bh}+\ee^{b}_{\;\;dh} Q^{ah})=0\qquad\forall a,b\in\{0,1,2\}.
  \end{equation*}
  This implies $\partial_\alpha\bq_\theta\in\Span\{\be_0\}$, $Q=0$ in case a)
  and $\partial_\alpha\bq_\theta\in\Span\{\be_1\}$, $Q=0$ in case b).
  Combining this with the previous results, we obtain
  $\bq_\theta\in\Span\{\be_0\}$ and $q_\alpha^0 \partial_\alpha
  q_\theta^0=q_\psi^0\partial_\psi q_\alpha^0$ in case a) and
  $\bq_\theta\in\Span\{\be_1\}$ and $q_\alpha^1 \partial_\alpha
  q_\theta^1=q_\psi^1\partial_\psi q_\alpha^1$ in case b).  Inserting these
  conditions into the third equation in \eqref{eq:q-relations}, one finds that
  this equation simplifies to the condition $\bm \wedge\bq_\alpha=0$, which
  implies $\bm\in\Span\{\be_0\}$ in case a) and $\bm\in\Span\{\be_1\}$ in case
  b).  This proves the claim.
\end{enumerate}
\end{proof}

If $r:\RR^2\to\poal\oo\poal$ and $\bq_\psi,\bq_\alpha,\bq_\theta:
\RR^2\to\RR^3$ are of the form in Lemma \ref{lem:standard-form}, then
$\ah(\psi,\alpha)=\Span\{q_\psi^a P_a, q^a_\alpha J_a+q_\theta^aP_a\}$ is a
fixed Cartan subalgebra of $\poal$ which no longer varies with $\psi$ and
$\alpha$. Moreover, a direct calculation shows that $r(\psi,\alpha)$ is then
invariant under the action of the Cartan subalgebra $\ah(\psi,\alpha)$: $
[\by\oo 1+1\oo \by, r(\psi,\alpha)]=0$ for all $\by\in \ah(\psi,\alpha)$.

This provides us with a natural geometrical interpretation of the conditions
\eqref{eq:q-relations} in Theorem~\ref{thm:jacobi-for-extended-dirac}. These
conditions ensure that for all values of $\psi$ for which
$\bq_\psi^2(\psi),\bq_\alpha^2(\psi)\neq 0$, the maps
$r:\RR^2\to\poal\oo\poal$, $\bq_\psi,\bq_\alpha,\bq_\theta:\RR^2\to\RR^3$
(which are {not} required to satisfy the CDYBE in Lemma
\ref{lem:standard-form}) can be brought into a standard form via a suitable
dynamical Poincar\'e transformation. The resulting map
$r:\RR^2\to\poal\oo\poal$ is then invariant under the fixed Cartan subalgebra
spanned by $\bq_\psi,\bq_\theta,\bq_\alpha$.  The conditions
\eqref{eq:q-relations} can therefore be viewed as a generalised or
Poincar\'e-transformed version of the restriction to a fixed Cartan subalgebra
in Definition \ref{def:dyn-r-matrix}.  As shown in the proof of Theorem
\ref{thm:jacobi-for-extended-dirac}, they ensure that the Jacobi identity holds
for mixed brackets involving functions of the variables $\psi,\alpha$ and
functions on $\restpogr$.

Lemma \ref{lem:standard-form} applies in particular to solutions of the CDYBE
that satisfy the conditions \eqref{eq:q-relations} and hence give rise to
Poisson structures $\{\,,\,\}_D$ on $\RR^2\times\restpogr$. This allows one to
(locally) classify all possible solutions of the CDYBE that arise from gauge
fixing conditions satisfying the conditions in Section
\ref{subsec:constraints-and-gauge-fixing} and hence all associated Dirac
brackets.

\begin{theorem}\label{thm:dcybe-standard-transformation}
Let $r:I\times \RR\to\poal\oo\poal$ be a solution of the CDYBE with $x^1=\psi$,
$x^2=\alpha$, $x_1=q_\psi^aP_a$, $x_2=q_\alpha^aJ_a+q_\theta^aP_a$ that
satisfies conditions \eqref{eq:q-relations} in Theorem
\ref{thm:jacobi-for-extended-dirac} and for which $\bq_\psi^2,\bq_\alpha^2\neq
0$, $\bq_\psi\wedge\bq_\alpha=0$ and
$\partial_\alpha\bq_\psi=\partial_\alpha\bq_\alpha=\partial_\alpha^2\bq_\theta=0$
on $I\times\RR$. Then there exists a Poincar\'e transformation $p:I\times\RR\to
\pogr$ as in Lemma \ref{lem:standard-form} and a diffeomorphism $\by=(y^1,
y^2): I\times\RR\to I'\times\RR$ with $\partial_\alpha y_1 = 0$ and
$\partial_\alpha^2 y_2=0$ such that one of the following holds:
\begin{compactenum}[\quad a)]
\item $\bq_\psi,\bq_\alpha,\bq_\theta \in \Span\{\be_0\}$ and
  \vspace{-0.3cm}
  \begin{equation*}
    r^p(\psi,\alpha)=\tfrac 1 2 (P_a\!\oo\!J^a \!+\! J^a\!\oo\!P_a) + \tfrac12 \tan\tfrac{y^1(\psi)}{2} \left(P_1\!\wedge\!J_2 \!-\! P_2\!\wedge\!J_1\right) + \frac{y^2(\psi,\alpha)}{4\cos^2\tfrac{y^1(\psi)}{2}}P_1\!\wedge\! P_2,
  \end{equation*}
  \vspace{-0.4cm}

\item $\bq_\psi,\bq_\alpha,\bq_\theta \in \Span\{\be_1\}$ and
  \vspace{-0.3cm}
  \begin{equation*}
    r^p(\psi,\alpha)=\tfrac 1 2 (P_a\!\oo\!J^a \!+\! J^a\!\oo\!P_a) + \tfrac12 \tanh\tfrac{y^1(\psi)}{2} \left(P_2\!\wedge\!J_0 \!-\! P_0\!\wedge\!J_2\right) + \frac{y^2(\psi,\alpha)}{4\cosh^2\tfrac{y^1(\psi)}{2}}P_2\!\wedge\! P_0.
  \end{equation*}
\end{compactenum}
\end{theorem}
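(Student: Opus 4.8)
The plan is to prove the statement in two stages: first invoke Lemma~\ref{lem:standard-form} to move the varying Cartan subalgebra into a fixed standard position, and then use the fact that $r$ solves the CDYBE to reparametrise the dynamical variables into the explicit $\tan$/$\tanh$ form. Since $r$ satisfies the CDYBE and the conditions \eqref{eq:q-relations}, Lemma~\ref{lem:standard-form} supplies a dynamical Poincaré transformation $p$ such that $r^p$ has the form \eqref{eq:r-simple} with $\bq_\psi^p,\bq_\alpha^p,\bq_\theta^p,\bw^p,\bm^p\in\Span\{\be_0\}$ (case a) or $\in\Span\{\be_1\}$ (case b), and with the associated symmetric matrix $Q$ vanishing. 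By the corollary to Lemma~\ref{lem:poinc-trafo}, dynamical Poincaré transformations preserve the CDYBE, so $r^p$ is again a solution with the transformed directions. It then remains to show that a pure reparametrisation $\by=(y^1,y^2)$ of $(\psi,\alpha)$ turns $r^p$ into the corresponding standard solution of Lemma~\ref{lem:dcybe-solutions}.

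I would carry out case a in detail; case b is identical after replacing $\be_0$ by $\be_1$ and the trigonometric by hyperbolic functions, the sign changes reflecting $\be_0^2=1$ versus $\be_1^2=-1$. Writing $\bw^p=w^p\be_0$, $\bm^p=m^p\be_0$ and denoting the $0$-components of $\bq_\psi^p,\bq_\alpha^p,\bq_\theta^p$ by $q_\psi^{p,0},q_\alpha^{p,0},q_\theta^{p,0}$, the fact that $Q=0$ means $V^{ab}=\tfrac12\eta^{ab}+\tfrac12\ee^{abc}w^p_c$, so the CDYBE reduces via \eqref{eq:dcybe-with-simple-UV} (the wedge condition being automatic since everything lies in $\Span\{\be_0\}$) to the two scalar identities
\begin{equation*}
  1+(w^p)^2+2\,q_\psi^{p,0}\,\partial_\psi w^p=0,\qquad
  q_\alpha^{p,0}\,\partial_\alpha m^p=-\tfrac12\,q_\psi^{p,0}\,\partial_\psi w^p,
\end{equation*}
together with the $P\wedge P\wedge P$-equation $\Omega=q_\psi^{p,0}\partial_\psi m^p+q_\theta^{p,0}\partial_\alpha m^p+w^p m^p=0$ from the second line of \eqref{eq:dcybe-with-UV}.

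Next I would define $y^1(\psi)$ by the ODE $\diffd y^1/\diffd\psi=1/q_\psi^{p,0}(\psi)$, which is a local diffeomorphism onto an interval $I'$ because $q_\psi^{p,0}\neq0$ (as $\bq_\psi^2\neq0$) and satisfies $\partial_\alpha y^1=0$. Substituting into the first identity above converts it into $\diffd w^p/\diffd y^1=-\tfrac12(1+(w^p)^2)$, whose solution $w^p=-\tan(y^1/2)$ (after absorbing the integration constant into a shift of $y^1$) reproduces exactly the $P\wedge J$-coefficient of $r_a$. For the $P\wedge P$-part I would introduce $y^2$ as a solution of $\partial_\alpha y^2=1/q_\alpha^{p,0}$ and $\partial_\psi y^2=-q_\theta^{p,0}/(q_\psi^{p,0}q_\alpha^{p,0})$; this makes the direction operator $q_\psi^{p,0}P_0\,\partial_\psi+(q_\alpha^{p,0}J_0+q_\theta^{p,0}P_0)\,\partial_\alpha$ equal to $P_0\,\partial_{y^1}+J_0\,\partial_{y^2}$, i.e.\ it matches the standard direction data of $r_a$. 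Because $\partial_\alpha y^2=1/q_\alpha^{p,0}$ is $\alpha$-independent, $y^2$ is affine in $\alpha$ ($\partial_\alpha^2 y^2=0$), and the nonvanishing Jacobian $\partial_\psi y^1\,\partial_\alpha y^2$ shows $\by$ is a diffeomorphism. One then checks $r^p(\psi,\alpha)=r_a(y^1,y^2)$: the second identity gives $\partial_\alpha m^p=1/\bigl(4 q_\alpha^{p,0}\cos^2(y^1/2)\bigr)=\partial_\alpha\bigl(y^2/(4\cos^2(y^1/2))\bigr)$, while $\Omega=0$ fixes the remaining $\alpha$-independent part of $m^p$; the homogeneous solution of $\Omega=0$ is precisely $y^1\mapsto\text{const}/(4\cos^2(y^1/2))$, so it is absorbed by the one free additive constant in $y^2$, yielding $m^p=y^2/(4\cos^2(y^1/2))$.

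The key step, and the main obstacle, is the existence of $y^2$: its defining system is overdetermined, and solvability requires the integrability condition $\partial_\psi(1/q_\alpha^{p,0})=\partial_\alpha\bigl(-q_\theta^{p,0}/(q_\psi^{p,0}q_\alpha^{p,0})\bigr)$, which simplifies to $q_\psi^{p,0}\partial_\psi q_\alpha^{p,0}=q_\alpha^{p,0}\partial_\alpha q_\theta^{p,0}$. This is exactly the supplementary relation guaranteed by Lemma~\ref{lem:standard-form}. Thus the conditions \eqref{eq:q-relations} are precisely what is needed to construct the reparametrisation of the second variable, while the CDYBE is what forces the $\tan$/$\tanh$ and $\cos^{-2}$/$\cosh^{-2}$ shape of the coefficients. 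Assembling $p$ and $\by$ gives the claimed standard form in case a, and case b follows verbatim with the substitutions noted above.
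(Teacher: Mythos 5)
Your proposal is correct and follows essentially the same route as the paper's proof: both invoke Lemma~\ref{lem:standard-form} to align $\bq_\psi,\bq_\alpha,\bq_\theta,\bw,\bm$ with $\be_0$ (or $\be_1$), reduce the CDYBE to the scalar equations \eqref{eq:dcybe-with-simple-UV} plus the $\Omega$-equation, integrate the Riccati-type equation for the $P\wedge J$ coefficient to get the $\tan$/$\tanh$ profile, and absorb the remaining integration constants into the reparametrisation $(y^1,y^2)$. The only cosmetic difference is that you characterise $y^2$ by the PDE system $\partial_\alpha y^2=1/q_\alpha^{p,0}$, $\partial_\psi y^2=-q_\theta^{p,0}/(q_\psi^{p,0}q_\alpha^{p,0})$ and verify its integrability via the supplementary relation of Lemma~\ref{lem:standard-form}, whereas the paper writes $y^2=h(\psi)+\alpha g(\psi)$ explicitly and solves for $h$ and $g$ from the same equations — these are the same change of variables.
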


\begin{proof} $\quad$
\begin{enumerate}

\item By Lemma \ref{lem:standard-form} there exists a Poincar\'e-valued
  function $p:I\times \RR\to \pogr$ whose Lorentzian component does not depend
  on $\alpha$ and whose translation component depends on $\alpha$ at most
  linearly such that $r^p:I\times \RR\to\poal\oo\poal$ and
  $\bq_\psi^p,\bq_\alpha^p,\bq_\theta^p:I\times \RR\to\RR^3$ satisfy again the
  CDYBE as well as conditions \eqref{eq:q-relations} and are of the form given
  in Lemma~\ref{lem:standard-form}.  This implies that $r$ is of the form
  \eqref{eq:r-simple}:
  \begin{align*}
    r^p=\tfrac 1 2 (P_a\oo J^a+J^a\oo P_a)-\tfrac 1 2 \ee^{abc} w^p_c P_a\wedge J_b+\tfrac 1 2 \ee^{abc} m^p_c P_a\wedge P_b
  \end{align*}
  with $\bq_\psi^p,\bq_\alpha^p,\bq_\theta^p,\bw^p,\bm^p$ subject to conditions
  a) or b) in Lemma \ref{lem:standard-form}.  It follows that there are
  smooth functions $\beta,\gamma,\delta,\epsilon, \varphi_0,\varphi_1: I \to
  \RR$ with $\beta(\psi),\gamma(\psi)\neq 0$ for all $\psi\in I$ such that
  \begin{equation*}
    \bq^p_\psi=\beta \be_j,\quad
    \bq^p_\alpha=\gamma \be_j,\quad
    \bq^p_\theta=\left(\delta+{\alpha\, \gamma'\beta}/{\gamma}\right)\be_j,\quad
    \bw^p=\epsilon\,\be_j,\quad
    \bm^p=(\varphi_0+\alpha\varphi_1)\be_j,
  \end{equation*}
  where $j=0$ in case a) and $j=1$ in case b).  Inserting these expressions
  into \eqref{eq:dcybe-with-simple-UV}, one finds that $r^p$ satisfies the
  CDYBE with $x^1=\psi$, $x^2=\alpha$, $x_1=q_\psi^{p,a}P_a$ and
  $x_2=q_\alpha^{p,a}J_a+q_\theta^{p,a}P_a$ if and only if the coefficient
  functions $\beta,\gamma,\delta,\epsilon, \varphi_0,\varphi_1: I \to \RR$
  satisfy the following set of differential equations:
  \begin{align}
    &\gamma \varphi_1+\tfrac 1 2 \beta\epsilon'=0,\qquad
    1\pm\epsilon^2\pm 2\beta\epsilon'=0,\qquad
    \beta\varphi_0'+\delta\varphi_1+\epsilon\varphi_0=0,
  \end{align}
  where the sign $+$ in the second equation refers to case a), the sign $-$ to
  case b).  Set $g=1/\gamma$ and let $f:\RR\to\RR$ be a function with
  $f'=1/\beta$. Then the second equation can be integrated to $\epsilon(\psi)=-
  \tan(f(\psi)/2)$ in case a) and $\epsilon(\psi)=\tanh(f(\psi)/2)$ in case
  b). Inserting this result into the remaining two equations, we obtain
  \begin{equation*}
    \varphi_1(\psi)=-\tfrac 1 2 g(\psi)\epsilon'(\psi),\qquad
    \delta(\psi)=-\frac{h'(\psi)}{f'(\psi) g(\psi)},
  \end{equation*}
  with $h(\psi)=4\varphi_0(\psi)\cos^2(f(\psi)/2)$ in case a) and
  $h(\psi)=4\varphi_0(\psi)\cosh^2(f(\psi)/2)$ in case b).  It follows that
  $\bq_\psi,\bq_\alpha,\bq_\theta,\bw,\bm$ are given by:
  \begin{equation}\label{eq:transf-qs}
  \left.
  \begin{aligned}
    &\bq_\psi^p(\psi)=\frac {\be_0}{f'(\psi)},\quad
     \bq_\alpha^p(\psi)=\frac {\be_0}{g(\psi)},\quad
     \bq_\theta^p(\psi,\alpha)=-\frac{h'(\psi)+\alpha g'(\psi)}{f'(\psi) g(\psi)}\,\be_0,\\
    &\bw^p(\psi)=-\tan(f(\psi)/2)\,\be_0,\quad\;\bm^p(\psi,\alpha)=\frac{h(\psi)+\alpha g(\psi)}{4\cos^2(f(\psi)/2)}\,\be_0\quad\;\;\text{in case a)},\\
    &\bw^p(\psi)=\tanh(f(\psi)/2)\,\be_0,\quad\bm^p(\psi,\alpha)=-\frac{h(\psi)+\alpha g(\psi)}{4\cosh^2(f(\psi)/2)}\,\be_0\quad\text{in case b)}.
  \end{aligned}
  \quad\right\}
  \end{equation}

\item With the definitions $y^1(\psi)=f(\psi)$,
  $y^2(\psi,\alpha)=h(\psi)+\alpha g(\psi)$ this yields expressions a) and b)
  for $r$.  For any $y_1,y_2\in \poal$ the right-hand side of the CDYBE then
  takes the form
  \begin{align*}
    &y_1^{(1)}\partial_{y^1}\,r_{23}- y_1^{(2)}\partial_{y^1}\, r_{13}+y_1^{(3)}\partial_{y^1}\, r_{12}+y_2^{(1)}\partial_{y^2}\,r_{23}- y_2^{(2)}\partial_{y^2}\, r_{13}+y_2^{(3)}\partial_{y^2}\, r_{12}\\
    &=x_1^{(1)}\partial_{\psi}\,r_{23}- x_1^{(2)}\partial_{\psi}\, r_{13}+x_1^{(3)}\partial_{\psi}\, r_{12}+x_2^{(1)}\partial_{\alpha}\,r_{23}- x_2^{(2)}\partial_{\alpha}\, r_{13}+x_2^{(3)}\partial_{\alpha}\, r_{12},
  \end{align*}
  where
  \begin{equation*}
    x_1(\psi)=\frac{y_1(\psi)}{f'(\psi)},\qquad
    x_2(\psi, \alpha)=\frac{y_2(\psi, \alpha)}{g(\psi)}-\frac{(h'(\psi)+\alpha g'(\psi))\,y_1(\psi)}{f'(\psi)g(\psi)}.
  \end{equation*}
  Setting $y_1=P_0$, $y_2=J_0$ in case a) and $y_1=P_1$, $y_2=J_1$ in case b),
  we obtain $x_1=q_\psi^{p,a}P_a$, $x_2=q_\alpha^{p,a}J_a+q_\theta^{p,a}P_a$
  with $\bq_\psi^p,\bq_\alpha^p,\bq_\theta^p$ given by
  \eqref{eq:transf-qs}. This proves the claim.
\end{enumerate}

\end{proof}

Theorem \ref{thm:dcybe-standard-transformation} amounts to a classification of
all possible solutions of the CDYBE \eqref{eq:dcybe} of the form in Definition
\ref{def:extended-dirac-bracket} that satisfy the additional conditions
\eqref{eq:q-relations} and hence of all Poisson structures of the type in
Definition \ref{def:extended-dirac-bracket}. It states that locally all such
Poisson structures are obtained from one of the classical dynamical
$r$-matrices in Lemma \ref{lem:dcybe-solutions} by applying a suitable
Poincar\'e transformation together with a rescaling of the variables $\psi$,
$\alpha$.  Note, however, that this classification is only local in the
following sense: for any given value $\psi_0$ of the variable $\psi$ for which
$\bq_\psi^2,\bq_\alpha^2\neq 0$, there is an open interval $I\subset\RR$,
$\psi_0\in I$, such that $r(\psi,\alpha)$ can be transformed into one of the
classical dynamical $r$-matrices in Theorem
\ref{thm:dcybe-standard-transformation} for all $(\psi,\alpha)\in I\times\RR$.

In particular, this locally determines all possible Dirac brackets that arise
from gauge fixing procedures that satisfy the well-motivated conditions in
Section \ref{subsec:constraints-and-gauge-fixing}. The Dirac bracket of such a
gauge fixing procedure is always determined by a solution of the CDYBE that
satisfies the additional conditions \eqref{eq:q-relations} and
$\bq_\psi\wedge\bq_\alpha=0$. For those values of the variable $\psi$ for which
$\bq_\psi^2(\psi),\bq_\alpha^2(\psi)\neq 0$ , the resulting Dirac bracket can
be transformed into the bracket defined by one of the two classical dynamical
$r$-matrices in Lemma \ref{lem:dcybe-solutions}.

However, the Dirac bracket resulting from a generic gauge fixing condition is
associated with maps $\bq_\psi,\bq_\alpha:\RR\to\RR^3$ for which the signature
of $\bq_\psi^2,\bq_\alpha^2$ changes as a function of the variable $\psi$.  In
particular, this is the case for the specific gauge fixing conditions
investigated in \cite{Meusburger:2011aa}. It is shown there that the map
$\bq_\psi:\RR^2\to\RR^3$ is timelike, lightlike or spacelike for those values
of $\psi$ for which, respectively, the product $u_{M_2}\cdot u_{M_1}$ of the
Lorentzian components of the gauge-fixed holonomies is elliptic, parabolic or
hyperbolic.  It is well-known that the product of two elliptic elements of
$\logr$ can be either elliptic, parabolic or hyperbolic. The first requirement
on the gauge fixing conditions in Section \ref{subsec:dirac-constraints}
implies that all of these cases must arise in the Dirac bracket. This suggests
that such transitions between timelike, spacelike and lightlike solutions are a
generic outcome of the gauge fixing procedure when the two holonomies $M_1,M_2$
are elliptic.

It is therefore not possible to reduce the investigation of gauge fixing
procedures and the resulting Poisson structures to classical dynamical
$r$-matrices in the sense of Definition \ref{def:dyn-r-matrix}.  Instead, one
needs to admit more general solutions of the CDYBE and to allow the associated
Lie subalgebras $\ah(\psi,\alpha)\subset \poal$ to vary non-trivially with the
variables $\psi$ and $\alpha$. Such solutions are no longer invariant under the
action of the subalgebra $\ah(\psi,\alpha)$. Instead, they satisfy the
generalised consistency conditions \eqref{eq:q-relations} which together with
the CDYBE ensure the Jacobi identity for the associated Poisson bracket.

\section{Outlook and conclusions}
\label{sec:outlook}

In this paper we applied the Dirac gauge fixing procedure to the description of
the moduli space of flat $\pogr$-connections in terms of an ambient space with
an auxiliary Poisson structure \cite{Fock:1998aa}.  We investigated a large
class of gauge fixing conditions subject to two well-motivated structural
requirements, namely that the gauge fixing is based on the choice of two
punctures on the underlying Riemann surface and that it preserves the natural
$\NN$-grading of the auxiliary Poisson structure.

We showed that the Poisson algebras obtained from gauge fixing are in
one-to-one correspondence with solutions of the classical dynamical Yang-Baxter
equation (CDYBE) and of a set of additional equations. The latter can be viewed
as the counterpart of the usual requirement of invariance of the classical
dynamical $r$-matrices under the action of a fixed Cartan subalgebra.  These
solutions of the CDYBE are a generalisation of classical dynamical $r$-matrices
in which the associated two-dimensional Lie subalgebras of $\poal$ vary with
the dynamical variables.

We also demonstrated how a change of gauge fixing conditions affects the
associated solutions of the CDYBE and showed that this change corresponds to
the action of dynamical Poincar\'e transformations.  These dynamical Poincar\'e
transformations generalise the gauge transformations of classical dynamical
$r$-matrices in \cite{Etingof:1998aa}.  We found that every solution obtained
via gauge fixing can be transformed into one of two classical dynamical
$r$-matrices via these transformations and a rescaling for almost all values of
the dynamical variables.  This gives rise to a complete (local) classification
of all possible outcomes of gauge fixing in the context of the moduli space of
flat $\pogr$-connections.

However, for generic solutions there are also certain values of the dynamical
variables for which the solutions cannot be transformed into standard classical
dynamical $r$-matrices.  These singular points appear at the transition between
classical dynamical $r$-matrices for two non-conjugate Cartan subalgebras of
$\poal$ and are associated with two-dimensional Lie subalgebras of $\poal$
which contain parabolic elements of $\loal$. This occurs for instance in the
solutions in Lemma \ref{lem:dcybe-for-special-r} and cannot be excluded by
suitable gauge fixing conditions.

To our knowledge this phenomenon does not appear in earlier references on the
topic. We expect that this is due to the fact that most of these references
investigate classical dynamical $r$-matrices for complex (semi)simple Lie
groups, for which all Cartan subalgebras are conjugate.  Based on our results,
we would predict that such transitions between classical dynamical $r$-matrices
for non-conjugate Cartan subalgebras arise for real Lie groups for which the
underlying symmetric, non-degenerate $\Ad$-invariant symmetric form
$\langle\,,\,\rangle$ is indefinite.

Another aspect which merits further investigation is the physical
interpretation of these classical dynamical $r$-matrices in the context of
(2+1)-dimensional (quantum) gravity.  It was shown in \cite{Meusburger:2011aa}
that gauge fixing procedures of the type investigated in this paper can be
viewed as the specification of an observer in a (2+1)-dimensional
spacetime. Moreover, the results in \cite{Meusburger:2011aa} suggest a direct
geometrical interpretation for the two dynamical variables in the classical
dynamical $r$-matrices: they correspond to the total mass and angular momentum
of the spacetime as measured by this observer. In this interpretation, the two
standard classical dynamical $r$-matrices would be associated with the
centre-of-mass frame of the universe and the transition points between
different Cartan subalgebras would correspond to the formation of Gott pairs
\cite{Gott:1991aa}.

We expect that our results could be generalised to the other moduli spaces of
flat connections that arise in the description of (2+1)-gravity for different
signatures and different values of the cosmological constant.  For Lorentzian
signature, the relevant Lie groups are $\pogr$, $\PSL(2,\RR)\times \PSL(2,\RR)$
and $\PSL(2,\CC)$, respectively, for vanishing, negative and positive
cosmological constant.  For Euclidean signature, the corresponding Lie groups
are $\ISO(3)$, $\SO(3)\times\SO(3)$ and $\PSL(2,\CC)$. It seems plausible that
analogous gauge fixing procedures applied to these groups would yield similar
outcomes, with the exception of the transition between different Cartan
subalgebras, which should not occur for Euclidean signature.

In this context, it would also be desirable to understand in more detail how
our results are related to the classical dynamical $r$-matrix symmetries
obtained by Buffenoir, Noui and Roche via a regularisation procedure for point
particles coupled to Chern-Simons theory with gauge group $\SL(2,\CC)$
\cite{Buffenoir:2003aa, Buffenoir:2005aa, Noui:2005aa}.  Although the approach
and setting in this work are very different, there should be an underlying
reason that forces the appearance of classical dynamical $r$-matrices in both
cases.

It would also be interesting to investigate in more detail the relation between
gauge-fixed Poisson structures on the moduli space of flat $\pogr$-connections
and mathematical structures associated with classical dynamical $r$-matrices
such as Poisson-Lie groupoids.  As the auxiliary Poisson algebra which is
gauge-fixed is closely related to certain structures from the theory of
Poisson-Lie groups, namely the dual Poisson-Lie structure and the Heisenberg
double, it could be expected that gauge fixing should be related to the
construction of dynamical versions of these structures.

Finally, we expect our results to have useful application in the
quantisation of the moduli space of flat $\pogr$-connections. This is due to
the fact that the resulting Poisson structure is very closely related to Fock
and Rosly's Poisson structure on the ambient space. The only difference is that
the classical $r$-matrix is replaced by a classical dynamical $r$-matrix
associated with two-dimensional Lie subalgebras of $\poal$. As Fock and Rosly's
Poisson structure serves as the starting point for the combinatorial
quantisation formalism, this suggests that this formalism could be extended
straightforwardly to include the gauge-fixed Poisson structure. This would
reduce the task of quantising the theory to the construction of the associated
dynamical quantum group.

\section*{Acknowledgements}

This work was supported by the Emmy Noether research grant ME 3425/1-1 of the
German Research Foundation (DFG).  We thank  Karim Noui for
useful remarks and discussions and Stefan Waldmann and  Winston Fairbairn for comments on a draft of
this paper.  The \textsc{xAct} tensor calculus package
\cite{Martin-Garcia:2008aa} proved useful in some of the computations.

\bibliography{dcybe}{}

\begin{thebibliography}{10}
\newcommand{\enquote}[1]{``#1''}
\providecommand{\doi}[1]{\href{http://dx.doi.org/#1}{%
  \urlstyle{same}\nolinkurl{DOI:#1}}}
\providecommand{\eprint}[2][]{\href{http://arxiv.org/abs/#2}{%
  \urlstyle{same}\nolinkurl{#1:#2}}}

\bibitem{Achucarro:1986aa}
A.~Ach\'{u}carro and P.~Townsend.
\newblock \enquote{{A Chern-Simons action for three-dimensional anti-de Sitter
  supergravity theories}.}
\newblock \emph{Physics Letters B}, \textbf{180} (1986) 89--92.
\newblock \doi{10.1016/0370-2693(86)90140-1}.

\bibitem{Alekseev:1998aa}
A.~Alekseev, A.~Malkin, and E.~Meinrenken.
\newblock \enquote{{Lie group valued moment maps}.}
\newblock \emph{J. Differential Geom}, \textbf{48} (1998) 445--495.
\newblock \eprint[arXiv]{dg-ga/9707021}.

\bibitem{Alekseev:1995aa}
A.~Y. Alekseev, H.~Grosse, and V.~Schomerus.
\newblock \enquote{Combinatorial quantization of the Hamiltonian {Chern-Simons}
  theory I.}
\newblock \emph{Communications in Mathematical Physics}, \textbf{172} (1995)
  317--358.
\newblock \doi{10.1007/BF02099431}.
\newblock \eprint[arXiv]{hep-th/9403066}.

\bibitem{Alekseev:1996aa}
A.~Y. Alekseev, H.~Grosse, and V.~Schomerus.
\newblock \enquote{Combinatorial quantization of the Hamiltonian {Chern-Simons}
  theory {II}.}
\newblock \emph{Communications in Mathematical Physics}, \textbf{174} (1996)
  561--604.
\newblock \doi{10.1007/BF02101528}.
\newblock \eprint[arXiv]{hep-th/9408097}.

\bibitem{Alekseev:1995ab}
A.~Y. Alekseev and A.~Z. Malkin.
\newblock \enquote{{Symplectic structure of the moduli space of flat connection
  on a Riemann surface}.}
\newblock \emph{Communications in Mathematical Physics}, \textbf{169} (1995)
  99--119.
\newblock \doi{10.1007/BF02101598}.
\newblock \eprint[arXiv]{hep-th/9312004v1}.

\bibitem{Alekseev:1996ab}
A.~Y. Alekseev and V.~Schomerus.
\newblock \enquote{{Representation theory of Chern-Simons observables}.}
\newblock \emph{Duke Mathematical Journal}, \textbf{85} (1996) 447--510.
\newblock \doi{10.1215/S0012-7094-96-08519-1}.
\newblock \eprint[arXiv]{q-alg/9503016}.

\bibitem{Atiyah:1983aa}
M.~Atiyah and R.~Bott.
\newblock \enquote{{The Yang-Mills equations over Riemann surfaces}.}
\newblock \emph{Philosophical Transactions of the Royal Society of London.
  Series A, Mathematical and Physical Sciences}, \textbf{54} (1983) 523--615.
\newblock \doi{10.1098/rsta.1983.0017}.

\bibitem{Bar-Natan:1991aa}
D.~Bar-Natan and E.~Witten.
\newblock \enquote{{Perturbative expansion of Chern-Simons theory with
  non-compact gauge group}.}
\newblock \emph{Communications in Mathematical Physics}, \textbf{141} (1991)
  423--440.
\newblock \doi{10.1007/BF02101513}.

\bibitem{Birman:1974aa}
J.~S. Birman.
\newblock \emph{{Braids, links, and mapping class groups}}, volume~82 of
  \emph{Annals of mathematics studies}.
\newblock Princeton University Press, 1974.
\newblock ISBN 9780691081496.

\bibitem{Buffenoir:2003aa}
E.~Buffenoir and K.~Noui.
\newblock \enquote{{Unfashionable observations about three dimensional
  gravity}.}, 2003.
\newblock Preprint, \eprint[arXiv]{gr-qc/0305079}.

\bibitem{Buffenoir:2002aa}
E.~Buffenoir, K.~Noui, and P.~Roche.
\newblock \enquote{{Hamiltonian quantization of Chern Simons theory with SL(2,
  C) group}.}
\newblock \emph{Classical and Quantum Gravity}, \textbf{19} (2002) 4953--5015.
\newblock \doi{10.1088/0264-9381/19/19/313}.
\newblock \eprint[arXiv]{hep-th/0202121}.

\bibitem{Buffenoir:1995aa}
E.~Buffenoir and P.~Roche.
\newblock \enquote{{Two dimensional lattice gauge theory based on a quantum
  group}.}
\newblock \emph{Communications in Mathematical Physics}, \textbf{170} (1995)
  669--698.
\newblock \doi{10.1007/BF02099153}.
\newblock \eprint[arXiv]{hep-th/9405126}.

\bibitem{Buffenoir:2005aa}
E.~Buffenoir and P.~Roche.
\newblock \enquote{{{Chern-Simons} Theory with Sources and Dynamical Quantum
  Groups {I}: Canonical Analysis and Algebraic Structures}.}, 2005.
\newblock Preprint, \eprint[arXiv]{hep-th/0505239}.

\bibitem{deSousaGerbert:1990aa}
P.~{de Sousa Gerbert}.
\newblock \enquote{{On spin and (quantum) gravity in 2 + 1 dimensions}.}
\newblock \emph{Nuclear Physics B}, \textbf{346} (1990) 440--472.
\newblock \doi{10.1016/0550-3213(90)90288-O}.

\bibitem{Dirac:1949aa}
P.~A.~M. Dirac.
\newblock \enquote{{Forms of Relativistic Dynamics}.}
\newblock \emph{Reviews of Modern Physics}, \textbf{21} (1949) 392.
\newblock \doi{10.1103/RevModPhys.21.392}.

\bibitem{Dirac:1950aa}
P.~A.~M. Dirac.
\newblock \enquote{{Generalized Hamiltonian dynamics}.}
\newblock \emph{Canadian Journal of Mathematics}, \textbf{2} (1950) 129--148.
\newblock \doi{10.4153/CJM-1950-012-1}.

\bibitem{Etingof:1999aa}
P.~Etingof and O.~Schiffmann.
\newblock \enquote{{Lectures on the dynamical {Yang-Baxter} equations}.},
  August 1999.
\newblock Preprint, \eprint[arXiv]{math/9908064}.

\bibitem{Etingof:2000aa}
P.~Etingof and O.~Schiffmann.
\newblock \enquote{{On the moduli space of classical dynamical r-matrices}.}
\newblock \emph{Mathematical Research Letters}, \textbf{8} (2001) 157--170.
\newblock \eprint[arXiv]{math/0005282}.

\bibitem{Etingof:1998aa}
P.~Etingof and A.~Varchenko.
\newblock \enquote{Geometry and Classification of Solutions of the Classical
  Dynamical {Yang-Baxter} Equation.}
\newblock \emph{Communications in Mathematical Physics}, \textbf{192} (1998)
  77--120.
\newblock \doi{10.1007/s002200050292}.
\newblock \eprint[arXiv]{q-alg/9703040}.

\bibitem{Feher:2004aa}
L.~Feh\'{e}r.
\newblock \enquote{{{Poisson-Lie} Dynamical r-matrices from Dirac Reduction}.}
\newblock \emph{Czechoslovak Journal of Physics}, \textbf{54} (2004)
  1265--1273.
\newblock \doi{10.1007/s10582-004-9788-9}.
\newblock \eprint[arXiv]{math/0406274}.

\bibitem{Feher:2001aa}
L.~Feh\'{e}r, A.~G\'{a}bor, and B.~G. Pusztai.
\newblock \enquote{{On dynamical r-matrices obtained from Dirac reduction and
  their generalizations to affine Lie algebras}.}
\newblock \emph{Journal of Physics A: Mathematical and General}, \textbf{34}
  (2001) 7235.
\newblock \doi{10.1088/0305-4470/34/36/313}.
\newblock \eprint[arXiv]{math-ph/0105047}.

\bibitem{Fock:1998aa}
V.~V. Fock and A.~A. Rosly.
\newblock \enquote{{Poisson structure on moduli of flat connections on Riemann
  surfaces and $r$-matrix}.}
\newblock \emph{Am. Math. Soc. Transl.}, \textbf{191} (1999) 67--86.
\newblock \eprint[arXiv]{math/9802054}.

\bibitem{Gott:1991aa}
J.~Gott.
\newblock \enquote{{Closed timelike curves produced by pairs of moving cosmic
  strings: Exact solutions}.}
\newblock \emph{Physical Review Letters}, \textbf{66} (1991) 1126--1129.
\newblock \doi{10.1103/PhysRevLett.66.1126}.

\bibitem{Henneaux:1994aa}
M.~Henneaux and C.~Teitelboim.
\newblock \emph{{Quantization of Gauge Systems}}.
\newblock Princeton University Press, 1994.
\newblock ISBN 0691037698.

\bibitem{Martin-Garcia:2008aa}
J.~M. Mart\'{\i}n-Garc\'{\i}a.
\newblock \enquote{{{xPerm:} fast index canonicalization for tensor computer
  algebra}.}
\newblock \emph{Computer Physics Communications}, \textbf{179} (2008) 597--603.
\newblock \doi{10.1016/j.cpc.2008.05.009}.
\newblock \eprint[arXiv]{0803.0862}.

\bibitem{Meusburger:2010bb}
C.~Meusburger and K.~Noui.
\newblock \enquote{{Combinatorial quantisation of the Euclidean torus
  universe}.}
\newblock \emph{Nuclear Physics B}, \textbf{841} (2010) 463--505.
\newblock \doi{10.1016/j.nuclphysb.2010.08.014}.
\newblock \eprint[arXiv]{1007.4615}.

\bibitem{Meusburger:2010aa}
C.~Meusburger and K.~Noui.
\newblock \enquote{{The Hilbert space of 3d gravity: quantum group symmetries
  and observables}.}
\newblock \emph{Adv. Theor. Math. Phys.}, \textbf{14} (2010) 1651--1716.
\newblock \eprint[arXiv]{0809.2875}.

\bibitem{Meusburger:2011aa}
C.~Meusburger and T.~Sch\"{o}nfeld.
\newblock \enquote{{Gauge fixing in (2+1)-gravity: Dirac bracket and spacetime
  geometry}.}
\newblock \emph{Classical and Quantum Gravity}, \textbf{28} (2011) 125008.
\newblock \doi{10.1088/0264-9381/28/12/125008}.
\newblock \eprint[arXiv]{1012.1835}.

\bibitem{Meusburger:2003aa}
C.~Meusburger and B.~J. Schroers.
\newblock \enquote{Poisson structure and symmetry in the {Chern--Simons}
  formulation of (2 + 1)-dimensional gravity.}
\newblock \emph{Classical and Quantum Gravity}, \textbf{20} (2003) 2193--2233.
\newblock \doi{10.1088/0264-9381/20/11/318}.
\newblock \eprint[arXiv]{gr-qc/0301108}.

\bibitem{Meusburger:2004aa}
C.~Meusburger and B.~J. Schroers.
\newblock \enquote{The quantisation of Poisson structures arising in
  {Chern-Simons} theory with gauge group $G\ltimes\mathfrak{g}^*$.}
\newblock \emph{Adv. Theor. Math. Phys.}, \textbf{7} (2004) 1003--1042.
\newblock \eprint[arXiv]{hep-th/0310218}.

\bibitem{Meusburger:2005aa}
C.~Meusburger and B.~J. Schroers.
\newblock \enquote{Mapping class group actions in {Chern-Simons} theory with
  gauge group $G\ltimes\mathfrak{g}^*$.}
\newblock \emph{Nuclear Physics B}, \textbf{706} (2005) 569--597.
\newblock \doi{10.1016/j.nuclphysb.2004.10.057}.
\newblock \eprint[arXiv]{hep-th/0312049}.

\bibitem{Noui:2005aa}
K.~Noui and A.~Perez.
\newblock \enquote{{Three-dimensional loop quantum gravity: coupling to point
  particles}.}
\newblock \emph{Classical and Quantum Gravity}, \textbf{22} (2005) 4489--4513.
\newblock \doi{10.1088/0264-9381/22/21/005}.
\newblock \eprint[arXiv]{gr-qc/0402111}.

\bibitem{Reshetikhin:1991aa}
N.~Y. Reshetikhin and V.~G. Turaev.
\newblock \enquote{{Invariants of 3-manifolds via link polynomials and quantum
  groups}.}
\newblock \emph{Inventiones Mathematicae}, \textbf{103} (1991) 547--597.
\newblock \doi{10.1007/BF01239527}.

\bibitem{Schiffmann:1998aa}
O.~Schiffmann.
\newblock \enquote{{On classification of dynamical r-matrices}.}
\newblock \emph{Mathematical Research Letters}, \textbf{5} (1998) 13--30.
\newblock \eprint[arXiv]{q-alg/9706017}.

\bibitem{Stachura:1998aa}
P.~Stachura.
\newblock \enquote{{Poisson-Lie structures on Poincar\'{e} and Euclidean groups
  in three dimensions}.}
\newblock \emph{Journal of Physics A: Mathematical and General}, \textbf{31}
  (1998) 4555--4564.
\newblock \doi{10.1088/0305-4470/31/19/018}.

\bibitem{Witten:1988aa}
E.~Witten.
\newblock \enquote{2 + 1 dimensional gravity as an exactly soluble system.}
\newblock \emph{Nuclear Physics B}, \textbf{311} (1988) 46--78.
\newblock \doi{10.1016/0550-3213(88)90143-5}.

\bibitem{Witten:1989aa}
E.~Witten.
\newblock \enquote{{Quantum field theory and the Jones polynomial}.}
\newblock \emph{Communications in Mathematical Physics}, \textbf{121} (1989)
  351--399.
\newblock \doi{10.1007/BF01217730}.

\bibitem{Witten:1989sx}
E.~Witten.
\newblock \enquote{{Topology-Changing Amplitudes in (2+1)-Dimensional
  Gravity}.}
\newblock \emph{Nucl. Phys.}, \textbf{B323} (1989) 113.
\newblock \doi{10.1016/0550-3213(89)90591-9}.

\bibitem{Witten:1989ip}
E.~Witten.
\newblock \enquote{{Quantization of Chern-Simons gauge theory with complex
  gauge group}.}
\newblock \emph{Commun. Math. Phys.}, \textbf{137} (1991) 29--66.
\newblock \doi{10.1007/BF02099116}.

\bibitem{Witten:2011aa}
E.~Witten.
\newblock \enquote{{Analytic continuation of Chern-Simons theory}.}
\newblock In J.~E. Andersen, H.~Boden, A.~Hahn, and H.~B., editors,
  \enquote{Chern-Simons Gauge Theory: 20 Years After,} page 347. American
  Mathematical Society, 2011.
\newblock \eprint[arXiv]{1001.2933}.

\bibitem{Xu:2002aa}
P.~Xu.
\newblock \enquote{Triangular Dynamical r-Matrices and Quantization.}
\newblock \emph{Advances in Mathematics}, \textbf{166} (2002) 1--49.
\newblock \doi{10.1006/aima.2001.2000}.
\newblock \eprint[arXiv]{math/0005006}.

\end{thebibliography}
\bibliographystyle{custom}

\end{document}